\newtheorem{thm}{Theorem}
\newtheorem{lem}{Lemma}
\newtheorem{Open problem}{Open Problem}
\newtheorem{conjecture}{Conjecture}
\newtheorem{prop}{Proposition}
\newtheorem{defn}{Definition}
\newtheorem{example}{Example}
\newtheorem{remark}{Remark}
\newcommand{\nn}{\color{black}}
\newcommand{\y}{\textbf{y}}
\newcommand{\s}{\textbf{s}}
\begin{document}
	\title{
On the Construction and Correlation Properties of Permutation-Interleaved Zadoff-Chu Sequences
	}
	\author{
Qin Yuan
	\thanks{Q. Yuan
is with Faculty of
		Mathematics and Statistics, Hubei University, Wuhan, 430062, China, and
is with the School of Mathematics, Southwest Jiaotong University, Chengdu, 611756, China. Email:
	\href{mailto:yuanqin2020@aliyun.com}{yuanqin2020@aliyun.com}},		
Chunlei Li
		\thanks{C. Li is with the Department of Informatics, University of Bergen, Bergen, N-5020, Norway.
			Email: \href{mailto:chunlei.li@uib.no}{chunlei.li@uib.no}},
Xiangyong Zeng$^{\,*}$
\thanks{ X. Zeng is with
		Key Laboratory of Intelligent Sensing System and Security (Hubei University),
		Ministry of Education, Hubei Key Laboratory of Applied Mathematics, Faculty of
		Mathematics and Statistics, Hubei University, Wuhan, 430062, China. Email:
		\href{mailto:xiangyongzeng@aliyun.com}{xiangyongzeng@aliyun.com} },
	}
	
	\date{}
	\maketitle

\begin{abstract}
	Constant amplitude zero auto-correlation (CAZAC) sequences are
widely applied in waveforms for radar and communication systems.
Motivated by a recent work [Berggren and Popovi\'{c},
IEEE Trans. Inf. Theory 70(8), 6068-6075 (2024)], this paper
further investigates the approach to generating CAZAC sequences by
interleaving Zadoff-Chu (ZC) sequences with permutation polynomials (PPs).
We propose one class of high-degree PPs over the integer ring $\mathbb{Z}_N$, and utilize them and their inverses to interleave ZC sequences for
  constructing
 CAZAC sequences.
It is known that a CAZAC sequence can be extended to an equivalence class by five basic opertations.
We further show that the obtained CAZAC sequences are not covered by the equivalence classes of ZC sequences and interleaved ZC sequences  by quadratic PPs and their inverses,
and prove the sufficiency of the conjecture by Berggren and Popovi\'{c} in the aforementioned work.
In addition, we also evaluate the aperiodic auto-correlation of certain ZC sequences from quadratic PPs.

\noindent{\small {\bf Keywords:}} Constant amplitude zero auto-correlation (CAZAC) sequence, Zadoff-Chu sequence, permutation polynomial, interleaver, periodic correlation, aperiodic correlation
\end{abstract}


\section{Introduction}
Constant amplitude zero auto-correlation (CAZAC) sequences, which are unimodular and have zero out-of-phase periodic auto-correlation,
possess attractive properties and have been extensively studied \cite{Trends}.
A complex-valued sequence $\s=(s(0), s(1), \dots, s(N-1))$ is a CAZAC sequence if and only if
one of the following equivalent  statements holds:
(i) its normalized  discrete Fourier transform (DFT) sequence $\widehat{\s}$ is a CAZAC sequence; or
(ii)  $\s$ is a bi-unimodular sequence, i.e., entries in both $\s$ and $\widehat{\s}$ have modulus one; or
(iii) the circulant matrix $H$ generated from $\s$ is a complex Hadamard matrix, i.e., $HH^{\dagger} = NI_N$, where $H^{\dagger}$ is the conjugate transpose of $H$ and $I_N$ is the identity matrix of order $N$.
Moreover, there exists a one-to-one correspondence between unimodular cyclic $N$-roots $(z_0,z_1,\dots,z_{N-1})$ (which satisfies $\prod_{i=0}^{N-1}(x+z_i) = x^N+1$) and CAZAC sequences $\s$ of length $N$, which is given by
$(z_0,z_1,\dots,z_{N-1})=\bigg(\frac{s(1)}{s(0)},\frac{s(2)}{s(1)},\dots,
\frac{s(N-1)}{s(N-2)},\frac{s(0)}{s(N-1)} \bigg)
$ with the first term $s(0)=1$.
In engineering applications, the constant amplitude (CA) property facilitates transmitter peak power operation and provides theoretical immunity to amplitude distortions induced by additive noise; the zero auto-correlation (ZAC) property ensures minimal inter-signal interference in shared channels. These features make CAZAC sequences particularly effective in applications such as target recognition in radar and address synchronization in cellular access technologies~\cite{Trends, Ye_2022_, aperiodic_LM}.

Instances of CAZAC sequences \cite{Mow} include Zadoff-Chu (ZC) sequences \cite{Chu_1972_}, generalized Frank sequences \cite{Frank, Kumar_G_Frank}, generalized chirp-like (GCL) sequences \cite{Popovic_1992_}, Milewski sequences \cite{Milewski},
polyphase sequences constructed from generalized bent functions by Chung and Kumar \cite{Kumar}, modulating perfect sequences \cite{Popovic_2010_},
Björck sequences  \cite{Bjorck, Fourier_Bjorck},
biphase CAZAC sequences  \cite{Biphase_1990, Biphase_Golomb},
cubic polynomial phase sequences \cite{cubic_perfect} and    perfect Gaussian integer sequences based
 on cyclic difference sets \cite{Chen_Li}.
Mow proposed a unified construction of CAZAC sequences and conjectured that it covers all CAZAC sequences \cite{Mow_unif}.
As a class of commonly used CAZAC sequences,
ZC sequences are the primary manifestation of spread spectrum in modern cellular systems, including Long Term Evolution (LTE) and Fifth Generation New Radio (5G NR) \cite{5G_NR, LTE}.

Interleaving is a powerful technique in sequence design~\cite{Gong,Zhou_inter, Tang_ZCZ_interleave} and coding theory~\cite{Takeshita,Sun_2005_}.
The interleaving technique introduced by Gong \cite{Gong} has been widely adopted to construct low-correlation sequences from shorter constituent sequences, as demonstrated in works such as the works in \cite{Gong, Zhou_inter, Tang_ZCZ_interleave}.
In contrast, within the context of coding theory, permutation-based interleavers~\cite{Sun_2005_} were employed in turbo codes to achieve their remarkable error-correction performance~\cite{Takeshita}. Subsequently, several studies~\cite{Takeshita_2007_,Jonghoon2006,Rosnes_2012_,Berggren_2023_,Trifina_2016_} investigated turbo codes interleaved by quadratic permutation polynomials (QPPs), their inverses, and higher-degree permutation polynomials, as well as DFT-spread orthogonal frequency-division multiplexing (DFT-S-OFDM) interleaved by QPPs.
Fundamentally, both interleaving approaches rearrange an input sequence according to a prescribed permutation to produce an output sequence with desired properties. However, despite their success in turbo codes, permutation-based interleavers were not adopted in sequence design until the recent work of
 Berggren and Popovi\'{c} \cite{Berggren_2024_}.
 They showed that ZC sequences interleaved by QPPs and their inverses are also CAZAC sequences, and reported that the CAZAC property is not maintained in general when using cubic PPs.
 Based on numerical results, they conjectured that
 QPP-interleaved sequences are inequivalent to ZC sequences if and only if the sequence length
 $N$ is a prime power $p^n$ with prime $p\geq 5$ and $n \geq 2$.





 Motivated by the work of Berggren and Popovi\'{c} \cite{Berggren_2024_},  this paper will
 further investigate the approach to generating CAZAC sequences by interleaving ZC sequences
from PPs over integer rings $\mathbb{Z}_N$,
and also evaluate the aperiodic auto-correlation property of some of the obtained CAZAC sequences, a critical performance metric in practical applications \cite{Mow_ap, aperiodic_Schdmit, aperiodic_HV, aperiodic_Katz, Barker, Turyn, Fan_Frank,  Zhang_Golomb, Fan_Darnell, Gabidulin}.
Write the polyphase ZC sequence of length $N$ as $\s = (\xi_N^{f(k)})_{0\leq k<N}$ as given in~\eqref{ZC_seq} and the interleaved ZC sequence as $\s\circ\pi=(\xi_N^{f(\pi(k))})_{0\leq k<N}$ for a  permutation $\pi$ of $\mathbb{Z}_N$,
where $N$ is a positive integer.
In this paper, we first discuss the bijective property of a class of high-degree polynomials of the form  $\pi(x)=x^{p}+a x+b$, where $p$ is a prime divisor of $N$,
and then consider the periodic auto-correlation of $\s\circ\pi$ based on an interesting property of the second-order  finite difference  of the polynomial $f(\pi(k))$ on $\mathbb{Z}_N$ (as in the proof of Prop.~\ref{corre_thm}).
Furthermore, by imposing extra conditions on $\pi(x)$, we show that the interleaving sequences $\s\circ \pi$ and $\s\circ \pi^{-1}$ exhibit the ZAC property.
 The analysis for the sequence  $\s\circ \pi$ relies on the bijective property of the first-order  finite difference  of $f(\pi(k))$ on $\mathbb{Z}_{N_1}$, where $N_1=N/\gcd(d,N)$ for a nonzero shift $0<d<N$ (as in Prop.~\ref{prop_PP}),  and that for $\s\circ \pi^{-1}$ leverages the DFT and the bijective property of the first-order  finite difference  of $f(\pi(k))$ on $\mathbb{Z}_{N_1}$.
The techniques also enable us to prove the ZAC property of $\s\circ \pi$  and $\s\circ \pi^{-1}$ for arbitrary PPs $\pi(x)$ over $\mathbb{Z}_{2^n}$.
Besides, we also confirm the sufficiency of the conjecture by Berggren and Popovi\'{c} (in Prop.~\ref{prop1}) and
show that some of the CAZAC sequences $\s\circ \pi$  and $\s\circ \pi^{-1}$ derived in this paper
are not covered
by the equivalence classes of ZC sequences and interleaved ZC sequences  by QPPs  and QPPs' inverses (in Prop.~\ref{pro2}), with respect to the five basic operations.
The derived permutation-interleaved ZC sequences and their inequivalence relation are summarized in Table~\ref{tab_unique_seq}.
Finally, we analyze the aperiodic auto-correlation of some QPP-interleaved ZC sequences, showing that their magnitude is in the order of $O(N^{\frac{3}{4}})$ when $N=2^n$.
\begin{table}[!t]
\footnotesize
\caption{
 Inequivalence among permutation-interleaved ZC sequences $\s$}
\label{tab_unique_seq}
	\begin{center}
\begin{tabular}{|c|c|c|c|c|}
\hline
    $N$      & PPs $\pi$ over $\mathbb{Z}_N$ &   CAZAC seq.s  & Inequivalence &  Ref.
\\[5pt] \hline \hline
    $N=p^n,$ &  $f_2x^2+f_1x+f_0,$               &   $\s \circ \pi$  & all $\s \circ \pi$  &  \cite{Berggren_2024_}
\\ [5pt]
    $p \geq 5, n \geq 2$ &  $p|f_2, p \nmid f_1$  &   $\s \circ \pi^{-1}$  & all $ \s \circ \pi^{-1}$  &   This paper
 \\ [6pt]
\hline
    $N=p^nq_1q_2 \dots q_r, \, p\geq 3,$  &   $x^p+ax+b, $    &   $\s \circ \pi$
&  all $\s \circ \pi$   &
 \\[5pt] $(q_i-1) \,|\, p-1, \forall  i, \, n\geq 2$  &   $ a \not\equiv 0,-1 \,(\mathrm{mod}\,\, p),$    &  $\s \circ \pi^{-1}$ &   all $ \s \circ \pi^{-1}$ &  This paper
\\    &   $ a \not\equiv -1 \,(\mathrm{mod}\,\, q_i), \forall  i$      &    &    &
\\[5pt]
 \hline
			$N=2^n,\, n\geq 4$ & $\sum\limits_{i=0}^{m}a_ix^i$  &  $\s \circ \pi$  &
 certain $\s \circ \pi$   & This paper
  \\[5pt]
			& &  certain $\s \circ \pi^{-1}$   & &
\\	\hline
\end{tabular}
	\small{
		\begin{itemize}
			\item  Distinct primes $p$ and  $q_i$, $i=1,2,\dots,r,$ where $r \geq 0$, and
when $r=0$, $N$ is deemed as $p^n$.

		\end{itemize}
	}
	\end{center}
\end{table}



The remainder of this paper is organized as follows.
In Section \ref{Sec2}, we introduces some necessary notations, definitions and auxiliary lemmas.
Section  \ref{Sec3} studies PPs of the form  $\pi(x)=x^{p}+a x+b$  over $\mathbb{Z}_N$  for any positive integer $N$ and a prime factor $p$ of $N$, and the periodic auto-correlation of interleaved ZC  sequences $\s\circ \pi$.
In Section \ref{Sec_New_4}, we propose CAZAC sequences by interleaving ZC sequences with PPs and their inverses.
  Section \ref{Sec4} shows that some proposed CAZAC sequences are not covered by the equivalence  classes of ZC sequences and interleaved ZC sequences  by QPPs  and their inverses.
  In Section \ref{Sec6666}, we derive an upper bound of the aperiodic auto-correlation of certain QPP-interleaved ZC sequences.
Finally, Section \ref{Sec5} concludes
the work of this paper.

\section{Preliminaries}\label{Sec2}
We first list some notations which will be used throughout the paper:
\begin{itemize} \setlength{\itemsep}{0.4pt}



\item  $\mathbb{Z}_{N}=\{0,1,2,\dots,N-1\}$ denotes the integer ring modulo a positive integer $N$;
	
\item	$\xi_N$   denotes the primitive $N$-th root of unity $e^{ \frac{ -2\pi\sqrt{-1} }{N}}$;

\item $|a|$ and $a^*$ denote the magnitude and conjugate of a complex
	number $a$, respectively;
	
	
\item
$\s$
denotes a sequence of length $N$,
whose $k$-th term is denoted as $s(k)$ for  $0\leq k<N$;

\item $R[x]$ denotes the polynomial ring over $R$;	

\item $\pi $  denotes a permutation polynomial (PP) in $\mathbb{Z}_{N}[x]$;

\item
$  \s\circ \pi$
denotes the interleaved sequence  by the permutation $\pi$ of  $\mathbb{Z}_{N}$, namely,
the $k$-th term of   $\s\circ \pi$ is given by $({s}\circ \pi) (k)= s(\pi(k))$ for $0\leq k<N$;

	\item
	$\widehat{\s}$
	denotes the normalized DFT sequence of  $\s$;

	\item $\binom{n}{m}$ is a binomial coefficient given by $\binom{n}{m}=\frac{n!}{m!(n-m)!}$ with $0\leq m\leq n$;

\item   The periodic auto-correlation of a sequence $\s$ of length $N$ at a shift $d$ is given by  $$\theta(d) =\sum^{N-1}_{k=0}s(k)s^{*}(k + d), 0\leq d<N,$$
where $k + d$ is taken modulo $N$;

\item   The aperiodic auto-correlation of a sequence $\s$ of length $N$ at a shift $d$  is given by  $$\tilde{\theta}(d) =\sum\limits_{k=0}^{N-d-1} s(k)s^{*}(k+d),   0\leq d<N. $$

\end{itemize}

\subsection{Permutation polynomials over $\mathbb{Z}_{N}$}\label{SEC_2.1}
For a polynomial $f(x) = a_mx^m + \cdots + a_1x + a_0$ in $\mathbb{Z}_N[x]$, its (formal) derivative is given by $f'(x) =ma_mx^{m-1}+\dots +3a_3x^2 + 2a_2x  + a_1 \in \mathbb{Z}_N[x]$.  The first-order finite difference of $f(x)$ is given by $\Delta_{d}=f(x+d)-f(x)$, and the second-order finite difference of $f(x)$ is given by  $\Delta_{d_1,d_2}f(x)=\Delta_{d_2}[\Delta_{d_1}   f(x)] = f(x+d_1+d_2) - f(x+d_1) - f(x+d_2) + f(x).$

Given  nonnegative integers  $m$ and $N\geq 2$, a polynomial
$\pi(x)= a_mx^m + \dots + a_2x^2  +  a_1x + a_0$
with  $a_0,a_1,\dots,a_m \in \mathbb{Z}_{N}$ and $a_m \neq 0$ is said to be an $m$th-degree permutation polynomial (PP) in  $\mathbb{Z}_{N}[x]$, when
$\{ \pi(k)\,\,(\mathrm{mod}\,\, N): 0\leq k < N\}$ is
a permutation of $\mathbb{Z}_{N} $.	
The compositional inverse of $\pi$, denoted by   $\pi^{-1}$,
is the PP such that   $\pi^{-1}\circ \pi(k)=\pi^{-1}( \pi(k)) =k$ for any $k$.
For a PP over $\mathbb{Z}_N$,
the polynomial representation of its inverse always exists, but is not necessarily unique due to the existence of zero  divisors  in $\mathbb{Z}_N$ for a composite integer $N$ \cite{Lidl_Niederreiter_1996, Jonghoon2006, Lahtonen2012}.
Linear PPs over $\mathbb{Z}_N$ are trivially of the form $a_1x+a_0$ with $\gcd(a_1,N)=1$.
Quadratic PPs (QPPs) $\pi(x)=a_2x^2+a_1x+a_0$ over $\mathbb{Z}_N$ have their coefficients $a_2$ and $a_1$ satisfying  the following conditions:
\begin{equation*}
\left\{
  \begin{array}{ll}
    \text{ if }  v_2(N) = 1, & \text{ then }  a_2+a_1 \text{ is odd},\gcd(a_1,N/2)=1,  \\
&  \text{ and each prime divisor of } N/2 \text{ divides } a_2;\\
    \text{ if }  v_2(N) \neq 1, & \text{ then } \gcd(a_1,N)=1, \text{ each prime divisor of } N \text{ divides } a_2,
  \end{array}
\right.
\end{equation*}
where $v_2(N)$ is the 2-adic value of $N$, namely, the largest power of 2 that divides $N$.

For a higher degree, we recall necessary and sufficient  conditions for a polynomial $\pi(x)$ over $\mathbb{Z}_{N}$
to be a PP below.

	\begin{lem}	(\!\!\cite{Rivest}) \label{N=2^n}
		Let $N=2^n$ with  $n\geq 2$ and
		$\pi(x)=a_mx^m + a_{m-1}x^{m-1} + \dots + a_2x^2 + a_1x + a_0\in \mathbb{Z}_{N}[x]$.
		Then the  polynomial  $\pi$ is a PP over $\mathbb{Z}_{2^n}$ if and only if
	$$a_1 \equiv 1 \,\,(\mathrm{mod}\,\, 2), \, \, a_2 + a_4 + a_6 + \dots \equiv 0 \,\,(\mathrm{mod}\,\, 2) \text{ and\, }  a_3 + a_5 + a_7 + \dots \equiv 0 \,\,(\mathrm{mod}\,\, 2). $$
	\end{lem}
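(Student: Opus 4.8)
The plan is to derive the three stated parity conditions from the classical Hensel-lifting characterisation of permutation polynomials modulo a prime power, specialised to $p=2$ (this is in effect Rivest's original argument). The first step is to recall, and prove, the following auxiliary fact: for a prime $p$ and $n\geq 2$, a polynomial $\pi\in\mathbb{Z}_{p^n}[x]$ is a PP over $\mathbb{Z}_{p^n}$ if and only if $\pi$ is a PP over $\mathbb{Z}_p$ and $\pi'(a)\not\equiv 0\ (\mathrm{mod}\ p)$ for all $a\in\mathbb{Z}$. For the ``if'' direction I would induct on $k$ with $1\leq k<n$: from the truncated Taylor expansion $\pi(a+p^kt)\equiv\pi(a)+p^kt\,\pi'(a)\ (\mathrm{mod}\ p^{k+1})$ — valid because every omitted term carries a factor $(p^k)^j$ with $j\geq 2$, hence $p^{kj}$ with $kj\geq k+1$ — injectivity of $\pi$ modulo $p^k$ together with $\pi'(a)$ being a unit modulo $p$ forces injectivity modulo $p^{k+1}$, so by finiteness $\pi$ is bijective on $\mathbb{Z}_{p^n}$; the base case $k=1$ is precisely the hypothesis that $\pi$ is a PP over $\mathbb{Z}_p$. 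For the ``only if'' direction I would argue by contraposition: if $\pi$ is not a PP modulo $p$ it fails modulo $p^n$; and if $p\mid\pi'(a)$ for some $a$, then $\pi(a+pt)\equiv\pi(a)\ (\mathrm{mod}\ p^2)$ for every $t$, so $\pi$ is not injective modulo $p^2$, a fortiori not modulo $p^n$.

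The second step specialises this to $p=2$. Being a PP over $\mathbb{Z}_2$ means $\pi(0)\not\equiv\pi(1)\ (\mathrm{mod}\ 2)$, i.e. $\sum_{i\geq 1}a_i$ is odd. For the derivative condition it suffices to test $a=0$ and $a=1$; since $\pi'(x)=\sum_{i\geq 1}ia_ix^{i-1}$, this gives $\pi'(0)=a_1\equiv 1\ (\mathrm{mod}\ 2)$ and $\pi'(1)=\sum_{i\geq 1}ia_i\equiv a_1+a_3+a_5+\cdots\ (\mathrm{mod}\ 2)$, the even-indexed contributions $2a_2,4a_4,\dots$ vanishing modulo $2$; so one additionally needs $a_1+a_3+a_5+\cdots\equiv 1\ (\mathrm{mod}\ 2)$.

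The last step is a short parity computation showing that the conjunction ``$\sum_{i\geq 1}a_i$ odd, $a_1$ odd, $a_1+a_3+a_5+\cdots$ odd'' is equivalent to the three conditions in the statement: subtracting $a_1\equiv 1$ from $a_1+a_3+a_5+\cdots\equiv 1$ gives $a_3+a_5+a_7+\cdots\equiv 0$, and writing $\sum_{i\geq 1}a_i=(a_1+a_3+\cdots)+(a_2+a_4+\cdots)$ together with $a_1+a_3+\cdots\equiv 1$ gives $a_2+a_4+a_6+\cdots\equiv 0$; running these identities backwards gives the converse. I expect the Hensel-lifting step of the first part to be the only substantive point — in particular pinning down the truncated Taylor expansion modulo $p^{k+1}$ and checking that the induction is valid starting from $k=1$ (the passage from $\mathbb{Z}_2$ to $\mathbb{Z}_4$, where one must verify that the two residue classes modulo $2$ really split into four distinct values modulo $4$); everything else is bookkeeping.
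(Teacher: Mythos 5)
Your proof is correct. Note that the paper itself offers no proof of this lemma --- it is quoted directly from Rivest --- so there is nothing internal to compare against; what you have written is essentially Rivest's own argument, reconstructed. Your first step (the Hensel-lifting criterion: $\pi$ permutes $\mathbb{Z}_{p^n}$ for $n\geq 2$ iff it permutes $\mathbb{Z}_p$ and $\pi'$ is nowhere zero mod $p$) is exactly Lemma~\ref{N=p^n}(i) of the paper, cited there from Hardy--Wright, so you could have invoked it rather than reproved it; but your derivation of it is sound: the truncated expansion $\pi(a+p^kt)\equiv\pi(a)+p^kt\,\pi'(a)\ (\mathrm{mod}\ p^{k+1})$ is valid because the coefficients $\pi^{(j)}(a)/j!$ are integers and $kj\geq k+1$ for $j\geq 2$, $k\geq 1$, and the induction and the contrapositive both go through. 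The specialisation to $p=2$ (test $\pi'$ only at $0$ and $1$, giving $a_1$ odd and $\sum_{i\ \mathrm{odd}}a_i$ odd, plus $\sum_{i\geq 1}a_i$ odd from bijectivity mod $2$) and the final parity bookkeeping are both correct and match the stated conditions exactly.
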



\begin{lem}\label{N=p^n}
	\noindent{\rm (i)}	(\!\!\cite{Har_1975})
For a prime $p$ and $n\geq 2$,	a polynomial $\pi$ is a PP over $\mathbb{Z}_{p^n}$  if and only if $\pi$ is a PP
	over $\mathbb{Z}_{p}$ and $\pi(k)'\neq 0 \,(\mathrm{mod}\,\, p)$ for all integers $k\in
\mathbb{Z}_{p}$.

	\noindent{\rm (ii)} (\!\!\cite{Sun_2005_})
	For $N=\prod\limits_{i=1}^r q_i^{n_i}$ with integers $n_i >0$ and distinct primes $q_i$'s,
	a polynomial $\pi$ is a PP over $\mathbb{Z}_N$ if and only if $\pi$ is a PP
	over $\mathbb{Z}_{q_i^{n_i}}$ for any $ i$.
\end{lem}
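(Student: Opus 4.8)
The plan is to prove the two parts separately: part (ii) is essentially the Chinese Remainder Theorem, while part (i) is a Hensel-type lifting argument built on the integer Taylor expansion and the finite-difference notation introduced above.

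For part (ii), I would use the ring isomorphism $\mathbb{Z}_N \cong \prod_{i=1}^{r}\mathbb{Z}_{q_i^{n_i}}$ given by $x \mapsto (x \bmod q_i^{n_i})_{i}$. Since $\pi$ has integer coefficients, evaluation of $\pi$ commutes with this isomorphism, so the self-map $k\mapsto \pi(k)$ of $\mathbb{Z}_N$ corresponds to the product map $(k_i)_i \mapsto (\pi(k_i)\bmod q_i^{n_i})_i$ on $\prod_i \mathbb{Z}_{q_i^{n_i}}$. A product of maps between finite sets is a bijection if and only if each factor is a bijection, which is exactly the assertion.

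For part (i), the workhorse is the identity $\pi(x+h) = \pi(x) + \pi'(x)h + h^2 g(x,h)$ with $g \in \mathbb{Z}[x,h]$, valid because the higher Taylor coefficients $\binom{i}{j}a_i$ are integers. Specializing $h = p^{n-1}t$ and using $n\ge 2$, so that $2(n-1)\ge n$, yields the congruence $\pi(x+p^{n-1}t) \equiv \pi(x) + \pi'(x)\,p^{n-1}t \pmod{p^n}$ for all $x$ and $t$, which drives both directions. For the forward direction, assume $\pi$ is a PP over $\mathbb{Z}_{p^n}$: since $\pi(x)\bmod p$ depends only on $x\bmod p$ and each residue class mod $p$ has $p^{n-1}$ lifts in $\mathbb{Z}_{p^n}$, a fiber count forces the induced map on $\mathbb{Z}_p$ to have all fibers of size one, i.e. $\pi$ is a PP over $\mathbb{Z}_p$; and if $\pi'(a)\equiv 0\pmod p$ for some $a$, the congruence gives $\pi(a+p^{n-1}t)\equiv \pi(a)\pmod{p^n}$ for all $t=0,\dots,p-1$, contradicting injectivity. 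For the converse I would induct on $n$: assuming $\pi$ is a PP over $\mathbb{Z}_{p^{m}}$ with $\pi'(k)$ a unit mod $p$ for every $k$, and given $\pi(x)\equiv\pi(y)\pmod{p^{m+1}}$, reduction mod $p^{m}$ together with the inductive hypothesis gives $y = x + p^{m}t$; substituting into the congruence yields $\pi'(x)t\equiv 0\pmod p$, whence $t\equiv 0$ and $x=y$ in $\mathbb{Z}_{p^{m+1}}$, so $\pi$ is injective, hence bijective, on $\mathbb{Z}_{p^{m+1}}$.

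The only points requiring a little care are checking that the Taylor remainder genuinely has integer coefficients and the fiber-counting bookkeeping in the forward implication of part (i); neither is a real obstacle, since this is a classical result and the argument is self-contained once the integer Taylor expansion is in hand.
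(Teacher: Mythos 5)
Your proposal is correct. Note that the paper does not prove this lemma at all: it is quoted as a known result, with part (i) attributed to Hardy--Wright and part (ii) to Sun--Takeshita, so there is no in-paper argument to compare against. Your proof supplies exactly the standard arguments behind those citations: part (ii) is the Chinese Remainder Theorem factorization of the evaluation map, and part (i) rests on the integer Taylor identity $\pi(x+h)=\pi(x)+\pi'(x)h+h^2g(x,h)$ with $g\in\mathbb{Z}[x,h]$, specialized at $h=p^{n-1}t$ (where $2(n-1)\ge n$ kills the remainder), combined with the fiber count for the reduction mod $p$ and the induction on the exponent for the converse. All the steps check out; in particular the inductive step only needs $2m\ge m+1$, which holds for $m\ge 1$, so the lifting argument is sound at every level.
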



\subsection{CAZAC sequences and their equivalence}	
\begin{defn}[CAZAC sequences]
A sequence $\s=(s(0),s(1),\dots,s(N-1))$ of length $N$ is a constant amplitude zero auto-correlation (CAZAC) sequence if it satisfies the following:
\begin{description}
  \item[(CA)]  $|s(k)|=1$ for any $0\leq k<N$; 
  \item[(ZAC)]$\theta(d) =\sum\limits^{N-1}_{k=0}s(k)s^{*}(k + d)=0$ for any $ 0< d<N$,
where $k + d$ is taken modulo $N$.
\end{description}
\end{defn}
A CAZAC sequence is also referred to as a perfect root-of-unity sequence~\cite{Mow_unif}.
As pointed out in \cite{Benedetto_2007_, Trends}, a CAZAC sequence $\s$ of length $N$ under the following mathematical operations remains a CAZAC sequence:
\begin{equation}\label{mathe_opera}
\left\{
\begin{aligned}
&\text{(Rotation) } {y}(k)=cs(k), \text{ for a complex number } c \text{ with }  |c|=1;  \\
&\text{(Translation) } {y}(k)={x}(k+d \,( \mathrm{mod}\, N))  \text{ for an integer } d;  \\
&\text{(Decimation) } {y}(k)={x}(ak \,( \mathrm{mod}\, N)),  \text{ where } a \text{ is an integer with } \mathrm{gcd}(a,N)=1;  \\
&\text{(Linear frequency modulation) } {y}(k)=\xi_N ^{lk} s(k) \text{ for an integer } l;  \\
&\text{(Conjugate) } {y}(k)= {x}^*(k),
\end{aligned}
\right.
\end{equation}
where $k=0,1,\dots,N-1$.  Based on these operations, we introduce the equivalence class of a CAZAC sequence. 

\begin{defn}[Equivalence Classes]\label{def:equivalence_class}
	An \textit{equivalence class of a CAZAC sequence} is the set of all sequences generated by composing the five operations in \eqref{mathe_opera} on a given CAZAC sequence: rotation, translation, decimation, linear frequency modulation, and conjugation.
\end{defn}



In general, achieving zero periodic correlation across the entire delay range is
difficult when the phase of sequences is a restrictive consideration in practical applications.  To address this limitation, zero-correlation zone (ZCZ) sequences   were introduced \cite{Tang_Fan_Matsufuji} for quasi-synchronous systems. These sequences have zero out-of-phase periodic auto-correlation within a bounded delay range $|d| \leq D$, where $D$ denotes the ZCZ length.

\subsection{Permutation-interleaved Zadoff-Chu sequences}

We first give the definition of permutation-interleaved sequences.

\begin{defn} (\!\!\cite{Sun_2005_}) \label{PPinterleaved}
	Let $\s =(s(0),s(1),\dots, s(N-1)) $ be a sequence of length $N$,
	and let  $\pi$ be a permutation of $\mathbb{Z}_{N}$.
	Then the permutation-interleaved sequence, denoted by $  \s\circ \pi$,
	is defined by
	$$({s}\circ \pi) (k)= s(\pi(k)) \text{ for } 0\leq k<N.$$
\end{defn}

\begin{remark}
A permutation-based interleaver is different from
the interleaving technique introduced by Gong \cite{Gong}. The latter's  core idea is to interleave multiple short-period base sequences (which can be shift-equivalent sequences), typically by arranging them in columns and reading the new sequence row by row, so as to obtain a longer-period sequence with good correlation properties.
\end{remark}

Many classes of CAZAC sequences were proposed in the literature \cite{Mow}, and the Zadoff-Chu sequence (ZC sequence) \cite{Chu_1972_} is probably one of the most widely deployed CAZAC sequences in radar and communication systems.
A ZC sequence is a polyphase sequence $\s=(s(0),s(1),\dots,s(N-1))$  defined by $s(k)=\xi_N^{f(k)},  k=0,1,\dots,N-1$ with
\begin{equation}\label{ZC_seq}
f(k) :={u(k^2+(N\,\mathrm{mod}\,\, 2)k+2lk)/2},
\end{equation}
where 
$u$ is a root index satisfying $\mathrm{gcd}(u,N)=1$, and $l$ is an arbitrary integer\cite{Chu_1972_}.
The special case of the root index  $u=1$ yields the original ZC sequence.

For interleaved ZC sequences $\y=\s \circ \pi$ with any PP $\pi$ in $\mathbb{Z}_{N}[x]$, we have
\begin{equation*}
y(k)=s(\pi(k))=\xi_N^{f( \pi(k))}=\xi_N^{u(\pi(k)^2+\pi(k)(N \,\mathrm{mod}\, 2)+2l\pi(k))/2},
\end{equation*}
for $k=0,1,...,N-1$. It is clear that its magnitude satisfies $|y(k)| =1$, i.e., the CA property.
Hence we only need to consider the ZAC property of interleaved ZC sequences.
Recall that the periodic auto-correlation function of $\textbf{y}=\s \circ \pi = (\xi_N^{f( \pi(k))})_{0\leq k<N}$ is given by
$$\theta(d) =\sum^{N-1}_{k=0}y(k)y^{*}(k + d )=\sum^{N-1}_{k=0}\xi_N^{f( \pi(k))- f( \pi(k+d))}.$$
Here  $\xi_N^{f( \pi(k))}$ can be  rewritten as  $	\xi_N^{f( \pi(k))}=\xi_N^{-\frac{u}{2} (l+(N\,\mathrm{mod} \,2))l } \xi_N^{\frac{u}{2} (\pi(k)+l+(N\,\mathrm{mod} \,2))(\pi(k)+l)}.
$
This indicates that the parameter $l$ introduces a rotation and a translation, which are mathematical operations
in \eqref{mathe_opera} that preserve the CAZAC property. Without loss of generality, we will set $l=0$ in our subsequent investigation.
Let $\zeta_N=\xi_N^{-u}=\xi_N^{N-u}$. Then $\zeta_N$ is a primitive $N$-th root of unity since $\mathrm{gcd}(u,N)=1$.
Hence for $0<d<N$, we rewrite the periodic auto-correlation of $\textbf{y}=\s \circ \pi$ as
\begin{equation}\label{core_d}
\begin{array}{cll}
\theta(d) 
&=\sum\limits^{N-1}_{k=0}\xi_N^{f( \pi(k))- f( \pi(k+d))}\\
&=\, \sum^{N-1}\limits_{k=0} \xi_N^{\frac{u}{2}(\pi^2(k)+(N\mathrm{mod}\,\,2 )\pi(k)) -\frac{u}{2}(\pi^2(k+d)+(N\mathrm{mod}\,\,2 )\pi(k+d)) }\\
&=\, \sum^{N-1}\limits_{k=0} \zeta_N^{\frac{1}{2} [\pi^2(k+d)-\pi^2(k)+(N\mathrm{mod}\,\,2 )(\pi(k+d)-\pi(k) ) ] }.
\end{array}
\end{equation}

The investigation of the above auto-correlation function  can be interpreted from the perspective of cyclotomic polynomials.
Recall that the $s$-th cyclotomic polynomial $\Phi_s(x)$ is the unique monic irreducible polynomial
over the field of rational numbers $\mathbb{Q}$.
The roots of $\Phi_s(x)$  are the primitive $s$-th roots of unity.
Precisely, $\Phi_1(x) = x-1$, $\Phi_p(x) = x^{p-1}+x^{p-2} + \cdots + x + 1$ for a prime $p$ and
for the general cases, we have
\begin{equation}\label{Eq-Cyclo}
\Phi_s(x) = \prod_{1\leq k<s, \gcd(k,s)=1}(x-\zeta_s^k) \text{ and } x^{N-1}+\cdots + x + 1 = \prod_{s|N, s\neq 1}\Phi_s(x),
\end{equation}
where $s$ and $N$ are positive integers and $\zeta_s=e^{2 \pi \sqrt{-1}/s}$ is a primitive $s$-th  root of unity.
Then for a polynomial $A(x) = \sum_{k=0}^{N-1} x^{a_k}$,  it follows that
$$
A(\zeta_N) = \sum_{k=0}^{N-1} \zeta_N^{a_k} = 0 \text{ if and only if  } \Phi_N(x) | A(x).
$$
According to the work of Lam and Leung~\cite{Lam-Leung}, 
which discussed vanishing sums of $N$-th roots of unity,
there are many choices of sequences $(a_0,a_1,\dots, a_{N-1})$ such that $A(\zeta_N) = 0$.
We have one obvious type of such sequences: $(a_0,a_1,\dots, a_{N-1})$ after being permuted is composed of multiple copies of $(0,1,2,\dots, N_1-1)$,
which corresponds to $A(x) = \frac{N}{N_1} (x^{N_1-1}+\cdots + x + 1)$, for a divisor $N_1$ of $N$.
The proofs of results in subsequent sections will be based on this type of instances.



Following the notion of  the  equivalence class, we are interested in finding new CAZAC sequences that are inequivalent to ZC sequences.
	For this purpose, we will show that for a rotation $r\in\{0,1,\dots,2N-1\}$, a translation $d\in \{0,1,\dots,N-1\}$, root indexes $u_i$  with $\mathrm{gcd}(u_i,N)=1$ and $i=1,2$,  a linear frequency  modulation sequence $\xi_N^{vk}, v=0,1,\dots,N-1$, and a complex conjugate operation $s\in\{-1,1\}$,
there exists a certain $k\in\{0,1,\dots,N-1\}$ such that
\begin{equation}\label{uuuuu}
	\xi_N^{u_1(\pi(k)+(N\,\mathrm{mod}\,2))\pi(k)/2} \neq \xi_N^{vk}\xi_N^{su_2(k+d+(N\,\mathrm{mod}\,2))(k+d)/2+r/2}.
\end{equation}
Before ending this subsection, we recall the conjecture by Berggren and Popovi$\mathrm{\acute{c}}$ \cite{Berggren_2023_}
 about inequivalent  QPP-interleaved sequences.
In Section~\ref{Sec4},  we shall prove its sufficiency and show that the inverses of QPPs
interleaved sequences are also  inequivalent to ZC sequences.

\begin{conjecture}(\!\!\cite{Berggren_2024_})\label{conj_Ber}
Given a ZC sequence $\s$,  all QPP-interleaved sequences $\s \circ \pi$ with  $\pi(x)=f_2x^2+f_1x +f_0$   are
 inequivalent to ZC sequences if and only if $N = p^n$ where $p\,(p \geq 5)$ is a prime and $n\,(n \geq 2)$ is an integer.
\end{conjecture}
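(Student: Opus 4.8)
The plan is to prove the ``if'' direction of Conjecture~\ref{conj_Ber}: assuming $N=p^n$ with $p\geq 5$ prime and $n\geq 2$, I will show that no QPP-interleaved ZC sequence $\s\circ\pi$ with $\pi(x)=f_2x^2+f_1x+f_0$ (where, by the QPP conditions in Subsection~\ref{SEC_2.1}, $p\mid f_2$ and $\gcd(f_1,N)=1$, so $p\nmid f_1$) is equivalent to any ZC sequence. Concretely, I aim to establish that the negation of \eqref{uuuuu} cannot hold for all $k\in\{0,1,\dots,N-1\}$; that is, for every choice of rotation $r$, translation $d$, root indices $u_1,u_2$, linear frequency modulation coefficient $v$, and conjugation sign $s$, there is some $k$ violating the identity. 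The first step is to expand $f(\pi(k))$. Since $l=0$ and $N$ is odd, $f(\pi(k))=\tfrac{u_1}{2}\pi(k)^2\pmod N$; writing $\pi(k)^2=(f_2k^2+f_1k+f_0)^2$ and using $p\mid f_2$, $n\ge 2$, the terms $f_2^2k^4$ and $2f_2f_1k^3$ are divisible by $p^2$ and $p$ respectively (more care is needed for the $k^3$ term, but after multiplying by $\tfrac{u_1}{2}$ and reducing, the exponent of $\xi_N$ in $\xi_N^{f(\pi(k))}$ is a polynomial in $k$ whose cubic coefficient is divisible by $p$ and whose quartic coefficient is divisible by $p^2$). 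The right-hand side of \eqref{uuuuu}, by contrast, has exponent a quadratic polynomial in $k$.

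The key idea is to pass to a finite-difference / coefficient-comparison argument. Define $g(k)$ to be the exponent (mod $2N$ or mod $N$, tracking the factor $\tfrac12$ carefully since $N$ is odd so $2$ is invertible) of the left-hand side and $h(k)$ the exponent of the right-hand side of \eqref{uuuuu}. If the two sides agreed for all $k$, then $g(k)\equiv h(k)\pmod N$ for all $k\in\mathbb Z_N$, i.e. the polynomial $g(x)-h(x)$ vanishes identically as a function on $\mathbb Z_N$. Now $g(x)-h(x)$ is a polynomial of degree $4$ whose leading (degree-4) coefficient is a nonzero multiple of $\tfrac{u_1}{2}f_2^2$. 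Here is where $N=p^n$, $n\ge 2$, $p\ge 5$ is used: the QPP conditions force $f_2=p^{n-1}\cdot c$ for some unit $c$ when $n\ge 2$ is forced to be exactly... more generally $p\mid f_2$ but $p^n\nmid f_2$, so $f_2^2\not\equiv 0\pmod{p^n}$ precisely when $n$ is small, but for general $n$ one instead uses a third finite difference: $\Delta^{(3)}$ of a genuine quadratic is identically $0$, whereas $\Delta^{(3)}$ of $g$ picks out exactly the contribution of the cubic and quartic terms, whose coefficients involve $f_2f_1$ and $f_2^2$ with $f_1$ a unit and $f_2$ a nonzero (mod $p^n$) multiple of $p$. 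The point is that $\Delta^{(3)}_{1,1,1}g(x)$ is a nonzero polynomial function on $\mathbb Z_N$ — because its dominant term is $\binom{4}{?}\cdot\tfrac{u_1}{2}\cdot 2 f_2 f_1\cdot(\text{stuff})\,x + (\text{const})$ with $f_2f_1$ of $p$-adic valuation exactly $v_p(f_2)\in\{1,\dots,n-1\}$, hence not $\equiv 0\pmod{p^n}$ — so it cannot vanish identically on $\mathbb Z_N$, contradicting $g-h$ being $\equiv 0$. I would organize this as: (i) compute $g$ explicitly; (ii) show the cubic-or-quartic part of $g$ is a nonzero function on $\mathbb Z_N$ using the QPP constraints and $p\ge 5$; (iii) conclude via the third finite difference that $g$ is not a quadratic function modulo $N$, hence $g\not\equiv h$, hence \eqref{uuuuu} holds for some $k$.

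The main obstacle I anticipate is bookkeeping around the factor $\tfrac12$ and the distinction between ``$g$ as a polynomial over $\mathbb Z$'' versus ``$g$ as a function on $\mathbb Z_N$'': two different polynomials can induce the same function on $\mathbb Z_N$ (this is exactly the non-uniqueness of polynomial representations of PPs noted in Subsection~\ref{SEC_2.1}). To handle this rigorously I will not argue at the level of polynomial coefficients but at the level of finite differences of \emph{functions}: $\Delta^{(3)}_{1,1,1}$ applied to a function that happens to be a quadratic polynomial function is identically zero, so it suffices to exhibit one $k$ with $\Delta^{(3)}_{1,1,1}g(k)\not\equiv 0\pmod N$. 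Computing $\Delta^{(3)}_{1,1,1}g(k)$ gives an explicit affine-in-$k$ expression whose slope is $24\cdot\tfrac{u_1}{2}f_2^2 = 12 u_1 f_2^2$ and whose constant term involves $f_2 f_1$ and $f_2^2$; since $p\ge 5$, the factor $12 u_1$ is a unit mod $p$, and since $v_p(f_2f_1)=v_p(f_2)<n$, the constant term (or the whole affine function, evaluated at a suitable $k$) is nonzero mod $p^n$. A secondary subtlety: a priori the conjugation sign $s$ and modulation $v$ could conspire to make $h$ itself a non-quadratic function — but $h$'s exponent is manifestly $su_2(k+d)^2/2 + vk + r/2$, genuinely quadratic in $k$, so $\Delta^{(3)}_{1,1,1}h\equiv 0$ unconditionally, which closes the argument. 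I expect step (ii)/the choice of the precise finite-difference order to be the only genuinely delicate point; everything else is direct computation with the structure already set up in \eqref{core_d} and \eqref{uuuuu}.
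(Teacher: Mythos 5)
Your proposal is correct and proves the same thing the paper proves (the sufficiency direction, i.e.\ Proposition~\ref{prop1}), by the same overall strategy: assume equivalence, reduce \eqref{uuuuu} to a quartic congruence $G(k)=b_4k^4+b_3k^3+b_2k^2+b_1k+b_0\equiv 0 \pmod{p^n}$ for all $k$ (using that $2$ is a unit since $N$ is odd), and derive the contradiction $f_2\equiv 0\pmod{p^n}$ from the cubic coefficient $b_3=2u_1f_2f_1$. The only genuine difference is the extraction mechanism: the paper evaluates $G$ at $k=0,1,2,3,4$ and inverts the $5\times 5$ Vandermonde system via Lemma~\ref{solution_lem} (its determinant $288=2^5 3^2$ is a unit mod $p$ for $p\geq 5$), thereby forcing \emph{all} $b_i\equiv 0\pmod{p^n}$; you instead apply the third finite difference, which annihilates the quadratic right-hand side and isolates exactly $b_3,b_4$, yielding the affine function $24b_4k+36b_4+6b_3$ whose value at $k=0$ is $6u_1f_2(3f_2+f_1)$ with $p$-adic valuation $v_p(f_2)\leq n-1<n$, hence nonzero. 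Your route is marginally more economical (it needs only $k=0$, or $k=0,1$ if one wants $b_4\equiv 0$ too, and it correctly sidesteps the polynomial-versus-function issue by differencing the function rather than comparing coefficients), while the paper's route recovers the full coefficient vector and reuses the same Vandermonde template in the proof of Proposition~\ref{pro2}; both hinge on $p\geq 5$ for the same reason, namely that $2$ and $3$ are units. One cosmetic remark: the sentence in your second paragraph speculating that $f_2=p^{n-1}c$ and that $f_2^2\not\equiv 0\pmod{p^n}$ ``precisely when $n$ is small'' is muddled, but it is harmless because your final argument only uses $1\leq v_p(f_2)\leq n-1$ and routes the contradiction through the constant term $6b_3+36b_4$ rather than through $b_4$ alone.
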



\subsection{Auxiliary lemmas}
This subsection presents some lemmas which help determine the periodic auto-correlation of interleaved sequences.

We first recall several useful facts about binomial coefficients.

\begin{lem}(\!\!\cite{Rosen, Lucas})\label{parity}	
	\noindent{\rm (i)} Let \( p \) be a prime. Then the following statements hold:\\
	$
		\begin{array}{ll}
			- & p \mid   \binom{p}{i} \quad \text{for } 1 \leq i < p; \\
			- & p \mid \binom{p+1}{i}  \quad \text{for } 2 \leq i < p; \\
			- & \text{If } p \geq 3,\text{ then } p \mid  \binom{2p}{j} \text{ for } 1 \leq j < 2p,\, j \neq p.
		\end{array}
	$

	\noindent{\rm (ii)} For integers \( n \) and $i$ with \( 0 \leq i \leq 2n \),
	one has
	$\binom{2n}{i}\equiv 0 \pmod{2} $ if $i$ is odd, and $\binom{2n}{i} \equiv \binom{n}{i/2} \pmod{2}$ if $i$ is even.

	\noindent{\rm (iii) [Lucas's Theorem]}
For a prime \( p \) and non-negative integers $ m, n $ with $m \geq n $, the following congruence relation holds:
\[
\binom{m}{n} \equiv \prod_{i=0}^{k} \binom{m_i}{n_i} \pmod{p},
\]
where
$
m = m_k p^k + m_{k-1} p^{k-1} + \cdots + m_1 p + m_0,
$
and
$
n = n_k p^k + n_{k-1} p^{k-1} + \cdots + n_1 p + n_0
$
are the base \( p \) expansions of \( m \) and \( n \), respectively.
\end{lem}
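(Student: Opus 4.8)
The plan is to establish part (iii) — Lucas's Theorem — first, and then read off parts (i) and (ii) as special cases. The engine for (iii) is the polynomial congruence $(1+x)^p \equiv 1 + x^p \pmod{p}$ in $\mathbb{Z}[x]$, which in turn rests on the elementary fact that $p \mid \binom{p}{i}$ for $1 \leq i \leq p-1$: writing $\binom{p}{i} = p!/(i!\,(p-i)!)$, the factor $p$ in the numerator survives reduction because neither $i!$ nor $(p-i)!$ is divisible by $p$ when $0 < i < p$. So I would treat this elementary divisibility as the base fact, deduce Lucas from it, and then derive the rest.

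To prove (iii), I would iterate the congruence above to get $(1+x)^{p^i} \equiv 1 + x^{p^i} \pmod{p}$ for all $i \geq 0$, and then, with $m = \sum_{i=0}^k m_i p^i$, compute
\[
(1+x)^m \;=\; \prod_{i=0}^k \bigl((1+x)^{p^i}\bigr)^{m_i} \;\equiv\; \prod_{i=0}^k (1 + x^{p^i})^{m_i} \;=\; \prod_{i=0}^k \Bigl( \sum_{t_i=0}^{m_i} \binom{m_i}{t_i} x^{t_i p^i} \Bigr) \pmod{p}.
\]
The coefficient of $x^n$ on the left is $\binom{m}{n} \bmod p$. On the right, a monomial $x^n$ arises from a choice of $t_i$ in the $i$-th factor with $\sum_i t_i p^i = n$ and $0 \leq t_i \leq m_i \leq p-1$; by uniqueness of the base-$p$ representation this forces $t_i = n_i$ (and if some $n_i > m_i$, no such choice exists, matching the vanishing of $\binom{m_i}{n_i}$). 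Comparing coefficients yields $\binom{m}{n} \equiv \prod_{i=0}^k \binom{m_i}{n_i} \pmod{p}$.

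Parts (i) and (ii) then follow by plugging in the relevant base-$p$ expansions. For (i): from $p = (1,0)_p$ we get $\binom{p}{i} \equiv \binom{1}{0}\binom{0}{i}$, which is $0 \bmod p$ for $1 \leq i < p$; from $p+1 = (1,1)_p$ we get $\binom{p+1}{i} \equiv \binom{1}{0}\binom{1}{i}$, and $\binom{1}{i} = 0$ once $i \geq 2$; and when $p \geq 3$ we have $2 < p$, so $2p = (2,0)_p$ genuinely has two base-$p$ digits, and for $j = j_1 p + j_0$ with $0 \leq j_0 < p$ we get $\binom{2p}{j} \equiv \binom{2}{j_1}\binom{0}{j_0}$, which is nonzero mod $p$ only when $j_0 = 0$, i.e. $j \in \{0, p, 2p\}$; discarding the two endpoints and $j = p$ leaves $p \mid \binom{2p}{j}$ for the remaining $j$ in range. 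For (ii): take $p = 2$; if $n = \sum_i n_i 2^i$ then $2n$ has binary digits $(2n)_0 = 0$ and $(2n)_{i+1} = n_i$, so for odd $i$ the lowest-digit factor is $\binom{0}{1} = 0$, forcing $\binom{2n}{i} \equiv 0 \pmod{2}$, while for even $i = 2i'$ the lowest-digit factor is $\binom{0}{0} = 1$ and the remaining factors reassemble into $\binom{n}{i'} = \binom{n}{i/2} \pmod{2}$. None of this is genuinely deep; the only points needing a touch of care are the uniqueness-versus-impossibility bookkeeping in the coefficient comparison for (iii), and, in the third bullet of (i), recognizing that the hypothesis $p \geq 3$ is exactly what prevents a carry so that $2p$ really does have base-$p$ digits $2$ and $0$. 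I would present the argument in the order (iii)\,$\to$\,(i)\,$\to$\,(ii).
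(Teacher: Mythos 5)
Your proof is correct. Note that the paper does not prove Lemma~\ref{parity} at all --- it is stated with citations to \cite{Rosen, Lucas} and used as a known fact --- so there is no in-paper argument to compare against. Your route (establish $(1+x)^p\equiv 1+x^p \pmod p$ from the elementary divisibility $p\mid\binom{p}{i}$, iterate to get Lucas's theorem by comparing coefficients, then specialize the base-$p$ digits to read off (i) and (ii)) is the standard self-contained derivation, and the details you flag --- the uniqueness/impossibility dichotomy in the coefficient comparison, and the role of $p\geq 3$ in ensuring $2p$ has base-$p$ digits $(2,0)$ without a carry --- are exactly the points that need care; they are all handled correctly.
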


From Lemma \ref{parity}, the following congruence equality can be derived, which will be used to calculate  the auto-correlation function.
The proof of Lemma \ref{N=pq} is given in
\hyperref[appendix:A]{Appendix A}.

\begin{lem}\label{N=pq}
	Let $p$ be a prime, and let 	$m$ and $n$ be positive integers.
	Then we have
$$\frac{1}{2}[(m+n)^{2p}+(n\, \mathrm{mod}\,\,2 )(m+n)^{p}] \equiv \frac{1}{2}[m^{2p}+(n \, \mathrm{mod}\,\,2 )m^{p}] \ \;(\mathrm{mod}\,\,n ).$$
\end{lem}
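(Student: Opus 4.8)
The plan is to prove the congruence $\frac{1}{2}[(m+n)^{2p}+(n\bmod 2)(m+n)^{p}] \equiv \frac{1}{2}[m^{2p}+(n\bmod 2)m^{p}] \pmod{n}$ by expanding both powers with the binomial theorem and showing the cross terms are divisible by $n$, paying careful attention to the division by $2$. First I would write $(m+n)^{p} = \sum_{i=0}^{p}\binom{p}{i}m^{p-i}n^{i} = m^{p} + n\cdot\sum_{i=1}^{p}\binom{p}{i}m^{p-i}n^{i-1}$, so that $(m+n)^p - m^p$ is a multiple of $n$; similarly $(m+n)^{2p} = \sum_{j=0}^{2p}\binom{2p}{j}m^{2p-j}n^{j}$, so $(m+n)^{2p}-m^{2p} = n\cdot\sum_{j=1}^{2p}\binom{2p}{j}m^{2p-j}n^{j-1}$. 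Thus the quantity $D := [(m+n)^{2p}+(n\bmod 2)(m+n)^{p}] - [m^{2p}+(n\bmod 2)m^{p}]$ is divisible by $n$; the whole difficulty is the extra factor of $2$ in the denominator, i.e.\ I must show $D/2$ is still divisible by $n$, equivalently that $2n \mid D$.

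The key case split is on the parity of $n$. If $n$ is odd, then $n\bmod 2 = 1$ and $\gcd(2,n)=1$, so $n \mid D$ together with $2 \mid D$ (which holds because $D = [(m+n)^{2p}-m^{2p}] + [(m+n)^{p}-m^p]$, and for any integers $a,b$ the sum $(a^{2}-b^{2}) + (a-b) = (a-b)(a+b+1)$ is always even, applied with $a=(m+n)^p$, $b=m^p$) immediately gives $2n\mid D$. If $n$ is even, then $n\bmod 2 = 0$, so $D = (m+n)^{2p}-m^{2p} = n\sum_{j=1}^{2p}\binom{2p}{j}m^{2p-j}n^{j-1}$; I then need the $j=1$ term's coefficient $\binom{2p}{1}m^{2p-1} = 2p\,m^{2p-1}$ to absorb the remaining factor of $2$, and for $j\geq 2$ the factor $n^{j-1}$ already contributes at least one more factor of $2$ since $n$ is even. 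Hence every term of $\sum_{j=1}^{2p}\binom{2p}{j}m^{2p-j}n^{j-1}$ is even, so $2n \mid D$.

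The main obstacle is organizing the even-$n$ case cleanly so that the lone $j=1$ term (which carries only one visible factor of $2$ from the binomial coefficient $\binom{2p}{1}=2p$) and the higher-order terms (which carry extra factors of $2$ from powers of the even number $n$) are both handled — essentially a $2$-adic valuation bookkeeping argument. For $p=2$ one should double-check that Lemma \ref{parity}(ii) (parities of $\binom{4}{i}$) is consistent with this, and for odd $p$ the parity facts in Lemma \ref{parity}(i) about $\binom{2p}{j}$ are available if a sharper statement is wanted, though for this lemma the crude observation that each summand is even already suffices. Once $2n\mid D$ is established in both parity cases, dividing by $2$ yields the claimed congruence modulo $n$.
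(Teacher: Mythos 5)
Your proof is correct, and while the even-$n$ case coincides with the paper's (each term of $\sum_{j\ge 1}\binom{2p}{j}m^{2p-j}n^{j-1}$ is even, the $j=1$ term via $\binom{2p}{1}=2p$ and the rest via the even factor $n^{j-1}$), the odd-$n$ case takes a genuinely different and cleaner route. The paper expands both $(m+n)^{2p}$ and $(m+n)^{p}$ binomially and pairs the term $\binom{2p}{2j}m^{2p-2j}n^{2j}$ with $\binom{p}{j}m^{p-j}n^{j}$, invoking Lemma~\ref{parity}(ii) (namely $2\mid\binom{2p}{i}$ for odd $i$ and $\binom{2p}{2j}\equiv\binom{p}{j}\pmod 2$) to show each paired contribution, after division by $2$, is an integer multiple of $n$. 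You instead write $D=(a^{2}-b^{2})+(a-b)=(a-b)(a+b+1)$ with $a=(m+n)^{p}$, $b=m^{p}$, note this is always even, and combine $2\mid D$ with the trivial $n\mid D$ and $\gcd(2,n)=1$ to get $2n\mid D$. This avoids the binomial-coefficient parity lemma entirely and, as a side effect, makes it transparent that the primality of $p$ plays no role in the lemma (only that the exponents are $2p$ and $p$ with $\binom{2p}{1}$ even); your closing worry about checking Lemma~\ref{parity}(ii) at $p=2$ is therefore moot, since your argument never uses that lemma.
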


\section{Permutation-interleaved ZC sequences}\label{Sec3}
In this section,  we
investigate the auto-correlation of interleaved ZC sequences $\s \circ \pi$
as long as there exists a PP $\pi(x) = x^{p}+a x+b$ in $\mathbb{Z}_{N}[x]$ when
$p$ is a prime factor of $N$. Moreover, we propose an explicit class of PPs in this form under an extra condition.



\begin{prop}\label{corre_thm}	
Assume $N$ is factorized as $N=q_1^{n_1}q_2^{n_2}\dots q_r^{n_r}$ with primes $q_1<\cdots <q_r$.
 Let  $\s$ be a ZC sequence of length $N$. When the polynomial $\pi(x) = x^{q_i} + ax+b \in \mathbb{Z}_N[x]$ permutes $\mathbb{Z}_N$ for a certain integer $1\leq i\leq r$,
 the periodic auto-correlation of the interleaved  sequence $\s \circ \pi$ satisfies  $\theta(d)=0$ when $q_i^{n_i} \nmid d$.
\end{prop}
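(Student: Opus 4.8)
The plan is to show that the character sum for $\theta(d)$ in \eqref{core_d} vanishes, by decomposing the index set $\mathbb{Z}_N$ into cosets of a suitable cyclic subgroup on which the exponent runs through an arithmetic progression, and then invoking the vanishing-sum-of-roots-of-unity principle recorded after \eqref{Eq-Cyclo}. Setting $l=0$ and writing $G(k)=\frac{1}{2}\bigl(\pi(k)^2+(N\bmod 2)\pi(k)\bigr)$, this sum is $\theta(d)=\sum_{k\in\mathbb{Z}_N}\zeta_N^{\,G(k+d)-G(k)}$; I would first dispose of the parity/well-definedness bookkeeping — that $G(\pi(\cdot))$ makes sense modulo $N$ despite the factor $\frac{1}{2}$ and the $k^{q_i}$ term, which is what Lemma~\ref{N=pq} is tailored for — and then work, say, with $N$ odd as the main line.

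Put $p=q_i$ and take the shift $e=N/p$. The key ``second-order finite difference'' property is that, because $e$ is a multiple of $N/p$ and $p\mid\binom{p}{j}$ for $1\le j<p$ (Lemma~\ref{parity}(i)), one gets $(k+e)^{p}\equiv k^{p}+e^{p}\pmod N$ for every $k$, hence $\pi(k+e)-\pi(k)\equiv e^{p}+ae=:\lambda\pmod N$ independently of $k$; equivalently the second-order finite difference $\Delta_{d,e}\pi$ vanishes on $\mathbb{Z}_N$. Plugging $\pi(k+e)=\pi(k)+\lambda$ and $\pi(k+d+e)=\pi(k+d)+\lambda$ into $\bigl(G(k+d+e)-G(k+e)\bigr)-\bigl(G(k+d)-G(k)\bigr)$, a one-line computation collapses it to $\lambda\bigl(\pi(k+d)-\pi(k)\bigr)\pmod N$, and $\lambda$ is a multiple of $N/p$. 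Therefore, partitioning $\mathbb{Z}_N$ into the $N/p$ cosets of $\langle e\rangle=\{0,e,\dots,(p-1)e\}$, the values of $G(\cdot+d)-G(\cdot)$ along each coset form a length-$p$ arithmetic progression with common difference a multiple of $N/p$; such a coset contributes $0$ to $\theta(d)$ when this difference is $\not\equiv 0\pmod N$ (the progression then sweeps out a full coset of $\langle N/p\rangle$), and contributes $p\,\zeta_N^{\,G(k+d)-G(k)}$ when it is $\equiv 0$.

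To decide the dichotomy I would use Fermat, $(k+d)^{p}\equiv k+d\pmod p$, giving $\pi(k+d)-\pi(k)\equiv(1+a)d\pmod p$, together with $\gcd(1+a,p)=1$ (forced by $\pi$ being a permutation mod $p$) and $\lambda=(N/p)\mu$ with $\gcd(\mu,p)=1$ (read off from the permutation conditions, handling $n_i=1$ and $n_i\ge 2$ separately). Hence the common difference is $\equiv 0\pmod N$ for all $k$ exactly when $p\mid d$. If $q_i\nmid d$, every coset contributes $0$, so $\theta(d)=0$; this already proves the claim in that range, and in particular the whole statement when $n_i=1$. If $q_i\mid d$ but $q_i^{n_i}\nmid d$ — so $n_i\ge 2$, and since $q_i=2$ forces $n_i=1$ the prime $q_i$ is odd — every coset contributes $p\,\zeta_N^{\,G(k+d)-G(k)}$, and from $p\mid\bigl(\pi(k+d)-\pi(k)\bigr)$ with $p$ odd one checks $p\mid\bigl(G(k+d)-G(k)\bigr)$, so $\theta(d)=p\sum_{k\in\mathbb{Z}_{N/p}}\zeta_{N/p}^{\,(G(k+d)-G(k))/p}$. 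I would then recognise this reduced sum as — up to the harmless factor $p$ and up to a rotation and a linear frequency modulation, which do not affect whether a correlation coefficient is zero — the periodic auto-correlation at shift $d/p$ of an interleaved ZC sequence of length $N/p$ by a polynomial of the same shape $x^{p}+a'x+b'$ (still a permutation polynomial over $\mathbb{Z}_{N/p}$, since the permutation conditions descend). As $q_i^{\,n_i-1}\nmid d/p$, an induction on $v_{q_i}(N)$, with base case $n_i=1$ from the previous sentence, closes the proof; one can also simply rerun the second-difference mechanism one level down, with $N/p$ and the shift $N/p^{2}$.

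The main obstacle is precisely this reduction in the case $q_i\mid d$: making rigorous that $(G(k+d)-G(k))/p$, read modulo $N/p$, coincides with an interleaved-ZC auto-correlation exponent up to a discrepancy that is at most affine in $k$, which requires careful tracking of the division by $p$ through the quadratic phase and through the binomial expansion of $(k+d)^{p}$. By comparison, the second-difference identity, the coset decomposition, and the Fermat dichotomy (yielding the $q_i\nmid d$ case outright) are routine once $\pi(k+N/q_i)-\pi(k)\equiv\mathrm{const}\pmod N$ is in hand.
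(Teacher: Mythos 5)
Your treatment of the range $q_i\nmid d$ is correct, and at bottom it is the same mechanism as the paper's: you exhibit a shift $\tau$ (for you $\tau=N/p$ with $p=q_i$) along which the first finite difference of the correlation exponent $F(k)=G(k+d)-G(k)$ is a constant $c\pmod N$, namely $c=\lambda\bigl(\pi(k+d)-\pi(k)\bigr)\equiv (N/p)\mu(1+a)d$, which is nonzero precisely when $p\nmid d$; then $\theta(d)=\zeta_N^{c}\theta(d)=0$. The coset/arithmetic-progression picture and the second-difference identity $\Delta_{d,e}(G\circ\pi)=\lambda(\pi(k+d)-\pi(k))$ are a nice repackaging of this, and the Fermat step and the verification $\gcd(\mu,p)=\gcd(1+a,p)=1$ are all fine; the parity issues are indeed only bookkeeping.

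The gap is the case $p\mid d$ but $p^{n}\nmid d$ (with $n=n_i\geq 2$), which is precisely the nontrivial content of the proposition beyond the squarefree situation. There your fixed shift $e=N/p$ gives a \emph{zero} increment, every coset contributes $p\,\zeta_N^{F(k)}$, and everything hinges on the reduced sum $\sum_{k\in\mathbb{Z}_{N/p}}\zeta_{N/p}^{F(k)/p}$. You propose to identify this with the autocorrelation of a shorter interleaved ZC sequence and induct, but you explicitly leave that identification unproved, and it is not of that form in general: for $N=p^2$ and $d=pd'$ one computes $F(k)/p\equiv ad'\pi(k)+\mathrm{const}\pmod p$, a linear character composed with the permutation $\pi$ (this does vanish, but it is not an interleaved-ZC autocorrelation exponent of the shape your induction hypothesis would supply), and for $N=p^3$ additional terms such as $pd'k^{p-1}\pi(k)$ appear, so the induction does not close as stated; the ``rerun one level down'' variant runs into the same problem because $\pi(k+N/p^2)-\pi(k)$ is no longer constant modulo $N$. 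The paper avoids this entirely by adapting the shift to $d$: it takes $t=\frac{2N}{\gcd(d,N)\,p}$, so that $td$ is always $2N/p$ times a unit, shows $h_d(k+t)-h_d(k)$ equals the explicit constant $atd\,(t^{p-1}+d^{p-1}+a)$ plus a correction term present only when $n=1$, and then checks this constant is nonzero modulo $N$ by a case analysis on $v_p(d)\in\{0,1,\dots,n-1\}$ using $a\not\equiv 0,-1\pmod p$ --- one stroke covering all $d$ with $p^{n}\nmid d$. Your proof can be repaired either by adopting this $d$-dependent shift or by actually carrying out (and correcting the target of) your reduction; as written, only the subcase $q_i\nmid d$ is established.
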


\begin{remark} Notice that the interleaved sequence in Proposition \ref{corre_thm} does not always exhibit the ZAC property.
When $q_i^{n_i} \, |\, d$,
experiments show that
the periodic auto-correlation of the interleaved sequence varies with different  choices  of $N$ and PPs, which indicates that extra constraints are needed for the ZAC property.
\end{remark}

Take $\pi(x)=x^p+ax+b$.
Since $b$ is a constant translation, which is a mathematical operation in \eqref{mathe_opera} that preserves the CAZAC property, we can set $b=0$ in the proof of $\theta(d)$.
According to \eqref{core_d}, the periodic auto-correlation of the interleaved sequence $\textbf{y}=\s \circ \pi$ is given by
\begin{equation}\label{corehk}
	\theta(d) =\sum\limits^{N-1}_{k=0}\xi_N^{f( \pi(k))- f( \pi(k+d))}=C_0\sum^{N-1}\limits_{k=0} \zeta_N^{ h_d(k)},
\end{equation}
where $C_0=\zeta_N^{ ad(ad+1)/2} $ and
\begin{equation}\label{h_d(k_form)}
	h_d(k)=\frac{1}{2}[(k+d)^{2p}-k^{2p}  +(N\mathrm{mod}\,\,2 )((k+d)^{p}-k^{p}) ] +a[(k+d)^{p+1}-k^{p+1}] +a^2dk.
\end{equation}

The complete proof of Proposition~\ref{corre_thm} involves complicated calculations, so we provide only the main idea here and place the  details in \hyperref[appendix:B]{Appendix B}.

\noindent\textbf{Proof sketch of Proposition \ref{corre_thm}.}
Let $p$ be any prime factor of $N$ and $v_p(N)=n$.
According to \eqref{corehk}, we shall prove that for $p^n \nmid d$, the sum
$$\sum^{N-1}\limits_{k=0} \zeta_N^{ h_d(k)}=0,$$
	where $h_d(k)$ is given in \eqref{h_d(k_form)}.
This statement follows from the following two key observations:
\begin{enumerate}[(1)] \setlength{\itemsep}{0.5pt}
\item $h_d(k + N) \equiv h_d(k)  \ (\mathrm{mod}\,\,N)$ for any  $k \in \mathbb{Z}_N$, which implies $\sum_{k=0}^{N-1} \zeta_N^{h_d(k)} = \sum_{k=0}^{N-1}\zeta_N^{h_d(k+t)}$ for any positive integer  $t$;
\item  when $d$ is not divisible by $p^n$, we have an integer $t = \frac{2N}{\gcd(d,N)p} $ \text{ such that } the value $(h_d(k + t) - h_d(k)) \ \mathrm{mod}\,N$ for any  $k \in \mathbb{Z}_N$ is a nonzero constant $c$,  \text{ which implies} $\sum_{k=0}^{N-1} \zeta_N^{h_d(k+t)}  = \zeta_N^c \sum_{k=0}^{N-1} \zeta_N^{h_d(k)}$ for $c\neq 0$.
\end{enumerate}
Combining these observations leads to the desired statement.
$\text{\hfill} \square$

\begin{remark}\label{unique_ZCZ}
We can see that interleaved sequences $\s \circ \pi$ in
Proposition \ref{corre_thm} are ZCZ sequences with ZCZ length ${q_i}^{n_i}-1$.
As a matter of fact, Proposition \ref{corre_thm} can yield  ZCZ sequences which are not obtained in \cite{Popovic_2010_}.
In \cite{Popovic_2010_}, letting $N=tm$  with  positive 	integers $t$  	 and $m$,
 Popovi${\acute{c}}$ and Mauritz proposed GCL--ZCZ sequences
$\textbf{z}=(z(0),z(1),\dots, z(N-1)) $
 satisfying
 	$$z(k) = s(k)\,c(k \ (\mathrm{mod}\,\, m)), \quad k=0,1,\dots,N-1,$$
 	where $\s$ is a ZC sequence and
 	   $\textbf{c}$ is a complex-valued sequence of length $m$.
  Such sequences $\textbf{z}$ have ZCZ length  $t-1$.

Take an  example of $N=35$ with $p=7$ and $q=5$.
Then $\pi(x) = x^7+ax$ with $a\in \{0,5,10,15,25,30\}$.
Let  the ZC sequence  $\s $ be $s(x)=\xi_N^{(x^2+(N\,\mathrm{mod}\, 2 )x)/2}, 0\leq x<N$.
Compared with GCL--ZCZ  sequences $\textbf{z} $, there are two new sequences $\s \circ \pi_1$ and $\s \circ \pi_2$ with $\pi_1=x^7+10x$
\vspace{2mm}
and $\pi_2=x^7+15x$.
For $\s \circ \pi=(\xi_N^{e_0}, \xi_N^{e_1},...,\xi_N^{e_{N-1}})$, define its
exponent sequence
as $\textbf{e}=(e_0,e_1,...,e_{N-1})$.
Then the exponent sequences $\textbf{e}$ and $\textbf{e}'$ of
 $\s \circ \pi_1$ and  $\s \circ \pi_2$, respectively, can be expressed as follows:
 \begin{align*}
\textbf{e}=(&  0, 31,  1, 8, 10, 0, 6,
21, 3, 15, 15, 31, 21, 13, 0, 10, 1, 1, \\
&  3, 0, 20,
21, 31, 8, 15, 10, 21, 6,
28, 10, 15, 1, 31, 28, 20),\\
\vspace{1mm}
\textbf{e}'=(&  0, 31,31, 28, 15,20, 1,
 21, 3, 10, 0, 1, 6, 8,
0, 10, 31,  21,\\
&   8, 20, 15, 21, 31, 3, 0, 15, 6, 1,
28, 10, 10, 21, 1, 13, 15).
\end{align*}

Below, we use the interleaved sequence  $\s \circ \pi_1$ as an example to demonstrate the two observations in the proof sketch of Proposition~\ref{corre_thm}.

For the interleaved sequence  $\s \circ \pi_1$, consider the shift $d=2$,
then
	$\theta(2) =\sum^{N-1}\limits_{k=0} \zeta_{35}^{ h_2(k)}$
with
$$
	\begin{array}{c}
		h_2(k) = \sum\limits_{1\leq i \leq 14} \binom{14}{i}2^{i-1}k^{14-i}   +\sum\limits_{1\leq i \leq 7} \binom{7}{i}2^{i-1}k^{7-i}  + 10 \sum\limits_{1\leq i \leq 8} \binom{8}{i}2^{i}k^{8-i}  +25k,
	\end{array}
	$$
From the expression of $h_2(k)$, we can see that $h_2(k+35) \equiv h_2(k)  \ (\mathrm{mod}\,\, 35)$.
Thus $(h_2(k))_{k\geq 0}$ modulo $35$ is a sequence of period $35$, which is precisely given by
$$
\begin{aligned}
(h_2(k))_{0\leq k<35} = & (1, 12, 9, 27, 31, 21, 32, 29, 12, 16, 6, 17, 14, 32, 1, 26, 2,   \\ & \hspace{1mm} 34, 17, 21, 11, 22, 19, 2, 6, 31, 7, 4, 22, 26, 16, 27, 24, 7, 11).
\end{aligned}
$$
Moreover, for $t = \frac{2N}{\gcd(d,N)p}=10 $, we have
$$
\begin{aligned}
(h_2(k+10))_{0\leq k<35} =&
( 6, 17, 14, 32, 1, 26, 2, 34, 17, 21, 11, 22, 19, 2, 6, 31, 7, \\ & \hspace{1mm}  4, 22, 26, 16, 27, 24, 7, 11, 1, 12, 9, 27, 31, 21, 32, 29, 12, 16),
\end{aligned}
$$ which can be verified by
$$
  h_2(k+10)-h_2(k)=5 \,\,(\mathrm{mod}\,\,\, 35).
$$
Thus
$\sum^{N-1}\limits_{k=0} \zeta_{N}^{ h_2(k)}=\sum^{N-1}\limits_{k=0} \zeta_{N}^{ h_2(k+t)} =  \zeta_N^5 \sum\limits_{k=0}^{N-1} \zeta_N^{h_2(k)} $, implying that
$\theta(2) =\sum^{N-1}\limits_{k=0} \zeta_{N}^{ h_2(k)}=0$.
As a matter of fact, according to the relation $h_2(k+35) \equiv h_2(k)  \ (\mathrm{mod}\,\, 35)$ and $ h_2(k+10)-h_2(k) = 5  \ (\mathrm{mod}\,\, 35)$,
one can re-arrange $(h_2(k))_{0\leq k<35}$ as
$$
\begin{aligned}
(&1, 6, 11, 16, 21, 26, 31, 2, 7, 12, 17, 22, 27, 32,  \\  &4, 9,14, 19, 24, 29, 34, 2, 7, 12, 17, 22, 27, 32, \\ &1, 6, 11, 16, 21, 26, 31),
\end{aligned}
$$
indicating that
$$
\sum^{N-1}\limits_{k=0} \zeta_{N}^{ h_2(k)} = 2 \zeta_{N} \sum_{i=0}^{p-1} \zeta_p^i + 2 \zeta_{N}^2 \sum_{i=0}^{p-1} \zeta_p^i + \zeta_{N}^4 \sum_{i=0}^{p-1} \zeta_p^i = 0,
$$ where $\zeta_p$ is a $p$-th root of unity, which satisfies $\sum_{i=0}^{p-1} \zeta_p^i = 0$.
It is consistent with the proof of Proposition \ref{corre_thm}.
\end{remark}


Below we give an explicit  class of PPs over $\mathbb{Z}_N$ of the form given in Proposition \ref{corre_thm}.

\begin{lem}\label{lem_2}
Let $p$ be a prime,  $n$ be a positive integer and $N=p^nN_1$ with $\gcd(p,N_1)=1$.
Suppose $N_1=1$ or $N_1=q_1 q_2 \dots q_r$ where $q_1,\dots, q_r$ are different primes and each $q_i$ for $1\leq i\leq r$ satisfies  $\gcd(p,q_i-1)=1$.
Then the polynomial $\pi(x)=x^p+a x+b$ in $\mathbb{Z}_{N}[x]$ permutes $\mathbb{Z}_{N}$	
 if $a \not\equiv 0,-1 \,\,(\mathrm{mod}\,\, p)$ and $N_1\,|\,a$.
In particular, when $p \equiv 1 \, (\mathrm{mod}\,\, q_i -1)$ for each $i$,
the polynomial	$\pi(x)$ permutes $\mathbb{Z}_{N}$ if and only if
	$a \not\equiv 0,-1 \,\,(\mathrm{mod}\,\, p)$ and $a \not\equiv -1 \,\,(\mathrm{mod}\,\, q_i)$ for each $1\leq i\leq r$.
\end{lem}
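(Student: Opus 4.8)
The plan is to decompose the modulus with the Chinese‑remainder criterion of Lemma~\ref{N=p^n}(ii): writing $N=p^nN_1$ with $N_1=q_1\cdots q_r$ (the case $N_1=1$ being just the factor $p^n$), $\pi$ permutes $\mathbb{Z}_N$ if and only if $\pi$ permutes $\mathbb{Z}_{p^n}$ and permutes $\mathbb{Z}_{q_i}$ for every $i$. So I would settle these $r+1$ conditions separately, exploiting that the degree $p$ of $\pi$ equals the characteristic modulo $p$ but is coprime to $q_i-1$.

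For the factor $p^n$: since $x^p\equiv x\pmod p$ by Fermat's little theorem, $\pi(x)\equiv (a+1)x+b\pmod p$, so $\pi$ permutes $\mathbb{Z}_p$ exactly when $\gcd(a+1,p)=1$, i.e.\ $a\not\equiv -1\pmod p$. When $n\ge 2$, Lemma~\ref{N=p^n}(i) additionally requires $\pi'(k)\not\equiv 0\pmod p$ for all $k$; as $\pi'(x)=p\,x^{p-1}+a\equiv a\pmod p$, this is exactly $a\not\equiv 0\pmod p$. Hence $\pi$ permutes $\mathbb{Z}_{p^n}$ iff $a\not\equiv 0,-1\pmod p$ (for $n\ge 2$; only $a\not\equiv -1\pmod p$ is needed if $n=1$), so the hypothesis $a\not\equiv 0,-1\pmod p$ takes care of the $p^n$-part.

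For each factor $q_i$: under the first set of hypotheses $N_1\mid a$, so $q_i\mid a$ and $\pi(x)\equiv x^p+b\pmod{q_i}$. The map $x\mapsto x^p$ fixes $0$ and acts on $\mathbb{F}_{q_i}^{\ast}$ (cyclic of order $q_i-1$) as the $p$-th power endomorphism, which is bijective precisely when $\gcd(p,q_i-1)=1$; this is assumed, so $\pi$ permutes $\mathbb{Z}_{q_i}$. Combining with the $p^n$-part via Lemma~\ref{N=p^n}(ii) gives the sufficiency claim.

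For the ``in particular'' statement I would replace bijectivity of $x\mapsto x^p$ on $\mathbb{F}_{q_i}$ by the stronger fact that $x^p=x$ for \emph{every} $x\in\mathbb{F}_{q_i}$, which follows from $p\equiv 1\pmod{q_i-1}$ (write $p=1+(q_i-1)m$); then $\pi(x)\equiv (a+1)x+b\pmod{q_i}$ with no need for $q_i\mid a$, so $\pi$ permutes $\mathbb{Z}_{q_i}$ iff $a\not\equiv -1\pmod{q_i}$. Feeding this and the $p^n$-analysis back into Lemma~\ref{N=p^n}(ii) yields the stated equivalence (in the regime $n\ge 2$; note $q_i-1\mid p-1$ also forces $\gcd(p,q_i-1)=1$, so this case sits inside the general hypothesis). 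The arithmetic is all elementary; the parts that need a little care are applying the derivative criterion of Lemma~\ref{N=p^n}(i) correctly and noticing that $q_i-1\mid p-1$ upgrades ``$x\mapsto x^p$ bijective on $\mathbb{F}_{q_i}$'' to ``$x\mapsto x^p$ the identity on $\mathbb{F}_{q_i}$'', which is exactly what makes the $N_1\mid a$ hypothesis unnecessary in the converse direction.
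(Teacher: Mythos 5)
Your proof is correct and follows essentially the same route as the paper: decompose via Lemma~\ref{N=p^n}(ii), handle $\mathbb{Z}_{p^n}$ by reducing $\pi$ to $(a+1)x+b$ modulo $p$ and checking $\pi'\equiv a\pmod p$, and handle each $\mathbb{Z}_{q_i}$ by reducing $\pi$ to $x^p+b$ (or to $(a+1)x+b$ in the ``in particular'' case). Your aside that only $a\not\equiv-1\pmod p$ is needed when $n=1$ is a point the paper's proof glosses over, but otherwise the two arguments coincide.
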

\begin{proof}
According to Lemma \ref{N=p^n} (ii), $\pi(x)$ is a PP  over  $\mathbb{Z}_{N}$ if and only if  $\pi(x)$ permutes $\mathbb{Z}_{p^n}$ and $\mathbb{Z}_{q_i}$ for $i=1,2,\dots,r$.
	For $x \in \mathbb{Z}_{p}$, $\pi(x)=x^p+a x+b \equiv (a+1)x+b  \,\,(\mathrm{mod}\,\, p)$ permutes $\mathbb{Z}_{p}$
	if and only if $a \not\equiv -1 \,\,(\mathrm{mod}\,\, p)$.
	In addition, for $x \in \mathbb{Z}_{p^n},$ we have $\pi(x)'=px^{p-1}+a  \equiv a\,\,(\mathrm{mod}\,\, p)$.
	Thus
	according to Lemma \ref{N=p^n} (i), $\pi(x)=x^p+ax$ is a PP over $\mathbb{Z}_{p^n}$
	if and only if $a \not\equiv 0,-1 \,\,(\mathrm{mod}\,\, p)$.
Moreover, for any $i$, let $x\in \mathbb{Z}_{q_i}$, since $q_i|a$ and $\gcd(p,q_i-1)=1$,
the polynomial	$\pi(x)=x^p+a x+b\equiv  x^p+b \,\,(\mathrm{mod}\,\,q_i )$ is a PP over $\mathbb{Z}_{q_i}$.  Combining these, the first statement follows from Lemma \ref{N=p^n} (ii).

In particular, when $p \equiv 1 \, (\mathrm{mod}\,\, q_i -1)$ for all $i$,
 the polynomial $\pi(x) =x^p+a x+b \equiv (a+1)x+b  \,\,(\mathrm{mod}\,\, q_i)$. Thus $\pi$ is a PP over $\mathbb{Z}_{q_i}$ if and only if
$a \not\equiv -1 \,\,(\mathrm{mod}\,\, q_i)$.
And recall that $\pi(x)=x^p+ax$ is a PP over $\mathbb{Z}_{p^n}$
	if and only if $a \not\equiv 0,-1 \,\,(\mathrm{mod}\,\, p)$.
Then the second statement follows.
\end{proof}

\begin{remark}
Under the condition $p \geq 3 $, one can always find an integer
$a$ satisfying
$a \not\equiv 0, 1 \ \, (\mathrm{mod} \ \, p) $
such that the polynomial $\pi(x) = x^p+a x+b\,\,(\mathrm{mod}\,\, N) $ is a PP.
When $n=1$, the condition  $a \not\equiv 0,-1 \,\,(\mathrm{mod}\,\, p)$, for $\pi(x)$ to be a PP, is reduced to  $a \not\equiv -1 \,\,(\mathrm{mod}\,\, p)$.
\end{remark}

Based on Proposition  \ref{corre_thm}, we see that the auto-correlation of permutation-interleaved sequences is zero in many shifts.
Moreover, this construction framework  can generate ZCZ sequences, even  CAZAC sequences.
Then it suffices to find more PPs which can be used in Proposition  \ref{corre_thm}.
Here we propose an open problem and cordially invite interested readers to attack this problem.

\begin{Open problem}
For any positive integer $N$ and prime $p$ with $p\,|\,N$, find more PPs in $\mathbb{Z}_{N}[x]$  having the form as $\pi(x) = x^{p}+a x+b$.
\end{Open problem}

\section{Permutation-interleaved ZC sequences with the ZAC property}\label{Sec_New_4}
For proposed PPs  and any PP in $\mathbb{Z}_{2^n}[x]$,
this section constructs some CAZAC sequences  by using these PPs or their inverses to
interleave ZC sequences.

\begin{thm}\label{mainresult}
		Let $n\geq 2$ be an integer, $p\geq 3$ be a prime and $N=p^nN_1$ with $N_1=1$ or $N=q_1q_2\cdots q_r$ as given in Lemma \ref{lem_2}.
Let $\s$ be a ZC sequence of length $N$ and $\pi(x)=x^p+ax+b$ be a PP over $\mathbb{Z}_N$ in Lemma \ref{lem_2}. Then

	\noindent{\rm (i)}	$\s \circ \pi$ is a CAZAC sequence
	
	\noindent{\rm (ii)}	$\s \circ \pi^{-1} $ is a CAZAC sequence.
\end{thm}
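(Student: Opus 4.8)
The constant-amplitude condition is automatic in both parts, since every entry of $\s\circ\pi$ and of $\s\circ\pi^{-1}$ is a root of unity; so what must be shown is an out-of-phase cancellation of an exponential sum. Write $g(k):=f(\pi(k))$, so that by~\eqref{core_d} the periodic auto-correlation of $\s\circ\pi$ at a shift $d$ is $\theta(d)=\sum_{k\in\mathbb{Z}_N}\xi_N^{g(k)-g(k+d)}$.

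\emph{Part (i).} For $0<d<N$ I would split on whether $p^{n}\mid d$. If $p^{n}\nmid d$, then $\theta(d)=0$ is exactly Proposition~\ref{corre_thm} applied to the prime $p$ (which has exponent $n$ in $N$ and for which $\pi(x)=x^{p}+ax+b$ has the required shape $x^{q_i}+ax+b$); in particular, when $N_1=1$ this settles every $d$. So suppose $N_1>1$ and $p^{n}\mid d$, put $\delta=\gcd(d,N)$ and $M=N/\delta>1$, and note $\delta=p^{n}\prod_{q_i\mid d}q_i$, $M=\prod_{q_i\nmid d}q_i$. The plan is to prove that, as $k$ runs over $\mathbb{Z}_N$, the value $g(k+d)-g(k)\bmod N$ always lies in the subgroup $\delta\mathbb{Z}_N\cong\mathbb{Z}_M$ and attains each of its $M$ elements exactly $\delta$ times; granting this, $\theta(d)=\delta\sum_{m=0}^{M-1}(\xi_N^{\delta})^{-m}=0$ since $\xi_N^{\delta}$ is a primitive $M$-th root of unity and $M>1$. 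I would establish the counting claim through the Chinese Remainder Theorem on $\mathbb{Z}_N\cong\mathbb{Z}_{p^{n}}\times\prod_i\mathbb{Z}_{q_i}$: (a) $p^{n}\mid d$ forces $\pi(k+d)\equiv\pi(k)\pmod{p^{n}}$, and since $f$ is a quadratic-phase polynomial and $p$ is odd, $g(k+d)-g(k)\equiv0\pmod{p^{n}}$ for all $k$; (b) for $q_i\mid d$ the same reasoning gives $g(k+d)-g(k)\equiv0\pmod{q_i}$; (c) for $q_i\nmid d$, the hypotheses of Lemma~\ref{lem_2} (via $q_i-1\mid p-1$, so $x^{p}\equiv x\pmod{q_i}$, and $a\not\equiv-1\pmod{q_i}$) force $\pi$ to reduce modulo $q_i$ to the degree-one permutation $(a+1)x+b$, whence $g(k)\bmod q_i$ is a quadratic in $k$ with leading coefficient coprime to $q_i$, and $k\mapsto(g(k+d)-g(k))\bmod q_i$ is a non-constant affine map of $\mathbb{Z}_{q_i}$, hence a bijection. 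Combining (a)--(c) via CRT shows the values land in $\delta\mathbb{Z}_N$ and that the induced map $\mathbb{Z}_M\to\mathbb{Z}_M$ is a product of bijections, hence bijective; this is the "bijective first-order difference on $\mathbb{Z}_M$" statement referred to earlier (cf. Prop.~\ref{prop_PP}).

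\emph{Part (ii).} Since $\s\circ\pi^{-1}$ is CA, it is CAZAC iff its normalized DFT is CA. Substituting $k=\pi(l)$ in the DFT and using that $\pi$ permutes $\mathbb{Z}_N$, $\widehat{(\s\circ\pi^{-1})}(j)=\tfrac1{\sqrt N}\sum_{l\in\mathbb{Z}_N}\xi_N^{f(l)+\varepsilon j\pi(l)}$ with $\varepsilon=\pm1$ the sign of the DFT kernel, so it suffices to prove $\bigl|\sum_{l\in\mathbb{Z}_N}\xi_N^{F_j(l)}\bigr|^{2}=N$ for all $j$, where $F_j(l):=f(l)+\varepsilon j\pi(l)$. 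Expanding the modulus square and reindexing by the difference $e$ of the two summation variables gives $\bigl|\sum_l\xi_N^{F_j(l)}\bigr|^{2}=\sum_{e\in\mathbb{Z}_N}\Xi_j(e)$ with $\Xi_j(e)=\sum_{l\in\mathbb{Z}_N}\xi_N^{F_j(l+e)-F_j(l)}$ and $\Xi_j(0)=N$, so it remains to show $\Xi_j(e)=0$ for every $e\neq0$. Since $f(l+e)-f(l)$ is affine in $l$ and every monomial of $\pi(l+e)-\pi(l)=(l+e)^{p}-l^{p}+ae$ is divisible by $e$, one has $F_j(l+e)-F_j(l)=e\,R_{j,e}(l)+c(j,e)$ for a constant $c(j,e)$ and a polynomial $R_{j,e}(x)=ux+\varepsilon j\,\tfrac{(x+e)^{p}-x^{p}}{e}+\varepsilon ja\in\mathbb{Z}[x]$, so $\Xi_j(e)=\xi_N^{c(j,e)}\gcd(e,N)\sum_{l\in\mathbb{Z}_{M'}}\xi_{M'}^{R_{j,e}(l)}$ with $M'=N/\gcd(e,N)>1$. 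The point is that $R_{j,e}$ permutes $\mathbb{Z}_{M'}$: writing $M'=p^{n'}M'_1$ with $M'_1\mid N_1$, modulo $p$ one has $\tfrac{(x+e)^{p}-x^{p}}{e}\equiv e^{p-1}$ (a constant) by Lemma~\ref{parity}, so $R_{j,e}\equiv ux+\mathrm{const}\pmod p$ is degree one and $R_{j,e}'\equiv u\not\equiv0\pmod p$, whence $R_{j,e}$ permutes $\mathbb{Z}_{p^{n'}}$ by Lemma~\ref{N=p^n}(i); and modulo any $q_i\mid M'_1$, $q_i-1\mid p-1$ gives $x^{p}\equiv x$, so $\tfrac{(x+e)^{p}-x^{p}}{e}\equiv1$ and $R_{j,e}\equiv ux+\mathrm{const}\pmod{q_i}$ is again a degree-one permutation. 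By Lemma~\ref{N=p^n}(ii), $R_{j,e}$ permutes $\mathbb{Z}_{M'}$, so $\sum_{l\in\mathbb{Z}_{M'}}\xi_{M'}^{R_{j,e}(l)}=\sum_{m\in\mathbb{Z}_{M'}}\xi_{M'}^{m}=0$, giving $\Xi_j(e)=0$ and $\bigl|\sum_l\xi_N^{F_j(l)}\bigr|^{2}=N$, as required.

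\emph{Main obstacle.} Beyond the (routine) care needed when $N$ is even, i.e. when some $q_i=2$ — there $f$ and $F_j$ take values in $\tfrac12\mathbb{Z}$ and one must pass to $\xi_{2N}$ and the doubled exponents $2f,2F_j$ before reducing modulo prime-power divisors, the divisibility facts surviving the doubling — the substantive step is the first-difference bijectivity claim in part (i): one must verify that for every $q_i\nmid d$ the map $k\mapsto(g(k+d)-g(k))\bmod q_i$ is a non-degenerate affine map of $\mathbb{Z}_{q_i}$, which is precisely where the structural input "$\pi$ reduces to a degree-one permutation polynomial modulo each $q_i$" is indispensable; this is the $q_i-1\mid p-1$ branch of Lemma~\ref{lem_2} (equivalently, the relevant row of Table~\ref{tab_unique_seq}), and without it — for a composite $N_1$ with only $N_1\mid a$ — the ZAC property can already break down at shifts $d$ with $p^{n}\mid d$.
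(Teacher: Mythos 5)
Your proposal is correct and follows essentially the same route as the paper: part (i) reduces to showing that the first-order difference of the interleaved phase, divided by $\gcd(d,N)$, permutes $\mathbb{Z}_{N/\gcd(d,N)}$ (the paper's Prop.~\ref{prop_PP}, which you recover componentwise via CRT, phrased as equidistribution over $\delta\mathbb{Z}_N$), and part (ii) applies the same first-difference/PP argument to the DFT-dual phase $f(l)+\varepsilon j\pi(l)$, exactly as in the paper's $M(d)$ computation and its Appendix D lemma. The only differences are organizational (you delegate the shifts with $p^{n}\nmid d$ to Prop.~\ref{corre_thm}) plus the deferred bookkeeping for even $N$, which the paper carries out explicitly in Subcase 2.2 of the proof of Prop.~\ref{prop_PP}.
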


\begin{thm}\label{mainresult2}
	For $N=2^n$ with $n\geq 2$,
let $\s$ be a ZC sequence of length $N$ and  $\pi(x)=a_mx^m + a_{m-1}x^{m-1} + \dots + a_2x^2 + a_1x + a_0$ be a  PP over $\mathbb{Z}_N$ given in Lemma \ref{N=2^n},
then

\noindent{\rm (i)}	$\s \circ \pi$  is a CAZAC sequence.

\noindent{\rm (ii)}	 $\s \circ \pi^{-1}$ is a CAZAC sequence.
\end{thm}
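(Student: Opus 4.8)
The \textbf{(CA)} property is immediate, as $\s\circ\pi$ and $\s\circ\pi^{-1}$ are rearrangements of the unimodular sequence $\s$; so the content is \textbf{(ZAC)}. I would first observe that (ii) reduces to (i): $\pi^{-1}$ is again a permutation of $\mathbb Z_{2^n}$ and, as recalled in Section~\ref{SEC_2.1}, it is represented by a polynomial over $\mathbb Z_{2^n}$, which is therefore itself a PP covered by Lemma~\ref{N=2^n}; applying (i) to this polynomial yields that $\s\circ\pi^{-1}$ is CAZAC (alternatively one invokes the DFT argument used for the inverse case of Theorem~\ref{mainresult}). So it suffices to prove (i). Fix $0<d<N=2^n$, write $d=2^t d'$ with $d'$ odd and $0\le t\le n-1$, and set $N_1:=N/\gcd(d,N)=2^{n-t}$, $t_0:=N_1/2$. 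Since $N$ is even, \eqref{core_d} with $l=0$ becomes, after passing to $2N$-th roots to clear the half-integer exponents, $\theta(d)=\sum_{k=0}^{N-1}\zeta^{\,G_d(k)}$ with $G_d(k):=\pi(k+d)^2-\pi(k)^2\in\mathbb Z$ and $\zeta$ a primitive $2N$-th root of unity satisfying $\zeta^2=\zeta_N$, so that $\zeta^N=-1$ (here $\gcd(u,N)=1$ forces $u$ odd). Following the mechanism of the proof sketch of Proposition~\ref{corre_thm}, I will establish (a) that $G_d$ has period $N_1$ modulo $2N$, and (b) that $G_d(k+t_0)-G_d(k)\equiv N\pmod{2N}$ for every $k$; together these give $\theta(d)=\gcd(d,N)\sum_{k=0}^{N_1-1}\zeta^{G_d(k)}=\gcd(d,N)\sum_{k=0}^{N_1-1}\zeta^{G_d(k+t_0)}=\zeta^N\theta(d)=-\theta(d)$, hence $\theta(d)=0$.

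The engine of (a) and (b) is a parity fact, where the coefficient conditions of Lemma~\ref{N=2^n} enter: for every integer $e\neq0$ and every $k$, the normalized difference $\delta_e^{*}(k):=\bigl(\pi(k+e)-\pi(k)\bigr)/e$ is an \emph{odd} integer. Indeed, modulo $2$ one finds $\delta_e^{*}(k)\equiv\pi'(k)$ if $e$ is even and $\delta_e^{*}(k)\equiv\sum_{i\ge1}a_i$ if $e$ is odd (using $(x+1)^i\equiv x^i+1\pmod2$ for $i\ge1$), and both equal $a_1\equiv1\pmod2$ because $a_1$ is odd and $a_2+a_4+\cdots\equiv a_3+a_5+\cdots\equiv0\pmod2$. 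Running the same computation on $\pi'$ — whose quotients $(\pi'(k+e)-\pi'(k))/e$ come out \emph{even}, since $\pi''$ has all-even coefficients and $\sum_{i\ge3,\,i\text{ odd}}a_i$ is even — gives, by one more iteration, that $\bigl(\delta_e^{*}(k+d)-\delta_e^{*}(k)\bigr)/d$ is even for every $k$. In particular $\delta_d(k):=\pi(k+d)-\pi(k)=d\,\delta_d^{*}(k)$ has $2$-adic valuation exactly $t$, and $\delta_{t_0}(k):=\pi(k+t_0)-\pi(k)=t_0\,\delta_{t_0}^{*}(k)$ has valuation exactly $n-t-1$.

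With this, (a) and (b) follow from a $2$-adic valuation count that repeatedly uses that any first difference $\Delta_e(\cdot)$ of a polynomial is divisible by $e$. For (a): in the expansion $\pi(k+N_1)=\pi(k)+N_1\pi'(k)+N_1^2(\cdots)$ each summand of $G_d(k+N_1)-G_d(k)$ carries enough factors of $2$ — from the leading ``$2$'', from $N_1=2^{n-t}$ together with the factor $d=2^td'$ supplied by differencing at $d$, or the extra factor $2$ from the ``even'' quotient above — to land in $2^{n+1}\mathbb Z$. For (b): writing $u=\delta_{t_0}(k)$, $w=\delta_{t_0}(k+d)$, one has $G_d(k+t_0)-G_d(k)=2w\,\delta_d(k)+2\pi(k)(w-u)+(w-u)(w+u)$, where $w-u=\Delta_d[\delta_{t_0}](k)=t_0 d\cdot(\text{even})$ has valuation $\ge n$ by the iterated parity fact; hence the last two terms lie in $2N\mathbb Z$, while $2w\,\delta_d(k)=2\cdot2^{\,n-t-1}\cdot2^{\,t}d'\cdot(\text{odd}\cdot\text{odd})=N\cdot(\text{odd})\equiv N\pmod{2N}$. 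This gives $\theta(d)=0$ for all $d$, proving (i), hence (ii).

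The delicate point — and the main obstacle — is exactly this $2$-adic bookkeeping in (a) and (b): one must verify that, once one simultaneously accounts for the powers of $N_1$ (resp.\ $t_0$) from $\pi(k+N_1)-\pi(k)$ (resp.\ $\pi(k+t_0)-\pi(k)$), the factor $d$ from differencing at shift $d$, and the extra factors of $2$ furnished by the iterated parity lemma, \emph{every} spurious term falls in $2^{n+1}\mathbb Z$ while a \emph{single} term survives, contributing exactly $N$. All of this ultimately rests on the coefficient conditions of Lemma~\ref{N=2^n} being inherited by successive finite differences of $\pi,\pi',\pi'',\dots$ — precisely the ``bijective property of the first-order finite difference'' referred to in the Introduction. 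The border case $t=n-1$ (so $N_1=2$, $t_0=1$) and small $n\in\{2,3\}$ are covered by the same estimates and are quickest to confirm directly.
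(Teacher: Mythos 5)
Your proposal is correct, and it takes a genuinely different route from the paper for both halves. For (i), the paper (Appendix C) writes $\theta(d)=\zeta_N^{\pi^2(d)/2}\sum_{k}\zeta_{N_1}^{H_d(k)/\gcd(d,N)}$ with $H_d(k)=\tfrac{1}{2}\bigl(\pi^2(k+d)-\pi^2(k)-\pi^2(d)\bigr)$ and $N_1=N/\gcd(d,N)$, and proves that $H_d(k)/\gcd(d,N)$ is a PP over $\mathbb{Z}_{N_1}$ via the lifting criterion of Lemma~\ref{N=p^n}(i) (bijective mod $2$ plus nonvanishing derivative mod $2$), so the sum vanishes as a sum of $\zeta_{N_1}^{\sigma(k)}$ over a permutation $\sigma$. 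You instead run the mechanism of Proposition~\ref{corre_thm}/Appendix B: periodicity of $G_d$ modulo $2N$ with period $N_1$, plus the half-period identity $G_d(k+N_1/2)-G_d(k)\equiv N\pmod{2N}$, forcing $\theta(d)=-\theta(d)$. I verified your two parity facts --- $(\pi(k+e)-\pi(k))/e$ is always odd and $\Delta_{d,e}\pi(k)/(de)$ is always even, both consequences of $a_1\equiv 1$ and $\sum_i a_{2i}\equiv\sum_i a_{2i+1}\equiv 0\pmod 2$ --- as well as the valuation counts in your steps (a) and (b), including the decisive term $2w\,\delta_d(k)=2\cdot 2^{n-t-1}\cdot 2^{t}d'\cdot(\text{odd})\cdot(\text{odd})\equiv N\pmod{2N}$; they all go through. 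The two arguments are of comparable weight: yours trades the Hardy--Wright lifting lemma for explicit $2$-adic bookkeeping. For (ii) your route is genuinely shorter than the paper's: since Lemma~\ref{N=2^n} characterizes \emph{all} polynomial PPs over $\mathbb{Z}_{2^n}$ and $\pi^{-1}$ admits a polynomial representation (as recorded in Section~\ref{SEC_2.1}), $\pi^{-1}$ automatically satisfies the hypotheses of part (i), so (ii) is an immediate corollary of (i) --- whereas the paper redoes a full DFT computation in Appendix E. (That reduction is unavailable for Theorem~\ref{mainresult}, whose part (i) only covers the special shape $x^p+ax+b$, which is presumably why the paper keeps the DFT argument; but for $N=2^n$ it is legitimate and cleaner.)
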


%

The rest of this section is devoted to proving Theorems \ref{mainresult} and \ref{mainresult2}.
 The ZAC properties of interleaved ZC sequences in Theorem \ref{mainresult}  (i) and Theorem \ref{mainresult2} (i) will be proved by an observation that the involved polynomials are PPs; and the ZAC properties of interleaved ZC sequences in Theorem \ref{mainresult} (ii) and Theorem \ref{mainresult2} (ii) will be studied in terms of their DFT sequences. Thus we treat them separately in the subsequent subsections.

\subsection{Proofs of Theorems \ref{mainresult} (i) and \ref{mainresult2} (i) }\label{Sec4_1_new}

 We first present a proposition, which lays a foundation for the proof of Theorem \ref{mainresult} (i).

\begin{prop}\label{prop_PP}
	Let $h_d(k)$ be as in \eqref{h_d(k_form)} and denote
$H_d(k) = h_d(k) - h_d(0)$.
Then under the conditions of Theorem \ref{mainresult}, the  polynomial  $H_d(k)/\gcd(d,N)$ is a PP over $\mathbb{Z}_{N/\gcd(d,N)}$.
\end{prop}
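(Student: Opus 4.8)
The plan is to read the conclusion ``$H_d(k)/\gcd(d,N)$ is a PP over $\mathbb{Z}_{N/\gcd(d,N)}$'' as the assertion that, with $g=\gcd(d,N)$ and $N'=N/g$, the map $k\mapsto H_d(k)/g$ is well defined and bijective on $\mathbb{Z}_{N'}$; this is exactly what is needed in Section~\ref{Sec4_1_new}, because then $\sum_{k=0}^{N-1}\zeta_N^{H_d(k)}=\sum_{k=0}^{N-1}\zeta_{N'}^{H_d(k)/g}$ runs $g$ times through all $N'$-th roots of unity, hence vanishes whenever $N'>1$ (i.e. $0<d<N$), giving $\theta(d)=0$. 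I would start from the structural fact that $H_d(k)$ (defined through \eqref{h_d(k_form)} as $h_d(k)-h_d(0)$) equals, modulo $N$, the finite difference $H_d(k)=\Delta_{d,k}(G\circ\pi)(0)=G(\pi(d+k))-G(\pi(d))-G(\pi(k))+G(\pi(0))$, where $G(m)=\tfrac12\bigl(m^{2}+(N\bmod 2)m\bigr)$ and $\pi(x)=x^{p}+ax$ (taking $b=0$, a harmless translation by \eqref{mathe_opera}; the root index $u$ does not enter $H_d$). Writing $\pi(d+k)=\pi(d)+\pi(k)+E_d(k)$ with $E_d(k)=(d+k)^{p}-d^{p}-k^{p}$ gives the identity
\begin{equation*}
H_d(k)=\pi(d)\pi(k)+\bigl(\pi(d)+\pi(k)\bigr)E_d(k)+\tfrac12 E_d(k)\bigl(E_d(k)+(N\bmod 2)\bigr),
\end{equation*}
where by Lemma~\ref{parity}(i) $p\mid E_d(k)$, and in fact $p^{\,1+v_p(d)}\mid E_d(k)$.

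I would then split by CRT. Write $g=p^{s}g_1$ with $s=v_p(g)=\min(v_p(d),n)$ and $g_1\mid N_1=q_1\cdots q_r$, so that $N'=p^{\,n-s}\prod_{q_i\nmid g_1}q_i$; it suffices to prove well-definedness and bijectivity of $k\mapsto H_d(k)/g$ modulo each of these prime powers. For a prime $q_i\mid N'$ (equivalently $q_i\nmid d$, hence $q_i\nmid g$): since $N_1\mid a$ we have $\pi(x)\equiv x^{p}\pmod{q_i}$, and since $(q_i-1)\mid(p-1)$ we have $x^{p}\equiv x\pmod{q_i}$ as functions, so $\pi$ is affine modulo $q_i$ and therefore $H_d(k)\equiv(a+1)^{2}dk\pmod{q_i}$ (a second difference of a quadratic). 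Because $q_i\nmid(a+1)$ (from $a\not\equiv-1\pmod{q_i}$), $q_i\nmid d$ and $q_i\nmid g$, the reduction $H_d(k)/g\equiv g^{-1}(a+1)^{2}dk\pmod{q_i}$ is a nonzero linear polynomial in $k$, hence a PP over $\mathbb{Z}_{q_i}$, with evident period $q_i$.

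For the $p$-part $p^{\,n-s}$ (relevant only when $s<n$): if $s=0$, then $g$ is a $p$-adic unit and the $p$-component of $N'$ is $p^{n}$; reducing $G\circ\pi$ and its formal derivative modulo $p$ (again $x^{p}\equiv x$, so $\pi$ is affine) gives, as functions of $k$, $H_d(k)\equiv(a+1)^{2}dk\pmod p$ and $H_d'(k)\equiv a(a+1)d\pmod p$, both nonzero since $a\not\equiv 0,-1\pmod p$ and $p\nmid d$, so Lemma~\ref{N=p^n}(i) (available as $n\ge 2$) shows $H_d$, hence $H_d/g=g^{-1}H_d$, is a PP over $\mathbb{Z}_{p^{n}}$, periodicity being automatic since $p^{n}\mid N'$. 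If $1\le s<n$, the valuations $p^{\,s+1}\mid E_d(k)$ and $v_p(\pi(d))=s$ (as $v_p(ad)=s<ps=v_p(d^{p})$), inserted into the displayed identity, yield $p^{s}\mid H_d(k)$ for all $k$ with $H_d(k)/p^{s}\equiv a(d/p^{s})(a+1)k\pmod p$ and $\bigl(H_d(k)/p^{s}\bigr)'\equiv a^{2}(d/p^{s})\pmod p$, whence Lemma~\ref{N=p^n}(i) makes $H_d/p^{s}$, and thus $H_d/g=g_1^{-1}(H_d/p^{s})$, a PP over $\mathbb{Z}_{p^{\,n-s}}$ (the cases $n-s\in\{0,1\}$ needing only the vacuous statement over $\mathbb{Z}_1$, resp. the $\mathbb{Z}_p$-statement just established). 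Well-definedness modulo $p^{n}$, i.e. $p^{n}\mid H_d(k+N')-H_d(k)=\Delta_{N',d}(G\circ\pi)(k)$, I would obtain from the same expansion once one checks that the iterated difference $\Delta_{N',d}\pi(k)$ has $p$-valuation at least $n+1$, so that the only surviving term has valuation exactly $n$. CRT then reassembles the claim.

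The step I expect to be the main obstacle is the $p$-power bookkeeping for $1\le s<n$: pinning down the exact $p$-adic valuations of $\pi(d)$, $E_d$ and $\Delta_{N',d}\pi$, verifying that ``$H_d/p^{s}$'' is genuinely represented by a polynomial over $\mathbb{Z}_{p^{\,n-s}}$ so that Lemma~\ref{N=p^n}(i) (stated for polynomials) applies --- the factor $\tfrac12$ causing no trouble since $p$ is odd --- and handling the boundary cases of the degree hypothesis in that lemma. I would also remark that for the ZAC application one may first invoke Proposition~\ref{corre_thm}, which already disposes of every shift with $p^{n}\nmid d$; only $s=n$ then remains, for which $N'$ is coprime to $p$, the whole $p$-power analysis is vacuous, and the linear-polynomial argument modulo the $q_i$ alone finishes the proof.
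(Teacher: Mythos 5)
Your proposal follows essentially the same route as the paper's proof: a CRT decomposition of $N/\gcd(d,N)$ into its $p$-power and $q_i$ components, reduction of $H_d(k)/\gcd(d,N)$ to a linear polynomial modulo $p$ and modulo each $q_i$ (with the same leading coefficients $(a+1)^2d$, resp.\ $a(a+1)\tfrac{d}{p^{s}}$, that the paper obtains in its Subcases 1.1--1.2 and 2.1), and an application of Lemma~\ref{N=p^n}(i) via the nonvanishing formal derivative $a^{2}\tfrac{d}{p^{s}}$ or $a(a+1)d$ modulo $p$. The only differences are presentational --- you package the binomial expansions into the identity $H_d=\pi(d)\pi(k)+(\pi(d)+\pi(k))E_d+\tfrac12E_d(E_d+(N\bmod 2))$ and absorb the paper's separate $q_i=2$ subcase into the general linearization --- and the residual ``bookkeeping'' you flag (integrality of the $\tfrac12$-terms, periodicity modulo $p^{n}$) is exactly the parity/valuation work the paper carries out explicitly, so there is no substantive gap.
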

\begin{proof}
It follows that
 $H_d(0)=0$ for any $d$.
We can see that the integer $\frac{N}{\gcd(d,N)}$ can be decomposed into $\frac{N}{\gcd(d,N)} =\frac{p^n}{\gcd(d,p^n)} \frac{q_1}{\gcd(d,q_1)}
\frac{q_2}{\gcd(d,q_2)} \cdots \frac{q_r}{\gcd(d,q_r)}.$
Then according to Lemma \ref{N=p^n} (ii),
$H_d(k)/ \gcd(d,N)$ is a PP over $\mathbb{Z}_{N/\gcd(d,N) }$ if and only if  $H_d(k)/ \gcd(d,N)$ permutes  $\mathbb{Z}_{{p^n}/{\gcd(d,p^n)}}$ and $\mathbb{Z}_{{q_i}/{\gcd(d,q_i)}}$ for $ i=1,2,\dots,r$.

{\bf Case 1.} We first prove $H_d(k)/ \gcd(d,N)$ permutes $\mathbb{Z}_{{p^n}/{\gcd(d,p^n)}}$.
From Lemma \ref{N=p^n} (i),
it suffices to show that $H_d(k)/ \gcd(d,N)$ permutes $\mathbb{Z}_{p}$ and its derivative $H_d'(k)/ \gcd(d,N) \not\equiv 0  \,\,(\mathrm{mod}\,\,p)$ for any $k \in \mathbb{Z}_{{p^n}/{\gcd(d,p^n)}}$.
In the following, the discussion is divided into two cases: $p\,|\, d$
and $\gcd(p,d)=1$.

{\bf Subcase 1.1.}   $p\,|\, d$.
We first prove that $H_d(k)/ \gcd(d,N)$ is a PP over $\mathbb{Z}_{p}$.
Since
{\small
\begin{align}
 \frac{(k+d)^{h} -k^{h}}{\gcd(d,N)}
&= \frac{d }{\gcd(d,N)}\binom{h}{1}k^{h-1} +d \left[  \frac{d }{\gcd(d,N)}  \sum_{i=2}^{h} \binom{h}{i} k^{h-i} d^{i-2}  \right] \notag \\
& \equiv \frac{d }{\gcd(d,N)} h k^{h-1} \,\,(\mathrm{mod}\,\,p), \notag
\end{align}
}
holds for any positive integer $h$,
we have
{\small
\begin{align}\label{p_zhengchu_d}
\frac{ 2H_d(k) }{\gcd(d,N)}
&  \equiv
\frac{d }{\gcd(d,N)}  \left[ 2p k^{2p-1}
+ (N\mathrm{mod}\,\,2 ) p k^{p-1}
+2a  (p+1) k^{p} + 2a^2k  \right]   \notag \\
& \equiv 2a(a+1)  \frac{d }{\gcd(d,N)}k   \,\,(\mathrm{mod}\,\,p),
\end{align}
}where the last step follows from Euler's theorem.
Due to $\gcd(2,p)=1$, it implies
$\frac{ H_d(k) }{\gcd(d,N)} \equiv a(a+1)  \frac{d }{\gcd(d,N)}k   \,\,(\mathrm{mod}\,\,p),
$
which is a PP over $\mathbb{Z}_{p}$ since  $a,a+1$ and $\frac{d }{\gcd(d,N)}$ are coprime to $p$.

Then we prove that $H_d'(k)/ \gcd(d,N) \not\equiv 0  \,\,(\mathrm{mod}\,\,p)$ for any $k \in \mathbb{Z}_{{p^n}/{\gcd(d,p^n)}}$.
 Similarly as the analysis in \eqref{p_zhengchu_d}, it can be deduced that
$$2H_d'(k)/ \gcd(d,N)
\equiv 2a^2  \frac{ d  }{\gcd(d,N)} \,\,(\mathrm{mod}\,\,p). $$
Due to $\gcd(2,p)=1$,
it follows that  $H_d'(k)/ \gcd(d,N)\,\,(\mathrm{mod}\,\,p)
= a^2  \frac{ d  }{\gcd(d,N)} , $
which is nonzero from  $\gcd(a,p)=1$ and $\gcd(\frac{ d  }{\gcd(d,N)} ,p)=1$.

{\bf Subcase 1.2.}
 $\gcd(p,d)=1$.
We first prove that $H_d(k)/ \gcd(d,N)$ is a PP over $\mathbb{Z}_{p}$.
According to Euler's theorem, it follows
\begin{align}
   & 2 H_d(k) \notag \\
 \equiv & [(k+d)^{2}-k^{2}  +d(N\mathrm{mod}\,\,2 ) ] +2a[(k+d)^{2}-k^{2}] +2a^2dk   -2ad^2-d^2-d(N\mathrm{mod}\,\,2)   \notag  \\
\equiv & 2(a+1)^2  dk   \,\,(\mathrm{mod}\,\,p) . \notag
\end{align}
Since
$\gcd(\gcd(d,N),p)=1$,
there exist  positive integers $t$ and $s$ such that
$t\gcd(d,N)+  sp =0$, implying that
$\frac{1 }{\gcd(d,N)} \equiv  t \,\,(\mathrm{mod}\,\,p)$
and $\gcd(t,p)=1$.
Hence,
$$\frac{ 2H_d(k) }{\gcd(d,N)}
\equiv  2t H_d(k)   \,\,(\mathrm{mod}\,\,p) \equiv  2(a+1)^2 t dk   \,\,(\mathrm{mod}\,\,p). $$
From $\gcd(p,2)=1$, it follows that
$
\frac{ H_d(k) }{\gcd(d,N)} \equiv  (a+1)^2 t dk   \,\,(\mathrm{mod}\,\,p),
$
which is a PP over $\mathbb{Z}_{p}$ since
$\gcd(p,a+1)=1$, $\gcd(p,t)=1$ and  $\gcd(p,d)=1$.

Now we prove that $H_d'(k)/ \gcd(d,N) \not\equiv 0  \,\,(\mathrm{mod}\,\,p)$ for any $k \in \mathbb{Z}_{{p^n}/{\gcd(d,p^n)}}$.
 Similarly as the analysis in Subcase 1.1, it derives
$2H_d'(k)
\equiv 2a(a+1) d \,\,(\mathrm{mod}\,\,p). $
Thus
$$\frac{ 2H_d'(k) }{\gcd(d,N)}
\equiv  2t H_d'(k)   \,\,(\mathrm{mod}\,\,p) \equiv   2a(a+1) td   \,\,(\mathrm{mod}\,\,p). $$
Since $2,a,a+1,t$ and $d$ are coprime to $p$, we have
$\frac{ H_d'(k) }{\gcd(d,N)}
\equiv  a(a+1) td   \,\,(\mathrm{mod}\,\,p), $
which is nonzero.
That is $\frac{  H_d'(k) }{\gcd(d,N)}   \not\equiv 0 \,\,(\mathrm{mod}\,\,p)$.

{\bf Case 2.} We prove that $H_d(k)/ \gcd(d,N)$ is a PP over $\mathbb{Z}_{{q_i}/{\gcd(d,q_i)}}, i=1,2,\dots,r$.
For each $i$,
it is obvious for the case of $\gcd(d,q_i)=q_i$,
then
it suffices to consider the case of  $\gcd(d,q_i)=1$, implying that $\gcd( \gcd(d,N),q_i)=1$.
Thus for each $i$, there exists a positive integer $t_i$ with  $\gcd(t_i,q_i)=1$ such that
$\frac{1 }{\gcd(d,N)} \equiv  t_i \,\,(\mathrm{mod}\,\,q_i)$.
In the following, the discussion proceeds in two cases, depending on whether
$q_i \neq 2$ for all $i$, or $q_i = 2$ for  a certain $i$.

{\bf Subcase 2.1.}  $q_i \neq 2$ for any $i=1,2,\dots,r$.
Since $p-1 \equiv 0 \, (\mathrm{mod}\,\, q_i -1)$ for all $1 \leq i \leq r$,
we have $x^p \equiv x \, (\mathrm{mod}\,\, q_i)$ for $x\in \mathbb{Z}_{q_i}$.
Then it follows
$$
  \frac{2H_d(k)}{\gcd(d,N)} \equiv  2 t_iH_d(k) \equiv  2 t_i(a+1)^2 dk   \,\,(\mathrm{mod}\,\,q_i).
$$
 By   $\gcd(2,q_i)=1$, it derives  that
$
   \frac{H_d(k)}{\gcd(d,N)} \equiv  t_i(a+1)^2d k   \,\,(\mathrm{mod}\,\,q_i).
$
Thus  $\frac{H_d(k)}{\gcd(d,N)}$   is a PP over $\mathbb{Z}_{q_i}$ since $\gcd(t_i, q_i)=1 $,
$\gcd(a+1, q_i)=1 $ and $\gcd(d, q_i)=1 $.

{\bf Subcase 2.2.}  There exists $q_i = 2$ for a certain $i$.
Then we have odd $d$, odd $t_i$, even $N$, and even $a$.
Since $\frac{1 }{\gcd(d,N)} \equiv  t_i \,\,(\mathrm{mod}\,\,q_i)$ with odd $t_i$, it follows
 \begin{equation}\label{Hdk}
 \frac{H_d(k)}{\gcd(d,N)} \equiv   t_iH_d(k)  \equiv H_d(k)   \,\,(\mathrm{mod}\,\,2).
 \end{equation}
Note that $H_d(0) =0$
and $H_d(1) \equiv \frac{ 1+d^{2p}- (1+d)^{2p}  }{2}  \,\,(\mathrm{mod}\,\,2) $ by even $N$ and even $a$.
Let $d=2d_1+d_0$ with $d_0=1$. Then
$$
H_d(1)
\equiv \frac{ 1+d_0^{2p}- (1+d_0)^{2p}  }{2}  \,\,(\mathrm{mod}\,\,2)
\equiv \frac{ 1+1- 2^{2p}  }{2}  \equiv 1\,\,(\mathrm{mod}\,\,2) .
$$
Together with \eqref{Hdk}, it is clear that $\frac{H_d(k)}{\gcd(d,N)}$ is a PP over  $\mathbb{Z}_{2}$.

Combining the above cases, it implies that $H_d(k)/ \gcd(d,N)$ is a PP over $\mathbb{Z}_{N/ \gcd(d,N)}$. The desired conclusion thus follows.
\end{proof}

\noindent\textbf{Proof of Theorem \ref{mainresult} (i).}
Recall that $		\theta(d) =C_0\sum^{N-1}\limits_{k=0} \zeta_N^{ h_d(k)}$ in \eqref{corehk},
and  $ H_d(k) = h_d(k) -h_d(0)$ in Proposition \ref{prop_PP}.
Then we have
$$ \theta(d)=C_0\sum^{N-1}\limits_{k=0} \zeta_N^{ h_d(k)} = C_0 \zeta_N^{ h_d(0)} \sum\limits_{k=0}^{N-1} \zeta_N^{ H_d(k)}
= C_0 \zeta_N^{ h_d(0)} \sum\limits_{k=0}^{N-1} \zeta_{N/ \gcd(d,N)}^{ H_d(k)/ \gcd(d,N)}. $$
Since Proposition \ref{prop_PP} implies that $H_d(k)/\gcd(d,N)$ is a PP over $\mathbb{Z}_{N/\gcd(d,N)}$,
 it follows that $\theta(d) = 0$.
\hfill $\square$

\noindent\textbf{Proof of Theorem \ref{mainresult2} (i).}
Similarly as Proposition \ref{prop_PP} and Theorem \ref{mainresult} (i),   
Theorem \ref{mainresult2} (i) can be proved.
For the sake of completeness of this paper, we present the detailed proof in
\hyperref[appendix:C]{Appendix C}.
\hfill $\square$

\begin{remark}
For the QPP-interleaved ZC sequences in \cite{Berggren_2024_},
its periodic auto-correlation is given in  Equations (10)-(14) in \cite{Berggren_2024_}, as
$\theta(d)=C \sum_{k=0}^{N-1} \xi_{N}^{ g_3k^3+g_2k^2+g_1k  } $
with constant $C$ and $d\,|\, g_i, i=1,2,3$.
The proof of the  ZAC property in~\cite[Thm.~1]{Berggren_2024_}
can be reduced to showing  that  the
cubic polynomials $\frac{g_3k^3+g_2k^2+g_1k}{\gcd(d,N)} $ are PPs over $\mathbb{Z}_{N/ \gcd(d,N)}$.
Moreover, a simple coefficient test for
cubic PPs over an  integer residue ring has been proposed in \cite{Chen_2006_, Zhao_Fan}, which can
be used to directly verify that $\frac{g_3k^3+g_2k^2+g_1k}{\gcd(d,N)} $ is a cubic PP over $\mathbb{Z}_{N/ \gcd(d,N)}$.
\end{remark}

\nn
Below we provide an example to illustrate the results of Theorems \ref{mainresult} (i) and \ref{mainresult2} (i).

\begin{example}
	When $N=9=3^2$,	let $u=1$ and $l=0$, then the
	ZC sequence $\s$ satisfies $s(k)=\xi_9^{(k^2+k)/2}, k=0,1,\dots,8.$
	Let the PP $\pi(x) = x^3+x \,(\mathrm{mod}\,\, 9)$, then $(s \circ \pi)(k) =s(\pi(k)) =\xi_9^{(\pi(k)^2+\pi(k))/2}, k=0,1,\dots,8$.
	Thus the interleaved ZC sequence is given by  $\textbf{y}= \s \circ \pi=(1,\xi_9^3,\xi_9,\xi_9^6,\xi_9^6,
	\xi_9,\xi_9^3,1,\xi_9).$
By calculating the periodic auto-correlation  of $ \textbf{y}$, we obtain $\theta(d)=0$ for $0<d<N$,
which  is consistent with the result in Theorem \ref{mainresult} (i).

When $N=16=2^4$, let
$u=1$ and $l=0$, then the ZC sequence $\s$ satisfies
$s(k)=\xi_{16}^{k^2/2}, k=0,1,\dots,15.$
When $l=4$, the fourth-degree polynomial
 $\pi(x) = x^4+x^2+x\,(\mathrm{mod}\,\,16)$
 is a PP over $\mathbb{Z}_{16}$ by Lemma \ref{N=2^n}.
Thus, the interleaved sequence $\textbf{y}= \s \circ \pi$ satisfies
$y(k) =\xi_{16}^{(k^4+k^2+k)^2/2}, k=0,1,\dots,15$.
A direct computation shows that the periodic auto-correlation  $\theta(d)$ of $ \textbf{y}$ equals zero for all nonzero delays
$d$.
This result aligns with Theorem \ref{mainresult2} (i).

\end{example}

\subsection{Proofs of Theorems \ref{mainresult} (ii) and \ref{mainresult2} (ii) }\label{Sec3-3}

The  proof of Theorem \ref{mainresult} (ii) employs the Fourier transform to demonstrate that the inverses of PPs interleaved sequences yield  constant amplitude sequences in the Fourier domain.

\begin{lem}(\!\!\cite{Trends}) \label{CA_ZAC}
	Given a complex sequence $\s=(s(0),s(1),\dots,s(N-1))$,
	its discrete Fourier transform (DFT) is defined by
	$$\widehat{s}(m) = \frac{1}{\sqrt{N}}\sum_{k=0}^{N-1} s(k) \xi_N^{mk}, \,m= 0, 1, \dots, N- 1.$$
	Then
	$\s=(s(0),s(1),\dots,s(N-1))$ is a ZAC sequence if and only if its DFT sequence
	$\widehat{\s}=(\widehat{s}(0),\widehat{s}(1),\dots,\widehat{s}(N-1))$  	
	 is a constant amplitude (CA) sequence.
\end{lem}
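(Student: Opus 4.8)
The plan is to prove this via the discrete Wiener--Khinchin (correlation) identity: the sequence of squared DFT magnitudes $(|\widehat{s}(0)|^{2},\dots,|\widehat{s}(N-1)|^{2})$ is, up to the factor $N$, the discrete Fourier transform of the periodic auto-correlation vector $(\theta(0),\theta(1),\dots,\theta(N-1))$ of $\s$. Once this identity is established, both directions of the equivalence follow immediately by evaluating it (forward direction) or inverting it (converse).

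Concretely, the first step is to compute, for each $m\in\{0,1,\dots,N-1\}$,
\[
\sum_{d=0}^{N-1}\theta(d)\,\xi_N^{-md}
=\sum_{d=0}^{N-1}\sum_{k=0}^{N-1}s(k)s^{*}(k+d)\,\xi_N^{-md},
\]
where indices are taken modulo $N$. Changing the inner variable from $d$ to $j=k+d\,(\mathrm{mod}\,N)$ gives $\xi_N^{-md}=\xi_N^{mk}\xi_N^{-mj}$, so the double sum factors as
\[
\Bigl(\sum_{k=0}^{N-1}s(k)\,\xi_N^{mk}\Bigr)\Bigl(\sum_{j=0}^{N-1}s^{*}(j)\,\xi_N^{-mj}\Bigr)
=\bigl(\sqrt{N}\,\widehat{s}(m)\bigr)\bigl(\sqrt{N}\,\overline{\widehat{s}(m)}\bigr)
=N\,|\widehat{s}(m)|^{2},
\]
using $\xi_N^{-mj}=\overline{\xi_N^{mj}}$ (since $|\xi_N|=1$) together with the definition of $\widehat{s}(m)$. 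This yields the identity $N\,|\widehat{s}(m)|^{2}=\sum_{d=0}^{N-1}\theta(d)\,\xi_N^{-md}$ for all $m$.

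From here the conclusion is immediate in both directions. If $\s$ is a ZAC sequence, then $\theta(d)=0$ for $0<d<N$, so the right-hand side collapses to $\theta(0)=\sum_{k=0}^{N-1}|s(k)|^{2}$, which is independent of $m$; hence $|\widehat{s}(m)|=\sqrt{\theta(0)/N}$ for every $m$ and $\widehat{\s}$ has constant amplitude. Conversely, if $|\widehat{s}(m)|^{2}=c$ is constant in $m$, then $\sum_{d=0}^{N-1}\theta(d)\,\xi_N^{-md}=cN$ for all $m$; multiplying by $\xi_N^{md'}$, summing over $m$, and using $\sum_{m=0}^{N-1}\xi_N^{m\ell}=N$ when $\ell\equiv 0\,(\mathrm{mod}\,N)$ and $0$ otherwise gives $N\theta(d')=cN^{2}\,[\,d'\equiv 0\,(\mathrm{mod}\,N)\,]$, so $\theta(d')=0$ for all $0<d'<N$ and $\s$ is ZAC.

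There is no genuine obstacle in this argument: it is one explicit computation plus one Fourier inversion. The only points requiring care are keeping the conjugation correctly placed given the convention $\xi_N=e^{-2\pi\sqrt{-1}/N}$ (so that the second factor above is $\sqrt{N}\,\overline{\widehat{s}(m)}$ rather than $\sqrt{N}\,\widehat{s}(m)$), and handling the index arithmetic modulo $N$ in the change of variables $j=k+d$. One may additionally remark that the common modulus of $\widehat{\s}$ equals $\sqrt{\theta(0)/N}=\sqrt{\tfrac1N\sum_{k}|s(k)|^{2}}$, which is $1$ precisely when $\sum_{k}|s(k)|^{2}=N$, in particular whenever $\s$ itself is unimodular.
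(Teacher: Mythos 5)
Your proof is correct; the paper itself does not prove Lemma~\ref{CA_ZAC} but cites it from \cite{Trends}, and your Wiener--Khinchin argument (showing $N|\widehat{s}(m)|^{2}=\sum_{d}\theta(d)\xi_N^{-md}$ and then reading off both directions, with Fourier inversion for the converse) is exactly the standard proof underlying that reference. Your closing remark correctly pins down when the constant modulus equals $1$, which resolves the only ambiguity in the statement's use of ``CA.''
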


According to Lemma \ref{CA_ZAC},
for an interleaved sequence $\y= \s \circ \pi^{-1}$,
 we only need  to prove that
its DFT sequence
	$\widehat{\y}=(\widehat{y}(0),\widehat{y}(1),\dots,\widehat{y}(N-1))$ is a CA sequence,
where	
$$\widehat{y}(m)
=\frac{1}{\sqrt{N}}\sum\limits_{k=0}^{N-1} y(k) \xi_N^{mk}
=\frac{1}{\sqrt{N}}\sum\limits_{k=0}^{N-1} s(\pi^{-1}(k)) \xi_N^{mk}, \,m= 0, 1, \dots, N- 1.$$
It follows that
\begin{align}\label{xx*first}
		& \quad \  \widehat{y}(m)\widehat{y}^*(m) \notag \\
		&\stackrel{}{=}  \frac{1}{\sqrt{N}} \sum_{k=0}^{N-1}s(\pi^{-1}(k)) \xi_N^{km} \, \frac{1}{\sqrt{N} }\sum_{h=0}^{N-1}s^* (\pi^{-1}(h)) \xi_N^{-hm}  \notag \\
		&\stackrel{\mathrm{(a)}}{=} \,	 \frac{1}{N} \sum_{k=0}^{N-1}s(k) \xi_N^{\pi (k)m} \,  \sum_{h=0}^{N-1}s^* (h) \xi_N^{-\pi (h)m} \notag\\
		&\stackrel{\mathrm{(b)}}{=} \, \frac{1}{N} \sum_{k=0}^{N-1}s(k) \xi_N^{\pi (k)m} \,  \sum_{d=0}^{N-1}s^* ((k+d)\, (\mathrm{mod}\,N) ) \xi_N^{-\pi (k+d)m} \notag \\
		&\stackrel{}{=} \, \frac{\xi_N^{b m}}{N}\,  \sum_{d=0}^{N-1}\xi_N^{-\pi (d)m}
		\,  \sum_{k=0}^{N-1}s(k)s^* (k+d) \xi_N^{-(\pi (k+d)-\pi(k)-\pi(d) +b	)m}  \notag\\
		&\stackrel{\mathrm{(c)}}{=} \, \frac{\xi_N^{b m}}{N} \, \sum_{d=0}^{N-1}\xi_N^{-\pi (d)m
		-ud( d+ (N  \, \mathrm{mod}\,\,2)+2l)/2}
	\,  \sum_{k=0}^{N-1}\xi_N^{- udk-(\pi(k+d)-\pi(k)-\pi(d) +b	)m 	} \notag\\
	&\stackrel{}{=} \, \frac{\xi_N^{b m}}{N} \, \sum_{d=0}^{N-1}\xi_N^{-\pi (d)m
		-ud( d+ (N  \, \mathrm{mod}\,\,2)+2l)/2}
	\, M(d)
\end{align}
where
\begin{equation}\label{MMMvv}
	M(d)=\sum\limits_{k=0}^{N-1}\xi_N^{- udk-(\pi (k+d)-\pi(k)-\pi(d) +b	)m },
\end{equation}
step (a) follows from $\pi^{-1} (\pi(k))=k$;
since $\pi(t+N )\equiv  \pi(t ) \,\mathrm{mod}\,\, N $ for any $t$,
step (b) is obtained by  introducing the variable $h=k+d$, which reorders the terms in the sum;
step (c) holds by $s(k)=\xi_N^{u(k^2+(N\,\mathrm{mod}\,\, 2)k+2lk)/2}$.
Below we shall show that $ \widehat{y}(m)\widehat{y}^*(m)$ in \eqref{xx*first} is a constant for any $0\leq m<N$.

\noindent\textbf{Proof of Theorem \ref{mainresult} (ii).}
By substituting the PP $\pi(x)=x^p+a x+b$
 into $M(d)$ in \eqref{MMMvv}, it  yields
\begin{equation}\label{inner_sum}
	M(d)= \sum_{k=0}^{N-1}\xi_N^{- udk-(  (k+d)^p-k^p-d^p 	)m }= \sum_{k=0}^{N-1}\xi_N^{- h_d(k)}=\sum_{k=0}^{N-1}\xi_{N/ \gcd(d,N)}^{- h_d(k)/ \gcd(d,N)},	
\end{equation}
where $\gcd(u,N)=1$, $d \in \mathbb{Z}_N$, $m \in \mathbb{Z}_N$ and
$$h_d(k)= (  (k+d)^p-k^p-d^p 	)m +udk . $$
We now discuss $M(d)$ in two cases according to the values of $d$.
When $d=0$, it is evident that $M(d)=N$.
When $d\neq 0$, according to Lemma \ref{Apend_thm2_PP} in
\hyperref[appendix:D]{Appendix D}, we know that $h_d(k)/ \gcd(d,N)$ is a PP over $\mathbb{Z}_{N/ \gcd(d,N)}$.
Consequently, it follows from \eqref{inner_sum} that
 \begin{equation*}
	M(d)
	=	\left\{\begin{array}{cll}
		0 , &   \text{ if } d \neq 0, \\
		N,   &  \text{ if } d=0. \\
	\end{array}\right.	
\end{equation*}
Therefore, we have
\begin{align*}
  \widehat{y}(m)\widehat{y}^*(m)
	&{=}\frac{\xi_N^{bm}}{N} \, \sum_{d=0}^{N-1}\xi_N^{-\pi^{-1} (d)m
-ud( d+ (N  \, \mathrm{mod}\,\,2)+2l)/2}
	 \, M(d) \notag
	{=}\xi_N^{b m}\xi_N^{- \pi^{-1} (0)m}
	{=}1, \notag
\end{align*}
where the last step holds by $\pi^{-1} (0)=b $.
Hence, the proof is finished.
\hfill $\square$

\noindent\textbf{Proof of Theorem \ref{mainresult2} (ii).}
The proof of Theorem \ref{mainresult2} (ii) is similar
to that of  Theorem \ref{mainresult} (ii). For the sake of completeness of this paper, we present the proof in
\hyperref[appendix:E]{Appendix E}.
\hfill $\square$

\begin{example}
Let $N=27=3^3$, a PP $\pi(x) = x^3+x$ over $\mathbb{Z}_N$	induces a permutation $(0, 2, 10, 3, 14, 22, 6, 26, 7, 9, 11, 19, 12, 23, 4, 15, 8, 16, 18, 20, 1, 21, 5, 13, 24, 17, 25)$ of $\mathbb{Z}_N$,
thus its inverse $\pi^{-1}$ is given by $(0, 20, 1, 3, 14, 22, 6, 8, 16, 9, 2, 10, 12, 23, 4, 15, 17, 25, 18, \\ 11,
 19, 21, 5, 13, 24, 26, 7)$.
Then for a ZC sequence $\s$ with $s(x)=\xi_N^{(x^2+x)/2}$, the interleaved sequence
is given by
$\s \circ \pi^{-1}= ( \xi_{N}^{ e_0},\xi_{N}^{ e_1},...,\xi_{N}^{ e_{N-1}}) $,
where its exponent sequence $\textbf{e}$  is
$$
 \textbf{e}=(0,21,1,6,24,10,21,
9,1,18,3,1,24,6, 10,12,18,1,9,12,1,
15,15,10,3,0,1).
$$
We can calculate that the auto-correlation of $\s \circ \pi^{-1}$ equals zero for all nonzero delays.	
It is consistent with the result in Theorem \ref{mainresult} (ii).
\end{example}

\section{Inequivalent permutation-interleaved ZC sequences}\label{Sec4}
The previous section gives a class of PPs $\pi(x)$ from which many permutation-interleaved ZC sequences $\s\circ \pi$ exhibit the ZAC property.
A natural question arises here: are those permutation-interleaved ZC sequences covered by the equivalence classes of ZC sequences?
This section  addresses  this question.
We will first prove the sufficiency of the conjecture given in \cite{Berggren_2024_} and generalize the result to $\s \circ \pi^{-1}$ for QPPs $\pi(x)$.
Moreover, we show that some proposed CAZAC sequences are  neither covered by
the equivalence class of ZC sequences,  nor  covered by
the equivalence class of 
interleaved ZC sequences by QPPs and their inverses.

Before  presenting  the results, we give an auxiliary lemma.

 \begin{lem}(\!\!\cite{Rosen} Theorem 4.20) \label{solution_lem}
 	Assume $A$ is an $n\times n$ matrix of integers, $X$ is a vector of $n$ unknowns, and $B$ is a vector of $n$ integers.
 	If $\gcd(|A|,m)=1$, then $AX \equiv B \,\,(\mathrm{mod}\,\, m)$ has a unique set of incongruent solutions
 	$X \equiv \overline{\Delta}(adj(A))B \,\,(\mathrm{mod}\,\, m)$, where $|A|$ is the determinant of $A$, $\Delta \equiv |A|  \,(\mathrm{mod}\,\, m)$, $\overline{\Delta}\Delta   \equiv 1 \,(\mathrm{mod}\,\, m)$ and $adj(A)$ is the adjoint of $A$.
 \end{lem}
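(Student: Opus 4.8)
The statement is the standard ``Cramer's rule modulo $m$,'' and the plan is to derive it from the adjugate identity $A\cdot \mathrm{adj}(A) = \mathrm{adj}(A)\cdot A = |A|\, I_n$, which holds for every square matrix over a commutative ring, in particular over $\mathbb{Z}$, hence also after reduction modulo $m$. Since $\gcd(|A|,m)=1$, the residue $\Delta \equiv |A| \pmod{m}$ is a unit in $\mathbb{Z}/m\mathbb{Z}$, so $\overline{\Delta}$ with $\overline{\Delta}\Delta \equiv 1 \pmod{m}$ exists and is uniquely determined modulo $m$.

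For \textbf{existence} I would substitute the displayed vector directly: with $X \equiv \overline{\Delta}\, \mathrm{adj}(A)\, B \pmod{m}$ one gets $AX \equiv \overline{\Delta}\,\big(A\,\mathrm{adj}(A)\big) B \equiv \overline{\Delta}\,|A|\, B \equiv \overline{\Delta}\Delta\, B \equiv B \pmod{m}$, where the adjugate identity is used in the middle step. Hence the displayed vector is indeed a solution.

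For \textbf{uniqueness} modulo $m$, suppose $AX \equiv AX' \pmod{m}$. Then $A(X-X') \equiv 0 \pmod{m}$; left-multiplying by $\mathrm{adj}(A)$ and invoking the adjugate identity gives $|A|\,(X-X') \equiv 0 \pmod{m}$, and multiplying through by $\overline{\Delta}$ yields $X - X' \equiv \overline{\Delta}\Delta\,(X-X') \equiv 0 \pmod{m}$. Thus every solution is congruent to $\overline{\Delta}\,\mathrm{adj}(A)\,B$, which is the asserted unique set of incongruent solutions.

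I do not expect any genuine obstacle here: the only non-formal input is the adjugate identity over $\mathbb{Z}$, and the hypothesis $\gcd(|A|,m)=1$ is precisely what legitimizes cancelling $|A|$ modulo $m$. Since this is a classical fact, one could equally well just cite \cite[Theorem 4.20]{Rosen} without reproducing the argument.
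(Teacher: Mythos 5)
Your argument is correct: the adjugate identity $A\,\mathrm{adj}(A)=\mathrm{adj}(A)\,A=|A|I_n$ over $\mathbb{Z}$, combined with the invertibility of $\Delta$ modulo $m$ guaranteed by $\gcd(|A|,m)=1$, gives both existence (by direct substitution) and uniqueness (by left-multiplying $A(X-X')\equiv 0$ by $\mathrm{adj}(A)$ and cancelling $\Delta$). The paper itself supplies no proof of this lemma — it is quoted verbatim from \cite[Theorem 4.20]{Rosen} — so there is nothing to compare against; your write-up is simply the standard textbook proof of that cited result, and it is complete as stated.
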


Recall the conjecture about inequivalent QPP-interleaved ZC sequences
in Conjecture \ref{conj_Ber}.
Based on Lemma \ref{solution_lem},
we shall give  an affirmative answer to the conjecture on its sufficiency,
and prove that ZC sequences interleaved by QPPs' inverses  are also inequivalent to ZC sequences.

\begin{prop}\label{prop1}
	Let $\s$ be ZC sequences of length $N$ and $\pi=f_2x^2+f_1x +f_0$ be QPPs over $\mathbb{Z}_N$.
	When $N=p^n$ with $n\geq 2$ and a prime $p\geq 5$,
CAZAC sequences $\s \circ \pi$ and $\s \circ \pi^{-1}$  are both inequivalent
to ZC sequences.
\end{prop}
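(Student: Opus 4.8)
The goal is to show that neither $\s \circ \pi$ nor $\s \circ \pi^{-1}$ falls into the equivalence class of any ZC sequence when $N = p^n$, $p \geq 5$, $n \geq 2$. By the reformulation in~\eqref{uuuuu}, this amounts to showing that for every choice of rotation $r$, translation $d$, root indices $u_1, u_2$ (both coprime to $N$), frequency-modulation slope $v$, and conjugation sign $s \in \{-1,1\}$, the identity
\[
\xi_N^{u_1(\pi(k)+(N\,\mathrm{mod}\,2))\pi(k)/2} = \xi_N^{vk}\,\xi_N^{su_2(k+d+(N\,\mathrm{mod}\,2))(k+d)/2 + r/2}
\]
fails for at least one $k$. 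Since $N = p^n$ is odd, the $(N \bmod 2)$ terms are all $1$. Taking logarithms (i.e. comparing exponents modulo $N$, with care about the factor $2$, which is invertible mod $p^n$), the claim is equivalent to: the polynomial $g(k) := u_1(\pi(k)^2 + \pi(k)) - s u_2(k+d)^2 - s u_2(k+d) - vk - r$ is \emph{not} identically zero on $\mathbb{Z}_N$. Writing $\pi(k) = f_2 k^2 + f_1 k + f_0$, the leading behavior of $u_1 \pi(k)^2$ is $u_1 f_2^2 k^4$ plus a $k^3$ term $2u_1 f_1 f_2 k^3$, while the right-hand side is only quadratic in $k$. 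So the plan is to isolate the $k^4$ and $k^3$ coefficients of $g$ and show they cannot both vanish mod $p^n$.

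The first and main step is to pin down $f_2$. Recall from the QPP characterization recalled in Section~\ref{SEC_2.1} that for $N = p^n$ with $p$ odd, a QPP $\pi$ requires $\gcd(f_1, N) = 1$ and every prime divisor of $N$ divides $f_2$ — i.e. $p \mid f_2$ — while $f_2 \neq 0$ as an element of $\mathbb{Z}_{p^n}$ (otherwise $\pi$ is linear, hence equivalent; but one should handle that degenerate case too, noting a linear PP composed with a ZC sequence is just a decimation/translation and stays in the ZC class — so genuine QPPs have $f_2 \not\equiv 0$, i.e. $v_p(f_2) = t$ with $1 \le t \le n-1$). Now $g$ being identically zero on $\mathbb{Z}_{p^n}$ forces, in particular, its coefficient of $k^4$ to vanish mod $p^n$: $u_1 f_2^2 \equiv 0 \pmod{p^n}$. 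Since $\gcd(u_1, p) = 1$, this gives $p^n \mid f_2^2$, i.e. $v_p(f_2) \geq \lceil n/2 \rceil$. Then look at the $k^3$ coefficient: it is $2 u_1 f_1 f_2$ (from expanding $u_1(f_2k^2+f_1k+f_0)^2$), which must also vanish mod $p^n$; since $2, u_1, f_1$ are all coprime to $p$, this forces $p^n \mid f_2$, i.e. $f_2 \equiv 0$ in $\mathbb{Z}_{p^n}$ — contradicting that $\pi$ is a genuine quadratic (non-linear) PP. This is the crux, and it is exactly where the hypothesis $n \geq 2$ is needed: when $n = 1$, $f_2$ could be $0$ mod $p$ and the argument collapses (consistent with the ``only if'' direction of the conjecture). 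The prime condition $p \geq 5$ versus $p = 3$ should enter in ruling out a low-degree coincidence — I expect that for $p = 3$ the cross term $2u_1 f_1 f_2 k^3$ can be absorbed by lower-degree terms because of how $k^3 \equiv k$-type relations interact mod small primes, or because the QPP constraints at $p = 3$ are looser; I would double-check whether $p = 3$ genuinely admits an equivalence and invoke that to explain the restriction, but for the sufficiency direction only $p \geq 5$ matters and the $k^4,k^3$ coefficient argument goes through verbatim.

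For the inverse $\pi^{-1}$, the clean route is \emph{not} to compute $\pi^{-1}$ explicitly (its polynomial form is non-unique over $\mathbb{Z}_{p^n}$), but to use a symmetry: $\s \circ \pi^{-1}$ is equivalent to a ZC sequence if and only if $\widehat{\s \circ \pi^{-1}}$ is (DFT preserves CAZAC-equivalence up to the listed operations, by statement (i) in the introduction). Alternatively — and more directly — observe that $\pi^{-1}$ is itself a QPP over $\mathbb{Z}_{p^n}$ (the inverse of a QPP is a QPP, as recorded in the literature cited in Section~\ref{SEC_2.1}), with leading coefficient $\tilde f_2$ satisfying $p \mid \tilde f_2$ and $\tilde f_2 \not\equiv 0 \pmod{p^n}$ — the latter because if $\pi^{-1}$ were linear then $\pi = (\pi^{-1})^{-1}$ would be linear too. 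Hence the identical $k^4, k^3$-coefficient argument applies with $\pi$ replaced by $\pi^{-1}$, giving the second claim with no extra work. The only obstacle I foresee is the bookkeeping around the factor $2$ and the additive constant $r$ (which lives in $\mathbb{Z}_{2N}$, not $\mathbb{Z}_N$) when passing from the multiplicative identity~\eqref{uuuuu} to a polynomial identity mod $N$; but since $N$ is odd, $2$ is a unit mod $N$ and the $r/2$ can be read as a genuine constant mod $N$ after clearing denominators, so this is routine rather than substantive. I would therefore present the argument as: (1) reduce to non-vanishing of a quartic polynomial $g$ on $\mathbb{Z}_{p^n}$; (2) extract the $k^4$ coefficient $u_1 f_2^2$ and the $k^3$ coefficient $2u_1 f_1 f_2$; (3) use $\gcd(u_1 f_1, p) = 1$ and $v_p(f_2) \leq n-1$ to derive a contradiction; (4) note $\pi^{-1}$ is again a genuine QPP and repeat.
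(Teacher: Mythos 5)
Your treatment of $\s\circ\pi$ follows the same skeleton as the paper (reduce to a quartic polynomial identity mod $p^n$, extract the $k^3$ coefficient $2u_1f_1f_2$, contradict $f_2\not\equiv 0$), but it has a gap at the decisive step. You assert that $g$ being identically zero \emph{as a function} on $\mathbb{Z}_{p^n}$ forces its $k^4$ and $k^3$ coefficients to vanish mod $p^n$. That implication is false for general polynomials over $\mathbb{Z}_{p^n}$ (e.g.\ $p^{n-1}(x^p-x)$ is a nonzero null polynomial), and it is precisely here -- not elsewhere -- that the hypothesis $p\geq 5$ is used: for a degree-$4$ polynomial one evaluates at $k=0,1,2,3,4$ and inverts the resulting Vandermonde matrix, whose determinant $\prod_{0\le i<j\le 4}(j-i)=2^5\cdot 3^2$ is a unit mod $p^n$ exactly when $p\geq 5$ (this is the paper's Lemma on linear congruence systems). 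For $p=3$ a quartic such as $3^{n-1}(x^4-x^2)$ vanishes identically on $\mathbb{Z}_{3^n}$ with nonzero coefficients, so your remark that ``the coefficient argument goes through verbatim'' independently of $p\geq 5$ has the logic reversed. The step is fixable in two lines, but as written it is missing.

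The $\pi^{-1}$ half rests on a false premise. The compositional inverse of a QPP over $\mathbb{Z}_{p^n}$ is \emph{not} in general a QPP -- the very references in Section~\ref{SEC_2.1} (Ryu--Takeshita on quadratic inverses, Lahtonen et al.\ on the degree of the inverse) exist because the inverse typically requires higher degree. Concretely, $\pi(x)=5x^2+x$ over $\mathbb{Z}_{125}$ admits no quadratic inverse: $g_2\pi(x)^2+g_1\pi(x)\equiv x$ would force (by the same coefficient extraction) $25\mid g_2$ and $g_2\equiv -5\pmod{125}$ simultaneously. Your fallback via the DFT is also not developed: statement (i) of the introduction only says the DFT of a CAZAC sequence is CAZAC, not that it intertwines the equivalence class of $\s\circ\pi^{-1}$ with that of $\s\circ\pi$. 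The paper sidesteps all of this: assuming $\s\circ\pi^{-1}$ is equivalent to a ZC sequence, it substitutes $k\mapsto\pi(k)$ into the purported identity and uses $\pi^{-1}(\pi(k))=k$, which eliminates $\pi^{-1}$ entirely and leaves a quartic in $k$ with leading coefficients $-su_2f_2^2$ and $-2su_2f_2f_1$, after which the same Vandermonde argument applies. You should adopt that substitution rather than attributing quadraticity to $\pi^{-1}$.
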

\begin{proof}
	Suppose that $\s \circ \pi$ is   equivalent to ZC sequences.  Then  from \eqref{uuuuu} there exist parameters
 $v,d\in \mathbb{Z}_N$, 	$r\in \mathbb{Z}_{2N}$,  $s\in \{-1,1\}$ and $\gcd(u_i,N)=1$ with $i=1,2,$
such that	
$$
\xi_N^{u_1(\pi(k)+(N\,\mathrm{mod}\,2))\pi(k)/2} = \xi_N^{vk}\xi_N^{su_2(k+d+(N\,\mathrm{mod}\,2))(k+d)/2+r/2}
$$	
for any $k=0,1,2,\dots,N-1$.
Thus
$$u_1(\pi(k)+1)\pi(k) - 2vk - su_2(k+d+1)(k+d)-r \equiv 0
\,\,(\mathrm{mod}\,\,2p^n)
 $$
 for any $k\in \mathbb{Z}_{N}$.
Since $\gcd(2,p^n)=1$, we have
\begin{equation}\label{BBB}
	b_4 k^4 +b_3 k^3 + b_2k^2 +b_1k  +b_0 \equiv 0 \,\,(\mathrm{mod}\,\,p^n),
\end{equation}
for any $k\in \mathbb{Z}_{p^n}$,
where $b_4 = u_1 f_2^2$, $b_3 = 2u_1 f_2f_1$, $b_2 = u_1(f_1^2+f_2 +2f_2f_0)-su_2 $, $b_1 = u_1f_1(2f_0+1)-2v-su_2(2d+1) $ and $b_0 = u_1f_0(f_0+1)-su_2(d^2+d) -r$.

Since $p\geq 5$,
we can rewrite equation \eqref{BBB} for
$k=0,1,2,3,4 \in \mathbb{Z}_{p^n}$ as
\begin{eqnarray}\label{Eq-matrix}
	B	\left(\begin{array}{c}
		b_{4}	\\
		b_{3}	\\
		b_{2}	\\
		b_{1}	\\
		b_{0}	\\	
	\end{array}	\right) =	\left(
	\begin{array}{ccccc}	
		0^4 & 0^{3} & 0^{2} & 0 & 1 \\
		1^{4} & 1^{3} & 1^{2} & 1 & 1 \\
		2^{4} & 2^{3} & 2^{2} & 2 & 1 \\
		3^{4} & 3^{3} & 3^{2} & 3 & 1 \\
		4^{4} & 4^{3} & 4^{2} & 4 & 1 \\
	\end{array}	\right)
	\left(\begin{array}{c}
		b_{4}	\\
b_{3}	\\
b_{2}	\\
b_{1}	\\
b_{0}	\\		
	\end{array}	\right)
	\equiv \left(
	\begin{array}{c}
		0\\
		0 \\
		0	\\
		0 \\
		0 \\
	\end{array}
	\right)
	(\mathrm{mod}\,\, p^n).	
\end{eqnarray}
Since the above matrix $B$ is a Vandermonde matrix, it is clear that $|B|=\prod_{0\leq i<j \leq 4}(j-i)$, which implies that  $\gcd(|B|,p^n)=1$ by a  prime $p \geq 5$.
Based on Lemma \ref{solution_lem} and $\gcd(|B|,p^n)=1$, \eqref{Eq-matrix} has a unique set of incongruent solutions
$b_i  \equiv 0  \,(\mathrm{mod}\,\, p^n), 0\leq i\leq p-1$.
For $b_3 = 2u_1 f_2f_1$, since $2, u_1$ and $f_1$ are coprime to $p$, it follows that $f_2 \equiv 0 \,\,	(\mathrm{mod}\,\, p^n),$
a contradiction.
Thus the assumption does not hold. Therefore	QPP-interleaved sequences $\s \circ \pi$ are inequivalent to ZC sequences.

Suppose that $\s \circ \pi^{-1}$ is equivalent to ZC sequences, then
$$u_1(\pi^{-1}(k)+1)\pi^{-1}(k) - 2vk - su_2(k+d+1)(k+d)-r \equiv 0
\,\,(\mathrm{mod}\,\,2p^n).
$$
Thus from $\pi^{-1}(\pi(k))=k$ and $\gcd(2,p)=1$, we have
$$u_1(k+1)k - 2v\pi(k) - su_2(\pi(k)+d+1)(\pi(k)+d)-r \equiv 0
\,\,(\mathrm{mod}\,\,p^n).
$$
Similar to the proof of $\s \circ \pi$, we can get a contradiction, implying that
 $\s \circ \pi^{-1}$ is also inequivalent to ZC sequences.
\end{proof}

We now consider the equivalence class of
 QPP-interleaved ZC sequences as the set of all sequences
generated by composing the five operations in \eqref{mathe_opera} on QPP-interleaved ZC sequences.
And the equivalence class of QPPs' inverse-interleaved ZC sequences can also be defined similarly.

In the following,
we show that 
all interleaved CAZAC sequences given in Theorem \ref{mainresult}  are inequivalent to ZC sequences,
and some are inequivalent to QPPs and their inverses interleaved ZC sequences
\cite{Berggren_2024_}.
Note that for QPPs $f_2x^2+f_1x$ over $\mathbb{Z}_N$, when $f_2=0$,
 $f_1x$ corresponds the operation of Decimation  in \eqref{mathe_opera},
 thus
the equivalence class of QPP-interleaved ZC sequences is reduced to
the equivalence class of ZC sequences.
Moreover since $\pi^{-1}(\pi(k))=k$,
the case of $\s \circ \pi^{-1}$ can be proved  similarly  by  substituting $k$ into $\pi(k)$.
Therefore, we only give the detailed proof of Proposition \ref{pro2} (ii).

\begin{prop}\label{pro2}
 Let  $p\geq 3$ be a prime, $n$ be a positive integer and $N=p^nN_1$ as in Theorem~\ref{mainresult}.
  Then for ZC sequences $\s$ and PPs $\pi(x) = x^p+ax +b$ in $\mathbb{Z}_N[x]$ with  $a \not\equiv 0,-1 \,\, (\mathrm{mod}\,\, p)$,
we have	\\
\noindent{\rm (i)} CAZAC sequences $\s \circ \pi$ and  $\s \circ \pi^{-1}$ are inequivalent to ZC sequences, respectively.

\noindent{\rm (ii)} CAZAC sequences $\s \circ \pi$ are inequivalent to
interleaved ZC sequences by QPPs and their inverses.
\end{prop}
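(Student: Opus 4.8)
The plan is to prove part (ii) and recover part (i) from it, since the equivalence class of ZC sequences is the $f_2=0$ specialisation of the QPP-interleaved class (as the text already notes) and the statement for $\s\circ\pi^{-1}$ follows by substituting $k\mapsto\pi(k)$ in the congruence obtained below. So I would argue by contradiction: suppose $\s=(\xi_N^{u_1(k^2+(N\bmod 2)k)/2})_k$ interleaved by $\pi(k)=k^p+ak+b$ lies in the equivalence class of some QPP-interleaved, or QPP-inverse-interleaved, ZC sequence. Absorbing the five operations of~\eqref{mathe_opera}---the parameter $l$ of the underlying ZC sequence turns into a rotation plus a translation, a QPP composed with a linear PP is again a QPP, and the functional inverse of a QPP with $p\mid f_2$ admits, modulo $p^n$, a polynomial representation $\sum_ie_ik^i$ with $p\mid e_i$ for all $i\ge 2$ (revert the series $f_1x+f_2x^2+\cdots$, whose coefficient of $x^m$ has $p$-adic valuation $\ge(m-1)v_p(f_2)\ge m-1$)---the assumed equivalence becomes a polynomial congruence
\[
u_1\bigl(\pi(k)^2+(N\bmod 2)\pi(k)\bigr)\equiv su_2\bigl(\rho(k)^2+(N\bmod 2)\rho(k)\bigr)+2lk+r\pmod{2N}
\]
valid for every $k\in\mathbb Z$, where $s\in\{\pm 1\}$, $\gcd(u_1u_2,N)=1$, $r\in\mathbb Z_{2N}$, and $\rho(k)=\sum_ie_ik^i\in\mathbb Z_N[k]$ has $p\mid e_i$ for all $i\ge 2$ (for a genuine QPP this is just $p\mid f_2$, preserved under the linear substitution coming from decimation and translation).

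Reducing modulo $p^n$ (legitimate because $p$ is odd) and writing $\Psi(k)=u_1(\pi(k)^2+(N\bmod 2)\pi(k))-su_2(\rho(k)^2+(N\bmod 2)\rho(k))-2lk-r\in\mathbb Z[k]$, I get $\Psi(k)\equiv 0\pmod{p^n}$ for every integer $k$. Here $u_1\pi(k)^2=u_1k^{2p}+2au_1k^{p+1}+(\text{degree}\le p)$, while every coefficient of $k^m$ with $m\ge 3$ in $\rho(k)^2$ (and in $(N\bmod 2)\rho(k)$) is divisible by $p$, since in each product $e_ie_j$ with $i+j=m\ge 3$ some index is $\ge 2$. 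Hence, modulo $p$, the only coefficients $c_m$ of $\Psi$ with $m\ge p+1$ that need not vanish are $c_{p+1}\equiv 2au_1$ and $c_{2p}\equiv u_1$.

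The contradiction comes from the $(p+1)$-st finite difference at $0$. On one hand $\Delta^{p+1}\Psi(0)=\sum_{i=0}^{p+1}(-1)^{p+1-i}\binom{p+1}{i}\Psi(i)\equiv 0\pmod{p^n}$. On the other hand $\Delta^{p+1}\Psi(0)=(p+1)!\,\tilde c_{p+1}$, where $\tilde c_{p+1}=\sum_{m\ge p+1}c_mS(m,p+1)$ is the coefficient of the falling factorial $(k)_{p+1}$ in $\Psi$ ($S$ Stirling numbers of the second kind). Only $m=p+1$ and $m=2p$ survive modulo $p$, so $\tilde c_{p+1}\equiv 2au_1+u_1S(2p,p+1)\pmod p$; and $S(2p,p+1)\equiv 2\pmod p$, which one sees from $k^p\equiv(k)_p+(k)_1\pmod p$ by squaring and re-expanding $k^{2p}$ in the falling-factorial basis. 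Thus $\tilde c_{p+1}\equiv 2u_1(a+1)\pmod p$, which is coprime to $p$ since $\gcd(u_1,p)=1$ and $a\not\equiv-1\pmod p$. As $v_p((p+1)!)=1$, this forces $v_p(\Delta^{p+1}\Psi(0))=1$, contradicting $p^n\mid\Delta^{p+1}\Psi(0)$ with $n\ge 2$. For part (i), replacing the $\rho$-term by a quadratic polynomial in $k$ (the $f_2=0$ case) shows $\s\circ\pi$ is inequivalent to ZC sequences, and substituting $k\mapsto\pi(k)$ in the corresponding quadratic congruence does the same for $\s\circ\pi^{-1}$, the coefficient of $(k)_{p+1}$ again being $\equiv 2(a+1)\cdot(\text{unit})\pmod p$.

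The step I expect to be the main obstacle is the uniform bookkeeping that collapses the QPP and QPP-inverse cases to the single identity $\tilde c_{p+1}\equiv 2u_1(a+1)$: one must check that no coefficient $c_m$ of $\Psi$ with $m\ge p+1$ besides $c_{p+1}$ and $c_{2p}$ is a unit modulo $p$, which in the inverse case forces one to use the power-series-revert representation of $\sigma^{-1}$ (an arbitrary polynomial representative could carry a null polynomial $(k)_{p^j}$ with a unit leading coefficient and wreck the degree count), together with the clean value $S(2p,p+1)\equiv 2\pmod p$, which is precisely where the hypothesis $a\not\equiv-1\pmod p$ becomes indispensable---that is the one residue for which the $k^{2p}$- and $k^{p+1}$-contributions to $\tilde c_{p+1}$ cancel modulo $p$.
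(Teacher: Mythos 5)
Your proof is correct, but it takes a genuinely different route from the paper's. The paper (Appendix F) also reduces the assumed equivalence to a polynomial congruence $F(k)\equiv 0\pmod{p^n}$, but then extracts a contradiction in two stages: first it reads off the relation $u_1(a+1)^2-su_2f_1^2\equiv 0\pmod p$ from a $3\times 3$ Vandermonde system mod $p$ (via its Lemma~\ref{solution_lem}), then it computes the two first-order differences $F(p)-F(0)$ and $F(1+p)-F(1)$ modulo $p^2$ to get $u_1a(a+1)-su_2f_1^2\equiv 0\pmod p$, and subtracts; for the QPP-inverse case it avoids representing $\sigma^{-1}$ as a polynomial altogether by evaluating the congruence at the points $\pi'(k)$. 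You instead take a single $(p+1)$-st finite difference at $0$ and compute the falling-factorial coefficient $\tilde c_{p+1}\equiv 2u_1(a+1)\pmod p$ via $S(2p,p+1)\equiv 2\pmod p$, so that $v_p(\Delta^{p+1}\Psi(0))=v_p((p+1)!)=1<n$. This is more unified (one computation covers the QPP, QPP-inverse, and pure-ZC comparisons, and it isolates $a\not\equiv-1\pmod p$ as the exact hypothesis doing the work), but it buys that uniformity at the cost of the auxiliary fact that $\sigma^{-1}$ admits a representative $\sum_ie_ik^i$ with $p\mid e_i$ for $i\ge 2$; your series-reversion sketch is correct (Lagrange inversion gives $e_m=\pm C_{m-1}f_1^{-(2m-1)}f_2^{m-1}$, so $v_p(e_m)\ge m-1$ and the truncation at degree $n$ is a genuine inverse mod $p^n$), and you rightly flag that an arbitrary representative would not do, but this step should be written out since it is exactly where the paper's substitution trick lets it say nothing at all. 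Both arguments use $n\ge 2$ essentially, consistent with Theorem~\ref{mainresult}.
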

\begin{proof}
(ii)	Suppose that $\s \circ \pi$ are equivalent to
interleaved ZC sequences by QPPs or their inverses,
then from \eqref{uuuuu} there exist parameters
$v,d\in \mathbb{Z}_N$, 	$r\in \mathbb{Z}_{2N}$,  $s\in \{-1,1\}$ and $\gcd(u_i,N)=1$ with $i=1,2,$
 such that	
\begin{equation*}
	\xi_N^{u_1(\pi(k)+(N\,\mathrm{mod}\,2))\pi(k)/2} = \xi_N^{vk}\xi_N^{su_2(\pi'(k)+d+(N\,\mathrm{mod}\,2))(\pi'(k)+d)/2+r/2}
\text{ for any } k\in \mathbb{Z}_N,
\end{equation*}	
or
\begin{equation}\label{com_PPni}
	\xi_N^{u_1(\pi(k)+(N\,\mathrm{mod}\,2))\pi(k)/2} = \xi_N^{vk}\xi_N^{su_2(\pi'(k)^{-1}+d+(N\,\mathrm{mod}\,2))(\pi'(k)^{-1}+d)/2+r/2}
\text{ for any } k\in \mathbb{Z}_N,
\end{equation}	
where QPPs $ \pi'(k)=f_2k^2+f_1k$ with $f_2 \equiv0 \,(\mathrm{mod}\,\,p)$ and $f_1 \not\equiv 0 \,(\mathrm{mod}\,\,p).$

This implies the exponent polynomial
$	
F(k) \equiv 0 \,\,(\mathrm{mod}\,\,2N) \text{ for any }  k\in \mathbb{Z}_{N},
$
where
\begin{equation}\label{ex_F_case1}
F(k)=	u_1(\pi(k)+(N\,\mathrm{mod}\,2))\pi(k) - 2vk - su_2(f_2k^2+f_1k+d+1)(f_2k^2+f_1k+d)-r
\end{equation}
or
\begin{align}\label{ex_F_case2}
F(k)=  & \,u_1(\pi'(k)^p+a\pi'(k)+b+(N\,\mathrm{mod}\,2))(\pi'(k)^p+a\pi'(k)+b) \notag \\
   &  \,- 2v\pi'(k) - su_2(k+d+(N\,\mathrm{mod}\,2))(k+d)-r
\end{align}
by setting $k=\pi'(k)$ in \eqref{com_PPni}.

We now discuss the cases in which either equality \eqref{ex_F_case1} or equality \eqref{ex_F_case2} is satisfied.
For each case, we  first study  $F(k)\equiv  0 \,\,(\mathrm{mod}\,\,p)$ by Lemma \ref{solution_lem},  then investigate the two first-order finite difference $F(k+p)-F(k)\equiv  0 \,\,(\mathrm{mod}\,\,p^2)$ for $k=0,1$.
Combining these results leads to a contradiction, under the conditions $(u_1, p) = 1$ and $a \not\equiv 0, -1 \pmod{p}$.
The detailed proofs for the two cases are provided in
\hyperref[appendix:F]{Appendix F}.
\end{proof}

\begin{remark}
 Berggren and Popovi$ {\acute{c}}$ showed that for $N\in \{9,18,36,45,63,90,99,117,126\}$, there is no  inequivalent CAZAC  sequence for
QPP-interleaved sequences\cite{Berggren_2024_}. Our results can give  inequivalent  CAZAC sequences for $N\in \{9,18,45,63,90,99,117,126\}$, only missing $N=36$.
\end{remark}

 In addition, for the case of $N=2^n$ in Theorem \ref{mainresult2}, there also exist interleaved CAZAC sequences inequivalent to those from QPPs and their inverses.
In Example \ref{someex} (i), we show that half of CAZAC sequences generated by Theorem   \ref{mainresult2} (i) are inequivalent to those from QPPs and their inverses,    when $N=2^4$ and the PP has degree four.


\begin{example}\label{someex}
(i)
	When $N=16=2^4$,
 all PPs $\pi= a_4x^4 + a_3x^3 + a_2x^2 + a_1x$ over $\mathbb{Z}_N$ satisfy $2 \nmid a_1$, $2\,|\,a_3$ and $2\,|\, a_2+a_4$.
	 The coefficients of QPPs $f_2x^2+f_1x$  over $\mathbb{Z}_N$  satisfy  $2\,|\, f_2$ and $2 \nmid f_1$.
Consider the root index  $u=1$ and $\pi_1=x^4+x^2+x \,(\mathrm{mod}\,\, 16)$,
 then the interleaved  ZC sequence is $\s\circ \pi_1 =(\xi_{2N}^{e_0}, \xi_{2N}^{e_1}, \dots, \xi_{2N}^{e_{N-1}} ) $ with the exponent sequence
$$\textbf{e} = (e_0,e_1,\dots, e_{N-1}) = (0, 9, 4, 9, 16, 1, 4, 17, 0, 25, 4, 25,  16, 17, 4, 1         ).$$
We can verify that
	$\s\circ \pi_1$
is inequivalent to QPPs and their inverses interleaved ZC
sequences.

Moreover, when $N=16=2^4$, experiments show that 128  CAZAC sequences are obtained from all root exponents
$u$ and fourth-degree PPs $\pi$. There are no duplicate sequences among the 128 sequences. Specifically, when $u\in \{1,3,5,7\}$
 and the coefficients of
$\pi= a_4x^4 + a_3x^3 + a_2x^2 + a_1x$
 take certain values, one inequivalent CAZAC sequence is generated for each combination.
Each root index yields 16 inequivalent sequences for the same set of PPs
$\pi$.
Table \ref{tab_3} lists inequivalent sequences with the root index $u=1$.
Considering all four root indices, we obtain a total of $4\times16=64$ inequivalent sequences,
yielding the proportion of inequivalence sequences  of $64/128=1/2.$

	\begin{table}[!ht]
		\caption{Inequivalent   CAZAC sequences $(\xi_{32}^{e_0},\xi_{32}^{e_1},\xi_{32}^{e_2}, \dots,\xi_{32}^{e_{15}})$ constructed by $\s \circ \pi$ with $\pi= x^4 + a_2x^2 + a_1x$ and the root index $u=1$. }	\label{tab_3}
		\begin{center}
			\begin{tabular}{|c|c|c|c|c| p{0.5cm}|}
				\hline
The coefficients of $\pi$ & Exponent sequences
$\textbf{e}=(e_0,e_1,\dots,e_{15})  $
 of  inequivalent    \\[4pt]
	$(a_2,a_1) $ & CAZAC sequences  $\s \circ \pi= (\xi_{32}^{e_0},\xi_{32}^{e_1},\xi_{32}^{e_2}, \dots,\xi_{32}^{e_{15}})$   \\[4pt]
				\hline
				\hline
$(1,1)$  &
				(0, 9, 4, 9, 16, 1, 4, 17, 0, 25, 4, 25, 16, 17, 4, 1)
				   \\[3pt]
		 $(1,3)$   &  (0, 25, 4, 9, 16, 17, 4, 17, 0, 9, 4, 25, 16, 1, 4, 1) \\[3pt]
		 $(1,5)$    &  (0, 17, 4, 17, 16, 9, 4, 25, 0, 1, 4, 1, 16, 25, 4, 9) \\[3pt]
			$(1,7)$  &   (0, 17, 4, 1, 16, 9, 4, 9, 0, 1, 4, 17, 16, 25, 4, 25) \\[3pt]
			$(1,9)$  &   (0, 25, 4, 25, 16, 17, 4, 1, 0, 9, 4, 9, 16, 1, 4, 17)    \\[3pt]
				$(1,11)$  &  (0, 9, 4, 25, 16, 1, 4, 1, 0, 25, 4, 9, 16, 17, 4, 17)   \\[3pt]
				$(1,13)$    &	(0, 1, 4, 1, 16, 25, 4, 9, 0, 17, 4, 17, 16, 9, 4, 25) \\[3pt]
	$(1,15)$   &  (0, 1, 4, 17, 16, 25, 4, 25, 0, 17, 4, 1, 16, 9, 4, 9)  \\[3pt]		
			$(3,1)$   &  	(0, 25, 4, 1, 16, 1, 4, 25, 0, 9, 4, 17, 16, 17, 4, 9)
			 \\[3pt]
			$(3,3)$    &  (0, 17, 4, 25, 16, 25, 4, 17, 0, 1, 4, 9, 16, 9, 4, 1) \\[3pt]
			$(3,9)$   &  (0, 9, 4, 17, 16, 17, 4, 9, 0, 25, 4, 1, 16, 1, 4, 25) \\[3pt]
			$(3,11)$  &   (0, 1, 4, 9, 16, 9, 4, 1, 0, 17, 4, 25, 16, 25, 4, 17) \\[3pt]
			$(7,1)$   &  (0, 17, 4, 9, 16, 25, 4, 1, 0, 1, 4, 25, 16, 9, 4, 17)\\[3pt]
			$(7,3)$ &  (0, 25, 4, 17, 16, 1, 4, 9, 0, 9, 4, 1, 16, 17, 4, 25)  \\[3pt]
				$(7,5)$  &   (0, 9, 4, 1, 16, 17, 4, 25, 0, 25, 4, 17, 16, 1, 4, 9) \\[3pt]
			 $(7,7)$  &   (0, 1, 4, 25, 16, 9, 4, 17, 0, 17, 4, 9, 16, 25, 4, 1) \\[3pt]
				\hline	
			\end{tabular}
		\end{center}
	\end{table}

(ii)
	When $N=18=3^2 \times 2$, that is, $p=3$, $n=2$ and  $q=2$, it follows that PPs $\pi(x) = x^3+ax$ over $\mathbb{Z}_N$   with $a \in \{4,10,16\}$, and the coefficients of QPPs $f_2x^2+f_1x$  over $\mathbb{Z}_N$  satisfy  $f_1\in\{2,4,8,10,14,16\}$ and $f_2\in\{3,9,15\}$, or $f_1\in\{1,5,7,11,13,17\}$ and $f_2\in\{6,12\}$.
	For $\pi_1=x^3+4x \,(\mathrm{mod}\,\, 18)$, $\pi_2=x^3+10x \,(\mathrm{mod}\,\, 18)$ and $\pi_3=x^3+16x \,(\mathrm{mod}\,\, 18)$, let $U_{18}=W_{18}^u$ with	$u\in\{1, 5, 7, 11, 13, 17\}$.  When $u=1$,
the interleaved  ZC sequences are $\s\circ \pi_i =(\xi_{2N}^{e^{(i)}_0}, \xi_{2N}^{e^{(i)}_1}, \dots, \xi_{2N}^{e^{(i)}_{N-1}} ) , i=1,2,3$,
  where
  the exponent sequences
$$
\begin{array}{lll}
\textbf{e}^{(1)} = (e_0^{(1)},e_1^{(1)},\dots, e_{N-1}^{(1)})
= (0, 25, 4, 9, 28, 1, 0, 13, 16, 9, 16, 13, 0, 1, 28, 9, 4, 25    ),\\
\textbf{e}^{(2)} = (e_0^{(2)},e_1^{(2)},\dots, e_{N-1}^{(2)})
= (0, 13, 28, 9, 16, 25, 0,
1, 4, 9, 4, 1, 0,
25,  16, 9, 28, 13        ),\\
\textbf{e}^{(3)} = (e_0^{(3)},e_1^{(3)},\dots, e_{N-1}^{(3)})
= (0, 1, 16, 9, 4, 13, 0,
 25, 28, 9, 28, 25,  0,
  13, 4, 9, 16, 1        ).
  \end{array}$$
 Then we can verify that
	$\s\circ \pi_1$,  $\s\circ \pi_2$ and $\s\circ \pi_3$	
are inequivalent to QPPs and their inverses interleaved ZC
sequences.

(iii)
Take the value of $N$ as $16=2^4$, $27=3^3$ and $150=5^2\cdot 3 \cdot 2$.
 Table \ref{tab_ex} presents the PPs $\pi$ and the proportion of the interleaved ZC sequences  in Theorems \ref{mainresult} and \ref{mainresult2} that are
 inequivalent to those from QPPs and their inverses.

\begin{table}[!ht]
	\small
	\caption{CAZAC sequences $\s\circ \pi$ inequivalent to those from QPPs and their inverses.}\label{tab_ex}
	\begin{center}
		\begin{tabular}{|c|c|c| p{0.5cm}|}
			\hline
			$N$ & $\pi \text{ over } \mathbb{Z}_N$ &     Percent of   \\ 
 &  &     inequivalent seq.s   \\ 
			\hline
			\hline
					$16=2^4$ & PPs in Table \ref{tab_3}   & 50 \%  \\
		 &   & \\
		 	 \hline
		 				$27=3^3$ 	  &  $x^3+ax +b,$  & 100 \% \\ [4pt]
		 & $a=1,4,7,10,13,16,19,22,25.$ &  \\[5pt]
			\hline
			               &  $x^5+ax+b,$   &     \\ [4pt]
$150$ &   $a=6,12,  16,  18,  22, 28,   36,42,   46,  48,$  & 100 \% \\ [4pt]	
				$=5^2\times 3\times 2$	      & $	52, 58,66,72,76,   78,  82, 88, 96,102,$  &   \\ [5pt]
					      & $	106, 108, 112,118,126,132,136,138,142,148.$  &    \\ [5pt]	
			\hline
		\end{tabular}
	\small{
\begin{itemize}
\item  Percent=$\frac{ \# \text{inequivalent CAZAC sequences} }{ \#\text{total CAZAC sequences} }$,
where $\#A$ means the number of set $A$.
\end{itemize}
	}
	\end{center}
\end{table}		
\end{example}

\begin{remark}
Interleaved ZC sequences from different QPPs in \cite{Berggren_2024_} can be shift equivalent.
In this paper, when $N=p^nN_1$ with $n\geq 2$,  distinct
PPs  $\pi_1=x^p+ax$ and $\pi_2=x^p+a'x$ over $\mathbb{Z}_N$  with $a \neq a'$ can generate shift-inequivalent CAZAC sequences, namely,
CAZAC sequences $\s\circ \pi_1$ are shift inequivalent to $\s\circ \pi_2$;
and $\s\circ \pi_1^{-1}$ are shift inequivalent to $\s\circ \pi_2^{-1}$.
\end{remark}

\section{Aperiodic auto-correlation of interleaved ZC sequences}\label{Sec6666}
There is sustained interest in sequences with low aperiodic correlation to better distinguish signals in practical applications \cite{aperiodic_LM}.
Nevertheless, this topic involves the   estimate  of partial exponential sums, which is a well-known challenging problem in analytic number theory.
There are only limited results on the topic. Readers can refer to \cite{aperiodic_HV,aperiodic_Schdmit,aperiodic_Katz} and references therein.

The aperiodic auto-correlation of original ZC sequences has been intensively studied, as reviewed in \cite{Mow_ap}, where Mow and Li provided a more accurate asymptotic bound
on its maximum magnitude as follows:
$$ \lim \limits_{N \rightarrow \infty}  \dfrac{ \max\limits_{0<d<N}|\tilde{\theta}(d)|}{\sqrt{N}}< 0.480261. $$
For the decimated ZC sequences with the root index  $u\neq 1$, the maximum auto-correlation magnitude satisfies
 \begin{equation*}
  \max\limits_{0<d<N}|\tilde{\theta}(d)| \lesssim \frac{\sin \sigma_0 }{ \sqrt{ \pi \sigma_0 }} \sqrt{\frac{|b|}{u}} N,
 \end{equation*}
 where $b$ is an integer defined by $bN \equiv -1 \pmod u$ and $ 1 \leq |b| \leq u/2 $, and
 $\sigma_0\approx 1.16556 $ is the first positive root of the equation $\tan \sigma =2\sigma $ \cite{Mow_ap}.
Numerical results in \cite{Antweiler} suggest that ZC sequences possess the lowest maximum aperiodic auto-correlation at $u=1$.

The root indexes  not only impact the aperiodic autocorrelation of ZC sequences, but also
 Frank sequences and Golomb sequences.
Frank sequences and Golomb sequences
 exhibit  good aperiodic autocorrelation  $k\sqrt{N}$   at the root index  $u=1$ \cite{Fan_Frank, Gabidulin}, where $k<1$.
However,
for Frank sequences of odd length $N$  at the root index $u=2$,
 their   aperiodic autocorrelation  \cite{Fan_Frank}   satisfies
$\lim \limits_{N \rightarrow \infty}\max\limits_{0<d<N} |\tilde{\theta}(d)|=\frac{2N}{\pi^2}$.
 Similarly,  for Golomb sequences with the root index $u \geq 2$ and $0.37 \leq b/u \leq 0.5 $ with $bN \equiv \pm 1 \pmod u$,
 their  aperiodic autocorrelation  satisfies
 $\lim \limits_{N \rightarrow \infty} \max\limits_{0<d<N}|\tilde{\theta}(d)| \simeq  \frac{N}{\pi} \sin \left(\frac{\pi b}{u} \right)$  given in
 \cite[Equality (10)]{Gabidulin}.

Below we shall investigate
the aperiodic auto-correlation of a class of QPP-interleaved ZC sequences
for any root index $u$.
We first give some important lemmas, which will be used in the estimation of the aperiodic auto-correlation.

\begin{lem}(\!\!\cite[Lem.~1]{Korobov_book})\label{Weyl_sum}
Let $\alpha$ be an arbitrary real number, $Q$ an integer, and $P$ a positive integer.
Then
$$
\begin{array}{c}
\left | \sum\limits_{x=Q+1}^{Q+P}e^{2 \pi i \alpha x } \right| \leq \min(P, \frac{1}{2\| \alpha\|}),
\end{array}
$$
where $\| \alpha\|$ is the distance from $\alpha$ to the nearest integer, i.e.,
$\| \alpha\| = \min\limits_{n \in \mathbb{Z}} |\alpha -n|$.
\end{lem}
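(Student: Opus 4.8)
The plan is to prove the two bounds appearing inside the minimum separately and then take the smaller one. The bound by $P$ is immediate: the sum $\sum_{x=Q+1}^{Q+P}e^{2\pi i\alpha x}$ consists of $P$ complex numbers each of modulus one, so the triangle inequality gives $\bigl|\sum_{x=Q+1}^{Q+P}e^{2\pi i\alpha x}\bigr|\le P$.

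For the bound by $\tfrac{1}{2\|\alpha\|}$, I would first dispose of the case $\alpha\in\mathbb{Z}$: then $\|\alpha\|=0$, the right-hand side is $+\infty$, and there is nothing to prove. So assume $\alpha\notin\mathbb{Z}$, so that $e^{2\pi i\alpha}\neq 1$ and the sum is a genuine finite geometric series. Factoring out $e^{2\pi i\alpha(Q+1)}$ and summing yields
$$\sum_{x=Q+1}^{Q+P}e^{2\pi i\alpha x}=e^{2\pi i\alpha(Q+1)}\cdot\frac{e^{2\pi i\alpha P}-1}{e^{2\pi i\alpha}-1},$$
so that, using $|e^{2\pi i\alpha P}-1|\le 2$ in the numerator,
$$\Bigl|\sum_{x=Q+1}^{Q+P}e^{2\pi i\alpha x}\Bigr|=\frac{|e^{2\pi i\alpha P}-1|}{|e^{2\pi i\alpha}-1|}\le\frac{2}{|e^{2\pi i\alpha}-1|}.$$

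It then remains to bound the denominator from below. I would use the identity $|e^{2\pi i\alpha}-1|=2|\sin(\pi\alpha)|$ together with the observation that $|\sin(\pi\alpha)|$ depends only on $\alpha$ modulo $1$; choosing the integer $n$ nearest to $\alpha$ gives $|\sin(\pi\alpha)|=\sin(\pi\|\alpha\|)$ with $\|\alpha\|\in[0,\tfrac12]$. The concavity of $\sin$ on $[0,\tfrac\pi2]$ (equivalently, Jordan's inequality) furnishes $\sin(\pi t)\ge 2t$ for $t\in[0,\tfrac12]$, hence $|e^{2\pi i\alpha}-1|\ge 4\|\alpha\|$ and therefore $\bigl|\sum_{x=Q+1}^{Q+P}e^{2\pi i\alpha x}\bigr|\le\tfrac{1}{2\|\alpha\|}$. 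Combining this with the trivial bound $P$ gives the minimum asserted in the statement.

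There is no real obstacle here; this is a classical estimate. The only point deserving a moment's care is the elementary inequality $\sin(\pi t)\ge 2t$ on $[0,\tfrac12]$, which follows from concavity of the sine (the chord from $0$ to $\tfrac12$ lies below the graph) or from a direct series comparison, and the bookkeeping of reducing $\alpha$ modulo $1$ so that the relevant sine is evaluated at $\pi\|\alpha\|$ rather than at $\pi\alpha$.
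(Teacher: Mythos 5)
Your proof is correct and is the standard argument (geometric series, $|e^{2\pi i\alpha}-1|=2|\sin\pi\alpha|$, and $\sin\pi t\ge 2t$ on $[0,\tfrac12]$); the paper itself gives no proof of this lemma, simply citing it from Korobov's book, where essentially this same argument appears. Nothing further is needed.
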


\begin{lem}(\!\!\cite[Lem.~13]{Korobov_book})\label{Product_solutions}
Let $\lambda$ and $x_1, x_2,  \dots , x_m$ be positive integers. Denote by $\Gamma_m(\lambda)$ the number
of solutions of the equation $x_1x_2 \cdots x_m=\lambda$. Then for any $\varepsilon \,(0 < \varepsilon  \leq 1)$ we have
$$ \Gamma_m(\lambda)\leq C_m(\varepsilon)\lambda^\varepsilon
$$
where the constant $C_m(\varepsilon)= (\frac{m}{\varepsilon \ln{2}})^{m e^{m/\varepsilon}}$ depends on $m$ and $\varepsilon$ only.
\end{lem}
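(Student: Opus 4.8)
The plan is to recognize $\Gamma_m(\lambda)$ as the $m$-fold divisor function and reduce the whole bound to a purely local, prime-by-prime estimate. First I would note that $\Gamma_m$ is multiplicative: if $\lambda=\prod_p p^{a_p}$, then specifying a factorization $\lambda=x_1x_2\cdots x_m$ is the same as distributing each exponent $a_p$ independently among the $m$ factors, so $\Gamma_m(\lambda)=\prod_{p\mid\lambda}\Gamma_m(p^{a_p})$, and $\Gamma_m(p^a)=\binom{a+m-1}{m-1}$ is simply the number of nonnegative integer solutions of $e_1+\cdots+e_m=a$ (stars and bars). The case $m=1$ is trivial ($\Gamma_1(\lambda)=1$), so I may assume $m\ge 2$ throughout.

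Fixing $\varepsilon\in(0,1]$, it then suffices to bound $\Gamma_m(\lambda)\,\lambda^{-\varepsilon}=\prod_{p^a\|\lambda}\binom{a+m-1}{m-1}\,p^{-a\varepsilon}$ by a constant depending only on $m$ and $\varepsilon$. I would split the primes dividing $\lambda$ at the threshold $2^{m/\varepsilon}$. For a \emph{large} prime $p\ge 2^{m/\varepsilon}$ one has $p^{a\varepsilon}\ge 2^{am}\ge 2^{a+m-1}\ge\binom{a+m-1}{m-1}$ for every $a\ge 1$ — the middle inequality being $am\ge a+m-1$, i.e. $(a-1)(m-1)\ge 0$, and the right one being $\binom{a+m-1}{m-1}\le\sum_j\binom{a+m-1}{j}=2^{a+m-1}$ — while the factor equals $1$ when $a=0$; hence every large-prime local factor is $\le 1$ and drops out. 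For a \emph{small} prime $p<2^{m/\varepsilon}$ — of which there are fewer than $2^{m/\varepsilon}<e^{m/\varepsilon}$ — I would use $\binom{a+m-1}{m-1}\le (a+1)^{m-1}$ together with $p\ge 2$ to bound the local factor by $\sup_{a\ge 0}(a+1)^{m-1}2^{-a\varepsilon}$; differentiating the logarithm, the supremum is attained near $a+1=\tfrac{m-1}{\varepsilon\ln 2}$ and is at most $\bigl(\tfrac{m}{\varepsilon\ln 2}\bigr)^{m}$ after crudely absorbing the subdominant factor $e^{-(m-1)}2^{\varepsilon}<1$ and using $\varepsilon\ln 2\le 1$.

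Multiplying this bound over the fewer than $e^{m/\varepsilon}$ small primes gives $\Gamma_m(\lambda)\le\bigl(\tfrac{m}{\varepsilon\ln 2}\bigr)^{m\,e^{m/\varepsilon}}\lambda^\varepsilon$, which is exactly the asserted inequality with $C_m(\varepsilon)=\bigl(\tfrac{m}{\varepsilon\ln 2}\bigr)^{m\,e^{m/\varepsilon}}$. Conceptually the argument is nothing more than ``multiplicativity plus polynomial-beats-exponential'', so I do not anticipate any genuine obstacle; the only thing needing care is choosing the split point $2^{m/\varepsilon}$ so that the large-prime factors provably vanish, and then keeping the two crude range-estimates tight enough that their product lands on the stated constant rather than on a larger one of the same shape. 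If one prefers not to chase the explicit $C_m(\varepsilon)$, one may instead fix the constant abstractly and just invoke finiteness of the supremum and of the (finite) small-prime product.
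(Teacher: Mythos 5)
Your proposal is correct and complete. Note that the paper does not prove this lemma at all --- it is quoted verbatim from Korobov's book (Lemma~13 there) as a black box --- so there is no in-paper proof to compare against; your argument is the standard one behind that reference: write $\Gamma_m$ multiplicatively as $\prod_{p^a\|\lambda}\binom{a+m-1}{m-1}$, kill the local factors at primes $p\ge 2^{m/\varepsilon}$ via $\binom{a+m-1}{m-1}\le 2^{a+m-1}\le 2^{am}\le p^{a\varepsilon}$, and bound each of the fewer than $e^{m/\varepsilon}$ remaining local factors by $\sup_a (a+1)^{m-1}2^{-a\varepsilon}\le\bigl(\tfrac{m}{\varepsilon\ln 2}\bigr)^m$. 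All the individual inequalities check out (including $(a-1)(m-1)\ge 0$, $\binom{a+m-1}{m-1}\le(a+1)^{m-1}$, and the absorption of $e^{-(m-1)}2^{\varepsilon}<1$), and the threshold $2^{m/\varepsilon}$ is chosen precisely so that the product over small primes lands on the stated constant $C_m(\varepsilon)=\bigl(\tfrac{m}{\varepsilon\ln 2}\bigr)^{m e^{m/\varepsilon}}$.
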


\begin{lem}\label{one_parsum}
Let $P\geq 3$ and $\alpha={a}/{q}$ with $\gcd(a, q)=1$.
Let $k$ be a positive integer such that $\gcd(k,q)=g\leq 4$.
Then for any positive integer $Q$, real numbers $\beta$ and $\varepsilon \, (0<\varepsilon \leq 1)$, we have
$$
\begin{array}{c}
\sum\limits_{x=0}^{Q-1}  \min(P, \frac{1}{2\| k \alpha x +\beta \|})
\leq
( 1+  \frac{gQ}{q}) (P+q)\frac{2P^\varepsilon}{\varepsilon}.
\end{array}
$$
\end{lem}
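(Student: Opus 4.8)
\textbf{Proof proposal for Lemma \ref{one_parsum}.}

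The plan is to estimate the sum $\sum_{x=0}^{Q-1}\min(P,\frac{1}{2\|k\alpha x+\beta\|})$ by first isolating the ``small denominator'' terms. Write $k\alpha=ka/q$. Since $\gcd(a,q)=1$ and $\gcd(k,q)=g$, we have $k\alpha=ka/q=(k/g)a/(q/g)$ in lowest terms only up to the factor $g$; more precisely, as $x$ ranges over a block of $q/g$ consecutive integers, the residues $k\alpha x \pmod 1$ run through the $q/g$ multiples of $g/q$ (each hit $g$ times over a full block of $q$ consecutive $x$). So the first step is to partition $\{0,1,\dots,Q-1\}$ into $\lceil gQ/q\rceil\le 1+gQ/q$ blocks of at most $q/g$ consecutive integers, and within each block bound the contribution.

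Next, within one block of length $\le q/g$, I would use the standard fact that the quantities $k\alpha x+\beta \pmod 1$ are spaced at least $g/q$ apart, so at most one term has $\|k\alpha x+\beta\|< g/(2q)$ — bound that one term by $P$. The remaining terms can be grouped by how close they are to an integer: at most two of them satisfy $g/(2q)\le \|\cdot\|< 2g/(2q)$, at most two satisfy $2g/(2q)\le\|\cdot\|<3g/(2q)$, and so on, giving a bound of the shape $P+2\sum_{j\ge 1}\frac{1}{2}\cdot\frac{q}{jg}=P+\frac{q}{g}\sum_{j\le q/(2g)}\frac{1}{j}$. The harmonic sum contributes a logarithmic factor $O(\log q)$; to convert this to the clean power $P^\varepsilon/\varepsilon$ claimed in the statement I would invoke the elementary inequality $\log P \le P^\varepsilon/\varepsilon$ (valid for $P\ge 1$, $0<\varepsilon\le 1$), together with $q/g\le q$ and the crude bound $\log(q/2g)\le \log q$; here is where one absorbs $q$ into the factor $(P+q)$ and where the constant $2/\varepsilon$ emerges after bookkeeping. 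A minor point: one must also handle the $g=\gcd(k,q)\le 4$ hypothesis, which just ensures the spacing $g/q$ is not too coarse so that the per-block count $\le q/g+O(1)$ stays under control — the assumption $g\le 4$ makes the $O(1)$ terms harmless and lets everything be bundled into the stated constant.

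Summing over the $\le 1+gQ/q$ blocks then yields $\bigl(1+\frac{gQ}{q}\bigr)\bigl(P+\frac{q}{g}\cdot(\text{harmonic})\bigr)\le\bigl(1+\frac{gQ}{q}\bigr)(P+q)\frac{2P^\varepsilon}{\varepsilon}$, which is the claim. The main obstacle I anticipate is purely bookkeeping: getting the per-block estimate into exactly the form $(P+q)\cdot\frac{2P^\varepsilon}{\varepsilon}$ rather than something with an extra additive constant or a worse power of the harmonic sum. This requires being slightly careful that the ``one exceptional term bounded by $P$'' plus ``harmonic tail'' really fits under $(P+q)\frac{2P^\varepsilon}{\varepsilon}$ — I would verify the inequality $P+q\log q \le (P+q)\frac{2P^\varepsilon}{\varepsilon}$ directly using $q\ge 2 \Rightarrow \log q \le q$ is too weak, so instead use $\log q\le \log(P q)\le \log P+\log q$ is circular; the cleanest route is to note the block length is $\le q/g\le q$ so the harmonic sum is $\le 1+\log q$, and then bound $1+\log q\le \frac{2P^\varepsilon}{\varepsilon}$ only when $q$ is not too large relative to $P$, otherwise absorb into the $(P+q)$ factor by observing the exceptional-term bound $P$ is then dominated by $q$. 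Reconciling these cases is the fiddly part, but it is elementary.
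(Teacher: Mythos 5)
Your overall architecture matches the paper's: a block decomposition with $\le 1+\frac{gQ}{q}$ blocks, and a per-block bound coming from the fact that the points $k\alpha x+\beta \pmod 1$ form a $g/q$-separated set (the paper phrases this as a complete-residue-system substitution $x=qx_1+x_2$ after inflating the range to $gQ$, but it is the same mechanism, and your separation argument has the minor advantage of working for arbitrary real $\beta$, whereas the paper silently assumes $\beta=b/q$). However, there is a genuine gap in your per-block estimate, and it is not mere bookkeeping. You bound the $j$-th annulus contribution by $\frac{q}{jg}$ for \emph{all} $j\ge 1$, arriving at $P+\frac{q}{g}\sum_{j\le q/(2g)}\frac1j \approx P+\frac{q}{g}\ln\frac{q}{g}$, and then try to dominate $\ln q$ by $\frac{2P^\varepsilon}{\varepsilon}$. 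That inequality is false whenever $q$ is large relative to $P$ (take $P=3$, $\varepsilon=1$, $q=10^{100}$: the left side of your target $1+\log q\le \frac{2P^\varepsilon}{\varepsilon}$ is about $230$ versus $6$), and your fallback of ``absorbing into the $(P+q)$ factor'' cannot repair it, because the offending term is $q\ln q$, which exceeds $q\cdot\frac{2P^\varepsilon}{\varepsilon}$ in exactly that regime. The statement you need per block is $\lesssim(P+q)\ln P$, not $\lesssim(P+q)\ln q$.

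The missing idea is to let the $\min(P,\cdot)$ truncation act on the head of the harmonic sum. For annuli with $j\le \frac{q}{2gP}$ one has $\frac{q}{2jg}\ge P$, so those $O(q/(gP))$ terms should each be bounded by $P$, contributing $O(q/g)\le O(q)$ in total; only for $j>\frac{q}{2gP}$ do you use $\frac{q}{2jg}$, and then the harmonic sum runs over $\frac{q}{2gP}<j\le\frac{q}{2g}$, a range of multiplicative length $P$, so it contributes $\ln P$ rather than $\ln q$. This is precisely the split the paper performs (``$2\sum_{1\le x\le q/(2P)}P+q\sum_{q/(2P)<x<q/2}\frac1x\le q+q\ln P$''), after which $P+1+q+q\ln P<2(P+q)\frac{P^\varepsilon}{\varepsilon}$ follows from $1<\ln P<\frac{P^\varepsilon}{\varepsilon}$ for $P\ge 3$. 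With that one correction your argument closes; without it, the claimed bound does not follow.
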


The proof of Lemma \ref{one_parsum} is given in
\hyperref[appendix:G]{Appendix G}.
Based on the above preparations, we present  the magnitude of  aperiodic auto-correlation $|\tilde{\theta}(d)|$ of $\s \circ \pi$, where $\s \circ \pi$ are QPP-interleaved ZC sequences in
Theorem \ref{mainresult2} (i).

\begin{thm}\label{aperiod}
Let  $N=2^n$ with $n\geq 2$.
Let
$\s=(s(k))_{0\leq k <N}$ with $s(k)=\xi_N^{uk^2/2}$ be a ZC sequence of length $N$
 and  $\pi(x) =  a_2x^2+a_1x+a_0 $ be a QPP over $\mathbb{Z}_N$,
 where $a_2$ is even, and $a_1, u$ are odd.
Then for any shift $d$,
 the  aperiodic auto-correlation magnitude $|\tilde{\theta}(d)|$
of the interleaved sequence  $\s \circ \pi$ satisfies
$$|  \tilde{\theta}(d) 	 |
\leq C(\varepsilon, \varepsilon_1) N^{1-\frac{\varepsilon_1-\varepsilon }{4}},
$$
where $C(\varepsilon,\varepsilon_1)
= 2^{\frac{\varepsilon_1-\varepsilon }{4}} \sqrt[4]{2 + \frac{24}{\varepsilon 4^{\varepsilon/3}}  (\frac{6}{\varepsilon \ln{2}})^{2 e^{6/\varepsilon}} }
$ with $0 <\varepsilon < \varepsilon_1 \leq \frac{\varepsilon +1}{2} <1$.
\end{thm}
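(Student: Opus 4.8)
The plan is to expand $\tilde{\theta}(d)$ as an exponential sum over $k$, extract the quadratic and linear pieces of the phase, and then split the range of $k$ into short intervals on which the quadratic term is essentially constant, so that on each interval the sum reduces to a geometric progression estimated by Lemma \ref{Weyl_sum}. Concretely, writing $y(k) = \xi_N^{u\pi(k)^2/2}$ with $\pi(k) = a_2k^2 + a_1k + a_0$, the phase of $y(k)y^*(k+d)$ is a polynomial in $k$ of degree $4$ (since $\pi(k)^2$ has degree $4$), say $\xi_N^{F_d(k)}$ with $F_d(k) = c_4k^4 + c_3k^3 + c_2k^2 + c_1k + c_0$; the key structural fact, coming from $a_2$ even and $a_1,u$ odd, is that the top coefficients $c_4, c_3$ are highly divisible by powers of $2$ while the leading behaviour is controlled. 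I would first record the exact $2$-adic valuations of $c_4, c_3, c_2$ and reduce $F_d(k)/(\text{common factor})$ modulo $N$.

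Next I would introduce a parameter $H$ (a power of $2$, to be optimized at the end as roughly $N^{1/4}$ up to the $\varepsilon$-factors) and write $k = Hj + t$ with $0 \le t < H$. On each residue class $t$, the sum over $j$ is $\sum_j \xi_N^{F_d(Hj+t)}$; expanding $F_d(Hj+t)$ in $j$, the $j^2, j^3, j^4$ terms carry extra factors $H^2, H^3, H^4$, which — combined with the divisibility of $c_4,c_3,c_2$ — I expect to make those terms integers modulo $N$, i.e. they contribute nothing to $\xi_N^{(\cdot)}$ as long as $H$ is chosen appropriately. This collapses the inner sum to a linear exponential sum $\sum_j \xi_N^{(\alpha_t j + \beta_t)}$ where $\alpha_t$ is (essentially) $H \cdot F_d'(t)$ reduced mod $N$, to which Lemma \ref{Weyl_sum} applies, giving the bound $\min(N/H, \tfrac{1}{2\|\alpha_t/N\|})$. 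Summing this over the $H$ values of $t$ via Lemma \ref{one_parsum} — here $\alpha_t/N$ has the shape $k\alpha x + \beta$ with $q$ a power of $2$ and $\gcd(k,q) = g \le 4$ controlled by the valuations of $c_2$ and $c_3$ — yields a total bound of roughly $(1 + \tfrac{N/H}{q})(N/H + q)\tfrac{2(N/H)^\varepsilon}{\varepsilon}$ times combinatorial factors from Lemma \ref{Product_solutions}; optimizing $H \approx N^{1/4}$ balances the $N/H$ and $H$ contributions and produces the stated $N^{1-(\varepsilon_1-\varepsilon)/4}$ with the constant $C(\varepsilon,\varepsilon_1)$.

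The main obstacle I anticipate is the bookkeeping of $2$-adic valuations: I must verify precisely that, with $a_2$ even and $a_1, u$ odd, the coefficients $c_4, c_3, c_2$ of $F_d$ are divisible by exactly the right powers of $2$ so that (a) after dividing by $\gcd(c_4,c_3,c_2,c_1,N)$ the reduced polynomial still has the quadratic-in-$j$ terms vanishing mod $N$ for $H$ of size $N^{1/4}$, and (b) the residual linear coefficient $\alpha_t$ has denominator $q$ a power of $2$ with $\gcd(\text{slope},q) \le 4$, which is exactly the hypothesis needed to invoke Lemma \ref{one_parsum}. A secondary subtlety is handling the dependence on $d$ uniformly: the coefficients $c_i$ depend on $d$, and one must check the valuation bounds hold for every shift $d$, including the worst case where $d$ itself contributes a large power of $2$; I expect this to be absorbed by the $\gcd$ normalization, but it requires care. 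Once the valuation structure is pinned down, the rest is the interval-splitting argument plus the three cited lemmas and an elementary optimization of $H$.
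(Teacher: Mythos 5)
Your proposal takes a genuinely different route (progression splitting / major-arc style) from the paper (Weyl differencing), but it contains a gap that I do not think can be patched. First, a slip that matters: the phase of $y(k)y^{*}(k+d)$ is $\tfrac{u}{2}\bigl(\pi(k+d)^{2}-\pi(k)^{2}\bigr)$, and the degree-four terms $a_2^{2}k^{4}$ cancel in the difference, so $F_d$ is a \emph{cubic}, not a quartic; the paper computes it as $H_d(k)=ud\bigl[2a_2^{2}k^{3}+3a_2(a_2d+a_1)k^{2}+(2a_2^{2}d^{2}+a_1^{2}+3a_2a_1d)k\bigr]$. Second, and fatally, the $2$-adic divisibility you are counting on is not there. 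Since $N=2^{n}$, the quantity $d/\gcd(d,N)$ is odd and $a_2d+a_1$ is odd, so after normalizing by $\gcd(d,N)$ the quadratic coefficient $3ua_2(a_2d+a_1)\,d/\gcd(d,N)$ has $2$-adic valuation exactly $v_2(a_2)$ (possibly $1$), and the cubic coefficient $2ua_2^{2}\,d/\gcd(d,N)$ has valuation $1+2v_2(a_2)$. With $k=Hj+t$ and $H\approx N^{1/4}$, the $j^{2}$-term then has valuation about $\tfrac{n}{2}+1$ and the $j^{3}$-term about $\tfrac{3n}{4}+3$, both far short of $n$ for large $n$; these terms do \emph{not} vanish modulo $N$. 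Killing even the $j^{2}$-term forces $H\gtrsim N^{1/2}$, which destroys the optimization that was supposed to yield the exponent $\tfrac34$. Hence the inner sums never collapse to linear exponential sums, and Lemmas~\ref{Weyl_sum} and~\ref{one_parsum} cannot be invoked as you describe.

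For comparison, the paper's mechanism for the exponent $1-\tfrac14$ is different: it first uses the fact that the normalized cubic phase is a PP over $\mathbb{Z}_{N/\gcd(d,N)}$ (via Lemma~\ref{N=2^n}) to replace the incomplete sum by one of length $P\le N/(2\gcd(d,N))$, then applies Weyl's fourth-moment inequality $|S|^{4}\le 16P\sum_{y_1,y_2}\bigl|\sum_{x}e^{2\pi i\,\Delta_{y_1,y_2}f(x)}\bigr|$. The second-order finite difference of a cubic is linear, so Lemma~\ref{Weyl_sum} bounds the innermost sum, Lemma~\ref{Product_solutions} collects pairs with $y_1y_2=\lambda$, and Lemma~\ref{one_parsum} bounds the resulting single sum, with a case split on the size of the denominator $q$ of the leading coefficient. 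If you wish to salvage a progression-splitting argument you would need an entirely different source of cancellation in the cubic range; as written, the proposal does not establish the bound.
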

\begin{proof}
The derivation will proceed in three steps. First,
an observation that the involved polynomials in the correlation function $|\tilde{\theta}(d)|$ are PPs
 is utilized to simplify the exponential sum when it applies. Next,
 following Weyl's method~\cite{Korobov_book}, we
investigate the second-order finite difference of the involved polynomial $f(x)$, which transforms the modulus of the aperiodic correlation to the Weyl sums of linear polynomials.
 Finally, by applying Lemmas \ref{Weyl_sum}, \ref{Product_solutions} and \ref{one_parsum},
several sums in the investigation of $|\tilde{\theta}(d)|$ will be upper bounded by a single sum, which is further simplified to
yield the derived upper bound.

According to the definition, we have	
\begin{align}\label{key}			
|\tilde{\theta}(d)| 	&=		|\sum\limits_{k=0}^{N-d-1} s(\pi(k))s^{*}(\pi(k+d))|     \nonumber \\
	&=		|\sum\limits_{k=0}^{N-d-1} s(\pi(k+d))s^{*}(\pi(k))  |     \nonumber \\
&	=     | \sum\limits_{k=0}^{N-d-1} \xi_N^ { u(  \pi(k+d)^{2} -\pi(k)^{2})/2} |  \nonumber \\
&	= 	| \xi_N^ { u(\frac{a_2^2}{2}d^{4} +  a_2a_1d^{3}+ \frac{a_1^2}{2} d^2 ) }  \sum\limits_{k=0}^{N-d-1} \xi_N^{ u[2a_2^2dk^3 + 3a_2d(a_2d+a_1)k^2 + (2a_2^2d^2+a_1^2+3a_2a_1d)dk]}|  \nonumber \\
&=  |\sum\limits_{k=0}^{N-d-1}  \xi_{N}^{ H_d(k) }|			
\end{align}
where
$$H_d(k)=ud[2a_2^2k^3 + 3a_2(a_2d+a_1)k^2 + (2a_2^2d^2+a_1^2+3a_2a_1d)k].$$
	For the exponent polynomial $\frac{H_d(k)}{\gcd(d,N)}=b_3k^3 + b_2k^2 + b_1k$, since $a_2$ is even, $a_1$ and $u$ are odd,
	its coefficients $b_i, i=1,2,3$ satisfy
$$
\left\{
	\begin{array}{cll}
		b_3  &=  2ua_2^2 \frac{d}{\gcd(d,N)}   \equiv 0  \pmod{2} ,\\
		b_2 &=3ua_2(a_2d+a_1) \frac{d}{\gcd(d,N)}  \equiv 0  \pmod{2},    \\
		b_1  &=u(2a_2^2d^2+a_1^2+3a_2a_1d)\frac{d}{\gcd(d,N)}  \equiv ua_1^2\frac{d}{\gcd(d,N)}  \pmod{2} \equiv 1 \pmod{2}.	
	\end{array}\right.
$$	
Then it follows from Lemma \ref{N=2^n}  that	the exponent polynomial $b_3k^3 + b_2k^2 + b_1k$ permutes $\mathbb{Z}_{N_1}$, where $N_1=\frac{N}{\gcd(d,N)}$ is a power of 2.
	It implies
\begin{equation}\label{PP_zero}
\sum\limits_{k=0}^{N_1-1}  \xi_{N_1}^{ b_3k^3 + b_2k^2 + b_1k }=0.
\end{equation}
Moreover let 	$N_2= N-d \, \pmod{N_1} $,
	then \eqref{key} is reduced to
$	|  \tilde{\theta}(d) 	 |=\, |\sum\limits_{k=0}^{N_2-1}  \xi_{N_1}^{ b_3k^3 + b_2k^2 + b_1k }| .$
When $N_2 \leq  N_1/2$, the number of terms in the sum is less than or equal to $N_1/2$.
When $N_2 > N_1/2$, from \eqref{PP_zero}  we have
\begin{align*}
		\left|  \tilde{\theta}(d) 	 \right|
=\, \left|\sum\limits_{k=0}^{N_1-1}  \xi_{N_1}^{ b_3k^3 + b_2k^2 + b_1k  } - \sum\limits_{k=N_2}^{N_1-1}  \xi_{N_1}^{ b_3k^3 + b_2k^2 + b_1k  }  \right|
=\, \left| \sum\limits_{k=N_2}^{N_1-1}  \xi_{N_1}^{ b_3k^3 + b_2k^2 + b_1k  }  \right|.
	\end{align*}
Therefore, we can unify the two cases as
\begin{align*}
		\left|  \tilde{\theta}(d) 	 \right|
		=\, \left|\sum\limits_{k=0}^{N_2-1}  \xi_{N_1}^{ b_3k^3 + b_2k^2 + b_1k  }\right|
		=\, \left|\sum\limits_{k=m_1}^{m_2-1}  \xi_{N_1}^{ b_3k^3 + b_2k^2 + b_1k   }\right|
=\, \left|\sum\limits_{k=m_1}^{m_2-1}  \xi_{N_1}^{H_d(k)/ \gcd(d,N) }\right|
=\, \left|\sum\limits_{k=m_1}^{m_2-1}  \xi_N^{H_d(k)}\right|,
	\end{align*}
where $m_1=0,   m_2=N_2 \leq N_1/2$ or $m_1=N_2 \geq N_1/2,  m_2= N_1$ satisfy $m_2-m_1 \leq  N_1/2.$

Let
\begin{equation}\label{eqrrrrr}
P=m_2-m_1 \leq N_1/2 =\frac{N}{2\gcd(d,N)}.
\end{equation}
Then
\begin{equation}\label{result_d}
|  \tilde{\theta}(d) 	 |=\, |\sum\limits_{k=m_1}^{m_2-1}  \xi_N^{H_d(k)}|
=\, |\sum\limits_{k=0}^{m_2-m_1-1}  \xi_N^{H_d(k+m_1)}|
=|\sum\limits_{x=0}^{P-1}  e^{2 \pi i f(x) }|,
\end{equation}
where
$$f(x) = \frac{H_d(x+m_1)}{N} =\alpha_3 x^3 + \alpha_2 x^2 + \alpha_1 x +c$$
with a constant $c$ and  
 $   
      \alpha_3 =\frac{2uda_2^2 /\gcd(N,2da_2^2  )  }{N /\gcd(N,2da_2^2  )  }.$
Write $\alpha_3$ as
\begin{equation*}
\alpha_3 =\frac{a}{q},   \,\, \gcd(a, q)=1.
\end{equation*}
It is clear that
$ q=  \frac{N}{2\gcd(N,da_2^2  )} \leq  \frac{N}{8}
$ since $a_2$ is even.

In the following, we divide the discussion for \eqref{result_d} into three cases
according to the relation of $q$ and $P$ for an arbitrarily chosen $\varepsilon_1$ with $0<\varepsilon_1<1$.

{\bf Case 1: }$q > P^{3-\varepsilon_1} (0< \varepsilon_1 <1) $.
By \eqref{result_d}, we easily derive
$$|  \tilde{\theta}(d) 	 | \leq P <q^{\frac{1}{3-\varepsilon_1}}  \leq (\frac{N}{8})^{\frac{1}{3-\varepsilon_1}}\leq \frac{1}{2 \sqrt{2}} N^{\frac{1}{3-\varepsilon_1}}.
$$

{\bf Case 2: }$P^{\varepsilon_1} \leq q \leq P^{3-\varepsilon_1} (0< \varepsilon_1 <1) $.
Recall the second-order finite difference  of $f(x)$
 defined  at the beginning of Section \ref{SEC_2.1}.
It follows that
 $$\Delta_{y_1,y_{2}}   f(x)=3! \alpha_3 y_1 y_{2}x +\beta =6 \alpha_3 y_1 y_{2}x +\beta, $$
where $\beta $ depends only on coefficients of the polynomial $f(x)$ and on quantities $y_1, y_{2}$. Following \cite[Lem.~12]{Korobov_book}, we have
\begin{equation}\label{first_esm}
\left|\sum\limits_{x=0}^{P-1 }  e^{2 \pi i f(x) }\right|^4 \leq 16 P \sum\limits_{y_1=0}^{P_1-1}\sum\limits_{y_2=0}^{P_2-1}
\left|\sum\limits_{x=0}^{P_3-1} e^{2 \pi i  \Delta_{y_1,y_{2}}   f(x) }  \right|,
\end{equation}
where $P_1=P$, $P_2=P_1-y_1$, $P_3=P_1-y_1-y_2$.
According to Lemma \ref{Weyl_sum}, we have
$$\left|\sum\limits_{x=0}^{P_3-1} e^{2 \pi i  \Delta_{y_1,y_{2}}   f(x) }  \right|
=\left|\sum\limits_{x=0}^{P_3-1} e^{2 \pi i  6 \alpha_3 y_1 y_{2}x  }  \right|
\leq  \min(P_3, \frac{1}{2\| 6 \alpha_3 y_1 y_{2}\|}).
$$
Substituting this estimate into \eqref{first_esm} and  singling out the summands
for which $y_1=0$ or $y_2=0$,
we get
\begin{align}\label{second_esm}
\frac{1}{16 P} \left|\sum\limits_{x=0}^{P-1 }  e^{2 \pi i f(x) }\right|^4
& \leq
\sum\limits_{y_1=0}^{P_1-1}P_3
+\sum\limits_{y_2=0}^{P_2-1}P_3
-P_3
+\sum\limits_{y_1=1}^{P_1-1} \sum\limits_{y_2=1}^{P_2-1}
\left|\sum\limits_{x=0}^{P_3-1} e^{2 \pi i  \Delta_{y_1,y_{2}}   f(x) }  \right|  \nonumber \\
& \leq (P_1+P_2-1)P_3
+\sum\limits_{y_1=1}^{P_1-1} \sum\limits_{y_2=1}^{P_2-1}
\min(P_3, \frac{1}{2\| 6 \alpha_3 y_1 y_{2}\|}).
\end{align}

For the first item in \eqref{second_esm}, since $0\leq  y_1$, $0\leq  y_2 \leq P_2 -1=P-y_1-1$, it follows that
$0\leq y_1+ y_2 \leq P-1$, and then
\begin{align}\label{before_esti}
  (P_1+P_2-1)P_3 &= (2P-1-y_1)(P-y_1-y_2)  \leq 2P^2.
\end{align}

For the second item in \eqref{second_esm},  collect summands with the same values of the product $y_1,y_2$.
By Lemma \ref{Product_solutions}, for the number of solutions of the equation
$y_1y_2=\lambda$, one has the estimate $ \Gamma_2(\lambda)\leq C_2(\varepsilon)\lambda^\varepsilon
$ with $C_2(\varepsilon)= (\frac{2}{\varepsilon \ln{2}})^{2 e^{2/\varepsilon}}$ under any positive $\varepsilon \leq 1$. Then
\begin{align*}
\sum\limits_{y_1=1}^{P_1-1} \sum\limits_{y_2=1}^{P_2-1} \min(P_3, \frac{1}{2\| 6 \alpha_3 y_1 y_{2}\|}) & \leq
\sum\limits_{\lambda=y_1y_2=1}^{P^2/4} \Gamma_2(\lambda) \min(P, \frac{1}{2\| 6 \alpha_3 \lambda \|})    \nonumber \\
& \leq
C_2(\varepsilon)(P^2/4)^\varepsilon  \sum\limits_{\lambda=1}^{P^2/4}  \min(P, \frac{1}{2\| 6 \alpha_3 \lambda \|}).
\end{align*}
Moreover,  according to Lemma \ref{one_parsum},
it follows that
\begin{equation*}
\begin{array}{c}
\sum\limits_{y_1=1}^{P_1-1} \sum\limits_{y_2=1}^{P_2-1} \min(P_3, \frac{1}{2\| 6 \alpha_3 y_1 y_{2}\|})
 \leq
 C_2(\varepsilon) \frac{P^{2\varepsilon}}{4^{\varepsilon}}  ( 1+  \frac{P^2}{2q}) (P+q)\frac{2P^\varepsilon}{\varepsilon}
 \leq \frac{8C_2(\varepsilon)}{\varepsilon 4^{\varepsilon}}  P^{3-\varepsilon_1+3\varepsilon},
 \end{array}
\end{equation*}
where
$$
\begin{array}{c}
( 1+  \frac{P^2}{2q}) (P+q)=
P+q+  \frac{P^3}{2q} + \frac{P^2}{2}
\leq   P+P^{3-\varepsilon_1}+   \frac{P^3}{2P^{\varepsilon_1}} +      \frac{P^2}{2}
\leq 4P^{3-\varepsilon_1},
\end{array}
$$
by the condition $P^{\varepsilon_1} \leq q \leq P^{3-\varepsilon_1} (0< \varepsilon_1 <1) $.
Since the above inequality holds for any $\varepsilon$ with $0<\varepsilon<1$, by further replacing $\varepsilon$ with $\varepsilon/3$, we get
\begin{equation}\label{after_esti}
\begin{array}{c}
\sum\limits_{y_1=1}^{P_1-1} \sum\limits_{y_2=1}^{P_2-1} \min(P_3, \frac{1}{2\| 6 \alpha_3 y_1 y_{2}\|})
 \leq \frac{24 C_2(\varepsilon/3)}{\varepsilon 4^{\varepsilon/3}}  P^{3-\varepsilon_1+\varepsilon}.
 \end{array}
\end{equation}

Therefore,  when $P^{\varepsilon_1} \leq q \leq P^{3-\varepsilon_1}$ with $0 <\varepsilon < \varepsilon_1 <1$,
according to \eqref{second_esm},
by combining \eqref{before_esti} and \eqref{after_esti},
we derive
{\small \begin{align*}
 \left|\sum_{x=0}^{P-1} e^{2\pi i f(x)}\right|^4
  & \leq
16P \left[(P_1+P_2-1)P_3 + \frac{24 C_2(\varepsilon/3)}{\varepsilon 4^{\varepsilon/3}}  P^{3-\varepsilon_1+\varepsilon} \right] 
&\leq 32 (1 + \frac{12 C_2(\varepsilon/3)}{\varepsilon 4^{\varepsilon/3}} ) P^{4-\varepsilon_1+\varepsilon },
\end{align*}}
implying  that
$$
\begin{array}{c}
 \left|\sum_{x=0}^{P-1} e^{2\pi i f(x)}\right|
\leq
2  \sqrt[4]{2(1 + \frac{12C_2(\varepsilon/3)}{\varepsilon 4^{\varepsilon/3}} )} P^{\frac{4-\varepsilon_1+\varepsilon }{4}}.
\end{array}
$$

Therefore, from \eqref{eqrrrrr} and  \eqref{result_d},
it follows that
\begin{equation*}
\begin{array}{c}
|  \tilde{\theta}(d) 	 |
\leq 2  \sqrt[4]{2(1 + \frac{12C_2(\varepsilon/3)}{\varepsilon 4^{\varepsilon/3}} )}  \left(\frac{N}{2\gcd(d,N)} \right)^{1-\frac{\varepsilon_1-\varepsilon }{4}}
\leq C(\varepsilon, \varepsilon_1) N^{1-\frac{\varepsilon_1-\varepsilon }{4}},
\end{array}\end{equation*}
where $C(\varepsilon,\varepsilon_1)
= 2^{\frac{\varepsilon_1-\varepsilon }{4}} \sqrt[4]{2 + \frac{24}{\varepsilon 4^{\varepsilon/3}}  (\frac{6}{\varepsilon \ln{2}})^{2 e^{6/\varepsilon}} }
$.

{\bf Case 3: } $q=1$. In this case,  $\alpha_3$ is an integer.
According to \eqref{result_d},
we have
\begin{equation}\label{res_dd}
|  \tilde{\theta}(d) 	 |=\left|\sum\limits_{x=0}^{P-1 }  e^{2 \pi i f(x) } \right|  =\left|\sum\limits_{x=0}^{P-1 }  e^{2 \pi i (\alpha_2 x^2+ \alpha_1 x) } \right|,
\end{equation}
with
$$ \alpha_2 =\frac{3uda_2(a_2d+a_1) /\gcd(N,da_2  )  }{N /\gcd(N,da_2  )  } = \frac{a'}{q'}
\text{ and } \gcd(a',q')=1.
 $$
 Since  $ q=  \frac{N}{2\gcd(N,da_2^2  )}=1$,
 we have $q'=N /\gcd(N,da_2 )=2 gcd(N, a_2)  \geq 4$ by even $a_2$.
Together with $q'=N /\gcd(N,da_2 )  \leq N/2 $,
it follows  $4 \leq q' \leq N/2 $.

{\bf Subcase 3.1: }$q' > P^{2-\varepsilon_1} (0< \varepsilon_1 <1) $.
 By \eqref{res_dd}, we easily derive
$$|  \tilde{\theta}(d) 	 | \leq P <q^{\frac{1}{2-\varepsilon_1}}  \leq (\frac{N}{2})^{\frac{1}{2-\varepsilon_1}}\leq \frac{1}{2 } N^{\frac{1}{2-\varepsilon_1}}.
$$

{\bf Subcase 3.2: }$P^{\varepsilon_1} \leq q' \leq P^{2-\varepsilon_1} (0< \varepsilon_1 <1) $.
Similarly as Case 2, we can derive that
$$|  \tilde{\theta}(d) 	 | \leq  2\sqrt{1+ \frac{3}{\varepsilon}}\, P^{1- \frac{\varepsilon_1-\varepsilon}{2}} \leq  2^{\frac{\varepsilon_1-\varepsilon}{2}} \sqrt{1+ \frac{3}{\varepsilon}}\, N^{1- \frac{\varepsilon_1-\varepsilon}{2}}  .
$$

 Since $0 <\varepsilon < \varepsilon_1 \leq \frac{\varepsilon +1}{2} <1$, one has
$N^{1- \frac{\varepsilon_1-\varepsilon}{2}} \geq N^{\frac{1}{2-\varepsilon_1}}$.
Thus, for Case 3, we have
\begin{equation*}
|  \tilde{\theta}(d) 	 |  \leq  2^{\frac{\varepsilon_1-\varepsilon}{2}} \sqrt{1+ \frac{3}{\varepsilon}}\, N^{1- \frac{\varepsilon_1-\varepsilon}{2}}  .
\end{equation*}
%

Combining with three cases, we obtain the desired upper bound on $|\tilde{\theta}(d)|$.
\end{proof}

\begin{remark}
From the experimental data, we note that the theoretical upper bound in Theorem \ref{aperiod} can be further improved, possibly due to the use of the triangle inequality
$| \sum_{i} a_i |  \leq \sum_{i} |a_i|$ in \eqref{first_esm}.
\end{remark}

\section{Conclusion}\label{Sec5}
This paper contributes to the construction and aperiodic auto-correlation analysis of CAZAC sequences.
Our main contributions are threefold:
firstly, we proposed two classes of permutation-interleaved ZC sequences
to generate CAZAC sequences;
secondly, we showed the inequivalence relation among ZC sequences, and interleaved ZC sequences from QPPs, the proposed PPs in this paper and their inverses,
which confirms the sufficiency of the conjecture  in \cite{Berggren_2024_}
 and covers many new sequences that cannot be obtained in the work of \cite{Berggren_2024_};
thirdly, we  evaluated  the aperiodic auto-correlation of the QPP-interleaved ZC sequences of length $N=2^n$.

The permutation-based interleaving technique appears powerful in generating CAZAC sequences.
There are several interesting research problems, including
  the construction of new permutation polynomials over integer rings,
 the characterization of
permutation polynomials that preserves  the ZAC property of permutation-interleaved ZC sequences, and the investigation of the
 equivalence relation between permutation-interleaved ZC sequences and those from the unified construction by Mow \cite{Mow_unif}.



\section*{Appendix A: Proof of Lemma \ref{N=pq}}\label{appendix:A}
\begin{proof}
	 When $2\,|\,n$, it follows that $\frac{n^i}{2}\equiv 0 \ \;(\mathrm{mod}\,\,n)$  with $i\geq 2$. Then we have
	$$
\begin{array}{cll}
	\frac{1}{2} (m+sn)^{2p}
&= &\frac{1}{2} [ m^{2p}+(sn)^{2p}] + p m^{2p-1}sn + \frac{1}{2} \sum\limits_{ i=2 }^{ 2p-1} \binom{2p}{i} m^{2p-i}(sn)^i   \\
&\equiv &\frac{1}{2} [ m^{2p}+(sn)^{2p}] \ \;(\mathrm{mod}\,\,n).
	\end{array}
$$

	 When $2\nmid n$,
according to Lemma \ref{parity} (ii),
we have $2\,|\,\binom{2p}{i}$ with odd $i$,
and $ \binom{2p}{2j} x^2 n$ and $ \binom{p}{j}x$ have the same parity for $1\leq j\leq p$ and an integer $x$.
Then it implies that		
		\begin{align*}	
			&\frac{1}{2}[(m+sn)^{2p}+(m+sn)^{p}] \\
			= \,\,&\frac{1}{2}(m^{2p}+m^{p})+
			\frac{1}{2}\big[ \sum\limits_{ \text{odd } i}^{1\leq i\leq 2p-1} \binom{2p}{i} m^{2p-i}(sn)^i \big] \\
\,\,& +\frac{1}{2} \big[\sum\limits_{ j=1}^{p} \binom{2p}{2j} m^{2p-2j}(sn)^{2j}      \big]
 +\frac{1}{2} \big[\sum\limits_{ j=1}^{p} \binom{p}{j} m^{p-j}(sn)^j      \big]\\
\equiv\,\, &\frac{1}{2}(m^{2p}+m^{p})	
 + \sum\limits_{ j=1}^{p} \frac{1}{2} \big[\binom{2p}{2j} m^{2p-2j}s^{2j}n^{2j-1} +\binom{p}{j} m^{p-j}s^{j}n^{j-1}   \big] n  \ \; (\mathrm{mod}\,\,n)\\
\equiv\,\, & \frac{1}{2}(m^{2p}+m^{p}) \ \; (\mathrm{mod}\,\,n).
	\end{align*}
\end{proof}

\section*{Appendix B:  Proof of Proposition \ref{corre_thm} }\label{appendix:B}
\begin{proof}
Let $p$ be any prime factor of $N$.
Since $\pi(x)=x^p+ax+b$ is a PP over $\mathbb{Z}_{N}$,
from the proof of Lemma \ref{N=p^n} (i),
 we have $a \not\equiv -1 \,\,(\mathrm{mod}\,\, p)$  when $n =1 $;
  and $a \not\equiv 0,-1  \,\,(\mathrm{mod}\,\, p)$ when $n \geq 2$.
According to Lemma  \ref{N=pq}, we have
$$\frac{1}{2} (m+N)^{2p}+ \frac{1}{2} (N\mathrm{mod}\,\,2 )(m+N)^{p} \equiv \frac{1}{2}m^{2p}+ \frac{1}{2} (N\mathrm{mod}\,\,2 )m^{p} \, \, (\mathrm{mod}\,\,N)$$
  for $m=k$ or $k+d$. Then
$$
	\begin{array}{lll}
	h_d(k+N)
	&=&\frac{1}{2}[(k+N+d)^{2p}-(k+N)^{2p}] +a[(k+N+d)^{p+1}-(k+N)^{p+1}] +\\
	&& \frac{1}{2}(N\mathrm{mod}\,\,2 ) [(k+N+d)^{p}-(k+N)^{p}]+a^2(k+N)d    \\
		&\equiv & \frac{1}{2} [(k+N+d)^{2p}+ (N\mathrm{mod}\,\,2 ) (k+N+d)^{p}]
-\frac{1}{2} [(k+N)^{2p}+     \\
	&& (N\mathrm{mod}\,\,2 ) (k+N)^{p} ]+a[(k+d)^{p+1}-k^{p+1}] +a^2dk  \ \; (\mathrm{mod}\,\,N)  \\
	&\equiv&\frac{1}{2}[(k+d)^{2p}-k^{2p}]  +a[(k+d)^{p+1}-k^{p+1}] \\
&& +\frac{1}{2}(N\mathrm{mod}\,\,2 ) [(k+d)^{p}-k^{p}] +a^2dk \ \; (\mathrm{mod}\,\,N) \\
		&\equiv&h_d(k)\ \; (\mathrm{mod}\,\,N).
	\end{array}
	$$
	Then the claimed statement follows. It implies that $\zeta_N^{h_d(k)} = \zeta_N^{h_d(k+N)}$ for any $k$.
	Thus,	for any integer $t$, we have $\theta(d) =C_0\sum^{N-1}\limits_{k=0} \zeta_N^{h_d(k+t) } $ since the sum is independent of the order
	of its terms.	
	Let $t=\frac{2N}{\mathrm{gcd}(d, N)p}$. Then $t$ is an integer by the condition $p^n \nmid d$. 	

Next, we  investigate  $(h_d(k+t)-h_d(k))\ \mathrm{mod}\,\,N $. From \eqref{h_d(k_form)}, we have
$$
	\begin{array}{lll}
		h_d(k+t)- h_d(k) =m_d^{(1)}(k)+m_d^{(2)}(k)+m_d^{(3)}(k)+a^2td,
	\end{array}
	$$
	where
	\begin{equation*}
	\begin{array}{cll}
		m_d^{(1)}(k)&=& \frac{1}{2}\sum\limits_{i=1}^{2p-1} \binom{2p}{i} t^i \big[ (k+d)^{2p-i} -k^{2p-i} \big],\\
		m_d^{(2)}(k)&=& a \sum\limits_{i=1}^{p} \binom{p+1}{i} t^i \big[ (k+d)^{p+1-i}-k^{p+1-i} \big],\\
		m_d^{(3)}(k)&=& \frac{1}{2}(N\mathrm{mod}\,\,2 ) \sum\limits_{i=1}^{p-1} \binom{p}{i} t^i \big[ (k+d)^{p-i} - k^{p-i}\big].
	\end{array}
	\end{equation*}
	Thus
	\begin{equation}\label{ddd}
		\theta(d) =C_0\sum^{N-1}\limits_{k=0} \zeta_N^{h_d(k+t) } = C_0\sum^{N-1}\limits_{k=0} \zeta_N^{h_d(k) } \zeta_N^{m_d^{(1)}(k)+m_d^{(2)}(k)+m_d^{(3)}(k)+a^2td}.
	\end{equation}
	In the following, we shall investigate that for the integer $t=\frac{2N}{\mathrm{gcd}(d, N)p}$ defined above,  $ (m_d^{(1)}(k)+m_d^{(2)}(k)+m_d^{(3)}(k)+a^2td)\ \mathrm{mod}\,\,N$
	becomes a nonzero constant  for any $k, 0 \leq k < N $.

For $m_d^i(k), i=1,2,3$, if certain terms in $m_d^i(k)$ have the factor $tdp$, then these terms can be divided by $N$.  In fact,
{\small
\begin{align*}
		m_d^{(1)}(k)
		= \,\,&  \frac{1}{2}\sum\limits_{i=1}^{2p-1} \binom{2p}{i} t^i \big[ (k+d)^{2p-i} -k^{2p-i}\big] \\
		= \,\,&\frac{1}{2}\bigg[ 2p t \bigg( \binom{2p-1}{1} d k^{2p-2} + \binom{2p-1}{2} d^2  k^{2p-3}+\dots + \binom{2p-1}{2p-2}d^{2p-2} k + d^{2p-1} \bigg) \\
		& + {2p\choose 2} t^2 \bigg( \binom{2p-2}{1} d k^{2p-3} + \binom{2p-2}{2} d^2  k^{2p-4}+\dots  + \binom{2p-2}{2p-3} d^{2p-3} k + d^{2p-2} \bigg)\\
&+\dots+ \binom{2p}{2p-2} t^{2p-2} ( 2 d k + d^{2})+ \binom{2p}{2p-1}t^{2p-1}d \bigg].
	\end{align*}
}Note that in the above equality, each term contains $d$.
	When the degree of $t$ is 1, $N\,|\,\frac{1}{2} \binom{2p}{1} td$ with $t=\frac{2N}{\mathrm{gcd}(d,N)p}$;
	When the degree of $t$ is more than 1, if the degree of $d$ is 1, then
	$N\,|\,\frac{1}{2} \binom{2p}{i} \binom{2p-i}{1} t^i d$  by $\binom{2p}{i} \binom{2p-i}{1}=\binom{2p}{1} \binom{2p-1}{i}, 2 \leq i \leq 2p-1$; if the degree of $d$ is more than 1, then
 we have the following results:

  --   when $n\geq 2$, it suffices to consider the common factor $\frac{1}{2}t^2 d^2 $ of these terms.
Note that	$ \frac{1}{2}  t^2 d^2= 2N  \frac{N}{p^2} (\frac{d}{\mathrm{gcd}(d,N)})^2$ is divided by $N$ from $N=p^n \geq p^2$. Thus each term in $m_d^{(1)}(k)$ is divided by $N$, which implies  $m_d^{(1)}(k) \equiv 0  \,\, (\mathrm{mod}\,\,N)$.

-- when $n=1$, since $ p \mid  \binom{2p}{j}$ for $1 \leq j < 2p,\, j \neq p$ given in  Lemma  \ref{parity} (i),
for $j \neq p$ we have	
$$ \frac{1}{2} \binom{2p}{j} t^2 d^2= 2N \frac{\binom{2p}{j}}{p} \frac{N}{p} (\frac{d}{\mathrm{gcd}(d,N)})^2,$$
 which is divided by $N$.
Thus
$$m_d^{(1)}(k) \equiv \frac{1}{2} \binom{2p}{p} t^p \big[ (k+d)^{p}-k^{p} \big] (\mathrm{mod}\,\,N) \equiv  \frac{1}{2} \binom{2p}{p} t^p d^{p}  (\mathrm{mod}\,\,N)   ,$$
 from $p \mid  \binom{p}{i},  0<i<p $ in  Lemma  \ref{parity} (i).

 Let $\delta=1$ when $n=1$, and $\delta=0$ when $n\geq 2$. Then
 $m_d^{(1)}(k) \equiv \frac{\delta}{2} \binom{2p}{p} t^p d^{p}   \,\, (\mathrm{mod}\,\,N)$.
 For $m_d^{(2)}(k)$, we have
	\begin{align*}
		&m_d^{(2)}(k)\\
		\vspace{1ex}
		=& a \sum\limits_{i=1}^{p} \binom{p+1}{i} t^i \big[  (k+d)^{p+1-i} -k^{p+1-i}\big]\\
				\vspace{1ex}
		= &a \left[(p+1)t \,\bigg(\sum\limits_{i=1}^{p}\binom{p}{i} d^i k^{p-i} \bigg)
		+\binom{p+1}{2} t^2 \,\bigg(\sum\limits_{i=1}^{p-1}\binom{p-1}{i} d^i k^{p-1-i} \bigg)
		+ \dots 	+(p+1) t^{p}d  \right] \\
		= &a \bigg[(p+1)t \,\bigg(\sum\limits_{i=1}^{p-1}\binom{p}{i} d^i k^{p-i} \bigg)
		+\binom{p+1}{2} t^2 \,\bigg( \sum\limits_{i=1}^{p-1}\binom{p-1}{i} d^i k^{p-1-i} \bigg)
		+ \dots 	\\
		& +\binom{p+1}{p-1} t^{p-1} (2dk+ d^2)  +\binom{p+1}{p} t^{p}d  +\binom{p+1}{1}td^p  \bigg].
	\end{align*}	
	For the first term $a(p+1)t \,\bigg(\sum\limits_{i=1}^{p-1}\binom{p}{i} d^i k^{p-i} \bigg)  $  in  $m_d^{(2)}(k)$,
since $t=\frac{2N}{\mathrm{gcd}(d,N)p}$ and $p\,|\,\binom{p}{i} $ with $1\leq i<p$ in Lemma \ref{parity} (i), 	we obtain $N \,| \,td\binom{p}{i}$. Thus the first term in  $m_d^{(2)}(k)$ is divided by $N$.	
Moreover,	Lemma \ref{parity} (i) shows that $p\,|\,\binom{p+1}{j}$ with $2 \leq j< p$,
	and each term in $m_d^{(2)}(k)$ contains $d$.
	Then for $t=\frac{2N}{\mathrm{gcd}(d,N)p}$ and $2 \leq j< p$,
	we see $N \,|\,\binom{p+1}{j}td$.
	Thus from $N \mid tdp$, we have 	
$$m_d^{(2)}(k) \equiv a \bigg(  \binom{p+1}{p}    t^{p}d +\binom{p+1}{1} td^p \bigg) \,\, (\mathrm{mod}\,\,N) \equiv a(t^{p}d + td^p) \, \,(\mathrm{mod}\,\,N).$$	
Similarly, we obtain that
	$m_d^{(3)}(k) \equiv 0 \, (\mathrm{mod}\,\,N)$.
Thus
$$ m_d^{(1)}(k)+m_d^{(2)}(k)+m_d^{(3)}(k)  +a^2td \equiv
    atd( t^{p-1} + d^{p-1}+a)+\delta \frac{\binom{2p}{p}}{2}t^pd^p\, \,  (\mathrm{mod}\,\,N),
$$
which becomes a constant for any $k, 0 \leq k < N$. Finally, we shall show that
\begin{equation}\label{cha55_N}
atd( t^{p-1} + d^{p-1}+a)+\delta \frac{\binom{2p}{p}}{2}t^pd^p \not\equiv 0    \, \,(\mathrm{mod}\,\,N).
	\end{equation}	

When $n=1$, we have $\gcd(d,p)=1$.
It follows from $t=\frac{2N}{\mathrm{gcd}(d,N)p}$ that $\frac{N}{p} \mid td$ and $\gcd(t,p)=1$.
Thus
$td( at^{p-1} + ad^{p-1}+a^2+  \frac{\binom{2p}{p}}{2}t^{p-1}d^{p-1}) \equiv 0 \,(\mathrm{mod}\,\,N)$	 if and only if
$ at^{p-1} + ad^{p-1}+a^2+  \frac{\binom{2p}{p}}{2}t^{p-1}d^{p-1} \equiv 0 \,(\mathrm{mod}\,\,p)$.
In fact, from Euler's Theorem and $\binom{2p}{p} \equiv \binom{2}{1} \,(\mathrm{mod}\,\,p)$ in  Lemma  \ref{parity} (iii), we have
$ at^{p-1} + ad^{p-1}+a^2+  t^{p-1}d^{p-1}
\equiv (a+1)^2 \,(\mathrm{mod}\,\,p)
$, which is nonzero by $a  \not\equiv -1 \, (\mathrm{mod}\,\,p)$.
Hence, the equation in \eqref{cha55_N} holds.

When $n \geq 2$,	
from $a  \not\equiv 0,-1 \, (\mathrm{mod}\,\,p)$, it implies $p\geq 3$.
Suppose  $\mathrm{gcd}(d,N)= p^j q_1^{j_1}q_2^{j_2}\dots q_k^{j_k}$ with $0\leq j\leq n-1$ and $j_1\in \mathbb{Z}_{n_1+1} ,j_2\in \mathbb{Z}_{n_2+1},\dots,j_k\in \mathbb{Z}_{n_k+1}$,
	then we get $d=p^j q_1^{j_1}q_2^{j_2}\dots q_k^{j_k}m$ with $\mathrm{gcd}(m,N)=1$, and
	$$t=\frac{2N}{\mathrm{gcd}(d,N)p}=2p^{n-1-j} q_1^{n_1-j_1}q_2^{n_2-j_2}\dots q_k^{n_k-j_k}.$$
	It follows that $td=2m\frac{N}{p}$.
	Since $p\geq 3$ and $a  \not\equiv 0,-1 \, (\mathrm{mod}\,\,p)$,
if $N \,| \,atd( t^{p-1} + d^{p-1}+a)    $, then $p \,| \,2ma( t^{p-1} + d^{p-1}+a)$.
Below we investigate $2ma (t^ {p-1}+d^ {p-1}+a) \, \, (\mathrm{mod}\, \, p) $
 according to the relations of $t,d$ and $p$.

{\bf 1)}  For the case of $j=0$, we have $p\,|\,t$ and $\mathrm{gcd}(d,p)=1$, following that $d^{p-1} \equiv 1 \,(\mathrm{mod}\,\,p)$ from Euler's Theorem.
Then
$ 2ma( t^{p-1} + d^{p-1}+a)  \equiv 2ma( a+1) \,\,(\mathrm{mod}\,\,p)$. Since $2$, $m$, $a$ and $a+1$ all are prime to $p$, we have $p \nmid 2ma( a+1)$, implying that $ 2ma( t^{p-1} + d^{p-1}+a)\not\equiv 0 \,\,(\mathrm{mod}\,\,p)$.
Thus \eqref{cha55_N} holds.

{\bf 2)} For the case of $1\leq j\leq n-2$, we have $p\,|\,t$ and $p\,|\,d$. Then
$2ma( t^{p-1} + d^{p-1}+a)  \equiv 2ma^2\,(\mathrm{mod}\,\,p)$.
From $\mathrm{gcd}(2ma,p)=1$, we have $p \nmid 2ma^2$, implying that $ 2ma( t^{p-1} + d^{p-1}+a)\not\equiv 0 \,\,(\mathrm{mod}\,\,p)$. Hence \eqref{cha55_N} holds.

{\bf 3)} For the case of $j= n-1$, we have $p\,|\,d$ and $\mathrm{gcd}(t,p)=1$. Then $t^{p-1} \equiv 1\,\,(\mathrm{mod}\,\,p)$ again by Euler's Theorem.
It follows that $ 2ma( t^{p-1} + d^{p-1}+a)  \equiv 2ma( a+1) \,\,(\mathrm{mod}\,\,p)$.
By the analysis of  1), we can similarly show that equation \eqref{cha55_N} holds.

	 According to  the above cases, for each $d$ with $p^n \nmid d$,
	$ (h_d(k+t)- h_d(k)) \,\,\mathrm{mod}\,\,N$  is a  nonzero constant for any $k$.
Then, the constant
	$$C_1=\zeta_N^{m_d^{(1)}(k)+m_d^{(2)}(k)+m_d^{(3)}(k)+a^2td} \neq 1.$$
	Combing \eqref{corehk} and \eqref{ddd}, it follows $\theta(d)=C_1\theta(d)$, which results in
	$\theta(d)=0, 0<d<N$.	
\end{proof}

\section*{Appendix C:  Proof of Theorem \ref{mainresult2} (i)}\label{appendix:C}	
\begin{proof}
Since the constant term $a_0$ in the PP $\pi$ is a mathematical operator in \eqref{mathe_opera} which preserves the CAZAC property,
it suffices to investigate $\pi(k)=a_mk^m + a_{m-1}k^{m-1} + \dots + a_2k^2 + a_1k$.

We can see that
$$
\theta(d) =\, \sum^{N-1}\limits_{k=0}y(k)y^{*}(k + d)
	=\, \sum^{N-1}\limits_{k=0} \xi_N^{\frac{u}{2}\pi^2(k)} \xi_N^{-\frac{u}{2}\pi^2(k+d) }
	=\, \sum^{N-1}\limits_{k=0} \zeta_N^{\frac{\pi^2(k+d)-\pi^2(k)}{2}   }
$$
where $\zeta_N=\xi_N^{-u}$.
Let $H_d(k)=\frac{\pi^2(k+d)-\pi^2(k)-\pi^2(d) }{2}$,
then
\begin{equation}\label{corred}
\theta(d)=\zeta_N^{ \frac{\pi^2(d) }{2} }\, \sum^{N-1}\limits_{k=0} \zeta_N^{ H_d(k)   }
=\zeta_N^{ \frac{\pi^2(d) }{2} }\, \sum^{N-1}\limits_{k=0} \zeta_{N/ \gcd(d,N)}^{ H_d(k) / \gcd(d,N)  }.
\end{equation}
Below we shall prove that $H_d(k)/ \gcd(d,N)$ is a PP over $\mathbb{Z}_{N/ \gcd(d,N)}$ with $N=2^n$.

According to Lemma \ref{N=p^n} (i), $\frac{H_d(k)}{\gcd(d,N)}$ is a PP over $\mathbb{Z}_{2^n}$ if and only if $\frac{H_d(k)}{\gcd(d,N)}$ is a PP over $\mathbb{Z}_{2}$ and  its derivative $H_d'(k)/ \gcd(d,N) \not\equiv 0  \,\,(\mathrm{mod}\,\,2)$ for any $k \in \mathbb{Z}_{2}$.
In the following, the discussion is divided into two cases $2\,|\, d$
and $\gcd(2,d)=1$.

{\bf Case 1.}  $2\,|\, d$.
{\bf 1)} We first prove that $H_d(k)/ \gcd(d,N)$ is a PP over $\mathbb{Z}_{2}$.
Note that
\begin{align}\label{hdk_oe}
  H_d(k)/ \gcd(d,N) = &\frac{\pi^2(k+d)-\pi^2(k)-\pi^2(d)    }{2\gcd(d,N)}  \notag \\
  &  =\frac{\pi(k+d)+\pi(k) }{2} \cdot \frac{\pi(k+d)-\pi(k) }{\gcd(d,N)}
-\frac{\pi^2(d) }{2\gcd(d,N)}.
\end{align}
For the last term in \eqref{hdk_oe}, we have
$$\frac{\pi^2(d) }{2\gcd(d,N)} = (a_m k^{m-1} + a_{m-1}k^{m-2} + \dots + a_2k + a_1)^2  \frac{d^2 }{2\gcd(d,N)}
\equiv a_1^2  \frac{d }{2} \equiv \frac{d }{2}  \,\,(\mathrm{mod}\,\,2) $$
since $2\,|\, d$, $a_1$ and $\frac{d }{\gcd(d,N)}$ are coprime to $2$.

For
$\frac{\pi(k+d)-\pi(k) }{\gcd(d,N)}$ in \eqref{hdk_oe},
since $2 \nmid a_1$ and $2 \nmid  \sum\limits_{\text{ odd } i=1}^{m}  a_i $,
it follows that
\begin{equation}\label{interpro}
 (a_1 + 2a_2k  + 3a_{3}k^{2} + \dots + m a_m k^{m-1}  ) \equiv 1   \,\,(\mathrm{mod}\,\,2)
\text{ for }  k\in \mathbb{Z}_2.
\end{equation}
Due to $\frac{d^2 }{\gcd(d,N)}  \equiv  0 \,\,(\mathrm{mod}\,\,2)$, for any $k\in \mathbb{Z}_2 $, we have
\begin{align}
\frac{\pi(k+d)-\pi(k) }{\gcd(d,N)}
= &  \frac{\sum_{i=1}^{m}   a_i(k+d)^i-a_i k^i   }{\gcd(d,N)} \notag \\
\equiv &   \sum_{i=1}^{m} a_i \binom{i}{1}  k^ {i-1} \frac{ d }{\gcd(d,N)} \,\,(\mathrm{mod}\,\,2)   \notag \\
\equiv   &  (a_1 + 2a_2k  + 3a_{3}k^{2} + \dots + m a_m k^{m-1}  ) \,\,(\mathrm{mod}\,\,2) \notag \\
\equiv & 1   \,\,(\mathrm{mod}\,\,2). \notag
\end{align}

For
$\frac{\pi(k+d) +\pi(k) }{2}$ in \eqref{hdk_oe},
due to $\frac{d^2 }{2}  \equiv  0 \,\,(\mathrm{mod}\,\,2)$, we obtain
\begin{align}
\frac{\pi(k+d) +\pi(k)}{2}
\equiv &   \sum_{i=1}^{m} a_ik^i + \sum_{i=1}^{m} a_i  \binom{i}{1}  k^{i-1} \frac{ d }{2}  \,\,(\mathrm{mod}\,\,2)   \notag \\
\equiv   & \sum_{i=1}^{m} a_ik^i + \frac{ d }{2}   (a_1 + 2a_2k  + 3a_{3}k^{2} + \dots + m a_m k^{m-1}  ) \,\,(\mathrm{mod}\,\,2)      \notag \\
\equiv   & \sum_{i=1}^{m} a_ik^i + \frac{ d }{2} \,\,(\mathrm{mod}\,\,2), \notag
\end{align}where the last step holds by \eqref{interpro}.
Hence,
$\frac{\pi(k+d) +\pi(k)}{2} \,\,(\mathrm{mod}\,\,2) \equiv
\left\{
  \begin{array}{ll}
    \frac{ d }{2}, & {k=0,} \\
    \frac{ d }{2}+1, & {k=1.}
  \end{array}
\right.
$

Combining the above three parts in \eqref{hdk_oe}, for \eqref{hdk_oe} we can derive that
  $$H_d(k)/ \gcd(d,N) \,\,(\mathrm{mod}\,\,2) \equiv  \left\{
  \begin{array}{ll}
    0, & {k=0,} \\
    1, & {k=1.}
  \end{array}
\right.$$
Therefore, $H_d(k)/ \gcd(d,N)$ is a PP over $\mathbb{Z}_{2}$.

{\bf 2)}  Then we prove that $H_d'(k)/ \gcd(d,N) \not\equiv 0  \,\,(\mathrm{mod}\,\,2)$ for any $k \in \mathbb{Z}_{2}$.
Note that
$H_d'(k)/ \gcd(d,N)=  \frac{ \pi(k+d) \pi'(k+d) -\pi(k) \pi'(k) }{\gcd(d,N)}$.
When $k=1$,
since $\frac{d^2 }{\gcd(d,N)} \,\,(\mathrm{mod}\,\,2)
\equiv 0$, we have
\begin{align}
H_d'(1)/ \gcd(d,N)
= &
  \frac{1}{\gcd(d,N)} \left[
\sum\limits_{i=1}^{m}  a_i  \sum\limits_{l=1}^{m} l a_l(1+d)^{i+l-1}
-\sum\limits_{i=1}^{m}  a_i\sum\limits_{l=1}^{m} l a_l 1^{i+l-1} \right] \notag  \\
\equiv    &
 \sum_{i=1}^{m} \sum_{l=1}^{m}  a_i a_l l \binom{i+m-1}{1} 1^{i+m-2}
 \frac{d}{\gcd(d,N)}     \,\,(\mathrm{mod}\,\,2)   \notag  \\
\equiv  &
\sum_{i=1}^{m} \sum_{l=1}^{m}  a_i a_l  l(i+l-1) 1^{i+m-2} \,\,(\mathrm{mod}\,\,2)  \notag \\
=&
\sum_{\text{ odd } i=1}^{m}  a_i  \sum_{\text{ odd } l=1}^{m}  a_l    \,\,(\mathrm{mod}\,\,2)
=  1   \,\,(\mathrm{mod}\,\,2) \notag
\end{align}
where the last step holds by Lemma \ref{N=2^n}.
When $k=0$, due to $\pi(0)=0$, we have
\begin{align}
& H_d'(0)/ \gcd(d,N)=  \frac{\pi(d) \pi'(d) }{\gcd(d,N)}   \notag \\
= & \frac{d }{\gcd(d,N)}   (a_1 + a_2d  + a_{3}d^{2} + \dots +  a_m d^{m-1}  )
 (a_1 + 2a_2d  + 3a_{3}d^{2} + \dots + m a_m d^{m-1}  )    \notag \\
\equiv & a_1^{2}  \,\,(\mathrm{mod}\,\,2) \equiv 1  \,\,(\mathrm{mod}\,\,2). \notag
\end{align}
Therefore, $H_d'(k)/ \gcd(d,N) \not\equiv 0  \,\,(\mathrm{mod}\,\,2)$ for any $k \in \mathbb{Z}_{2}$.

{\bf Case 2.}
 $\gcd(d,2)=1$. Then $H_d(k)/\gcd(d,N)=H_d(k)$.

{\bf 1)} We first prove that $H_d(k)$ is a PP over $\mathbb{Z}_{2}$.
It follows from Lemma \ref{N=2^n} that
$\pi(1)= \sum\limits_{i=1}^{m}a_i  \equiv 1  \,\,(\mathrm{mod}\,\,2)$.
Hence,
$\pi(d)=\pi(2d_1+1) \equiv 1  \,\,(\mathrm{mod}\,\,2)$
and $\pi(1+d)=\pi(2(d_1+1)) \equiv 0  \,\,(\mathrm{mod}\,\,2)$.
Therefore, we can set
 \begin{equation}\label{pi_value}
\pi(1)=2l_1 +1, \pi(d)=2l_2 +1, \text{ and } \pi(1+d)=2l_3
 \end{equation}
with integers $l_i, i=1,2,3$.
Thus
$$H_d(1) =\frac{\pi^2(1+d) -\pi^2(1)-\pi^2(d) }{2}
= \frac{1}{2} [ (2l_3)^2 -(2l_1 +1)^2 - (2l_2 +1)^2     ]
\equiv 1  \,\,(\mathrm{mod}\,\,2).$$
Together with $H_d(0)=0$, it implies  that  $H_d(k)$ is a PP over $\mathbb{Z}_{2}$.

{\bf 2)} Now we prove that $H_d'(k) \not\equiv 0  \,\,(\mathrm{mod}\,\,2)$ for any $k \in \mathbb{Z}_{2}$.
We can derive that
$H_d'(k) =  \pi(k+d) \pi'(k+d)-\pi(k) \pi'(k) $.
Since $\pi$ is a PP over $\mathbb{Z}_{2^n}$,
from Lemma \ref{N=p^n} (i), we know that
$\pi'(k) (\mathrm{mod}\,\,2) \not\equiv 0$ for $k\in \mathbb{Z}_{N}$.
Combining with \eqref{pi_value}, it follows that
$$H_d'(0) = \pi(d) \pi'(d) -\pi(0) \pi'(0)  \equiv    \pi'(d)    \,\,(\mathrm{mod}\,\,2) \not\equiv 0  \,\,(\mathrm{mod}\,\,2),  $$
and
$$H_d'(1) = \pi(1+d) \pi'(1+d)- \pi(1) \pi'(1)   \equiv  - \pi'(1)    \,\,(\mathrm{mod}\,\,2) \not\equiv 0  \,\,(\mathrm{mod}\,\,2).  $$
That is to say,  $H_d'(k)    \not\equiv 0 \,\,(\mathrm{mod}\,\,2)$.

Together with Cases 1 and 2, we can see that  $H_d(k)/ \gcd(d,N)$ is a PP over $\mathbb{Z}_{N/ \gcd(d,N)}$.
It follows that $ \sum^{N-1}\limits_{k=0} \zeta_{N/ \gcd(d,N)}^{ H_d(k) / \gcd(d,N)  }=0$,
implying  $\theta(d)=0$ in \eqref{corred}.  The desired conclusion thus
follows.
\end{proof}

\section*{Appendix D: Lemma \ref{Apend_thm2_PP} }\label{appendix:D}
\begin{lem}\label{Apend_thm2_PP}
Let $N=p^nN_1$ with $n\geq 2$,  $N_1= q_1q_2\dots q_r$, different primes $p\geq 3$ and $q_i$ satisfying $p-1 \equiv 0 \, (\mathrm{mod}\,\, q_i -1)$ for all $1 \leq i \leq r$.
And let
 $\gcd(u,N)=1$, $d \in \mathbb{Z}_N$, $m \in \mathbb{Z}_N$ and
$$h_d(k)= (  (k+d)^p-k^p-d^p 	)m +udk . $$
Then
$h_d(k)/ \gcd(d,N)$ is a PP over $\mathbb{Z}_{N/ \gcd(d,N)}$.
\end{lem}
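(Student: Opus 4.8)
The plan is to mimic the proof of Proposition~\ref{prop_PP}. First I would check that $h_d(k)/\gcd(d,N)$ is even a legitimate polynomial over $\mathbb{Z}$: since $(k+d)^p-k^p-d^p=\sum_{i=1}^{p-1}\binom{p}{i}k^{p-i}d^i$ and each nonzero coefficient (in $k$) carries a factor $d^i$ with $i\ge 1$, while the term $udk$ carries a factor $d$, every coefficient of $h_d(k)$ is a multiple of $d$, hence of $g:=\gcd(d,N)$; so $H(k):=h_d(k)/g$ has integer coefficients and reduces modulo $N/g$. Writing $d=gd_0$ with $\gcd(d_0,N/g)=1$ and using $N/g=\frac{p^n}{\gcd(d,p^n)}\prod_{i=1}^{r}\frac{q_i}{\gcd(d,q_i)}$ together with Lemma~\ref{N=p^n}(ii), it then suffices to show that $H$ permutes each prime-power factor. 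The factors that collapse to $\mathbb{Z}_1$ --- namely $p^n/\gcd(d,p^n)$ when $v_p(d)\ge n$, or $q_i/\gcd(d,q_i)$ when $q_i\mid d$ --- need no argument, so I may assume $v_p(d)<n$ for the $p$-part and $q_i\nmid d$ for each $q_i$-part, which forces $\gcd(d_0,p)=1$ and $\gcd(d_0,q_i)=1$.

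For the $p$-part, the crucial point is that $H(k)\equiv ud_0k\pmod p$: indeed $\frac{(k+d)^p-k^p-d^p}{g}=d_0\sum_{i=1}^{p-1}\binom{p}{i}k^{p-i}d^{i-1}$, and every summand here is a multiple of $p$ (the $i=1$ term is $pk^{p-1}$, and $p\mid\binom{p}{i}$ for $2\le i\le p-1$), so the polynomial part of $H$ vanishes modulo $p$ and only $\frac{udk}{g}=ud_0k$ remains. As $\gcd(ud_0,p)=1$, this is a linear PP over $\mathbb{Z}_p$, and an entirely parallel computation gives $H'(k)\equiv ud_0\not\equiv 0\pmod p$; Lemma~\ref{N=p^n}(i) (applied with exponent $n-v_p(d)$, the case exponent $=1$ being immediate) then promotes this to a permutation of $\mathbb{Z}_{p^n/\gcd(d,p^n)}$.

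For each $q_i$-part with $q_i\nmid d$ (hence $\gcd(g,q_i)=1$), I would use the hypothesis $q_i-1\mid p-1$, which gives $x^p\equiv x\pmod{q_i}$ for all $x\in\mathbb{Z}_{q_i}$; therefore $(k+d)^p-k^p-d^p\equiv (k+d)-k-d=0\pmod{q_i}$ for every $k$, and dividing by $g$ (a unit modulo $q_i$) the polynomial part of $H$ again dies, leaving $H(k)\equiv ud_0k\pmod{q_i}$, a linear PP since $\gcd(ud_0,q_i)=1$. Assembling the $p$-part and all the $q_i$-parts via Lemma~\ref{N=p^n}(ii) yields that $H=h_d(k)/\gcd(d,N)$ permutes $\mathbb{Z}_{N/\gcd(d,N)}$, which is the assertion.

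The computations are short; the one place that demands care is the accounting after dividing by $\gcd(d,N)$ --- one must verify that the ``polynomial part'' $\bigl((k+d)^p-k^p-d^p\bigr)m/\gcd(d,N)$ still collapses to $0$ modulo $p$ and modulo each $q_i$. This rests on the double divisibility of $(k+d)^p-k^p-d^p$ --- by $p$, from the binomial coefficients $\binom{p}{i}$, and by $d$, from the powers $d^i$ --- together with the observation that $\gcd(d,N)$ is coprime to whichever prime ($p$ or $q_i$) we have restricted attention to, once the trivial factors have been discarded. Beyond that, the only case distinction is isolating those degenerate factors $p^n/\gcd(d,p^n)$ or $q_i/\gcd(d,q_i)$ that equal $1$.
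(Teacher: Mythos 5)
Your proposal is correct and follows essentially the same route as the paper's own proof: reduce via Lemma~\ref{N=p^n}(ii) to the coprime factors of $N/\gcd(d,N)$, kill the polynomial part modulo $p$ using $p\mid\binom{p}{i}$ and modulo each $q_i$ using $x^p\equiv x\pmod{q_i}$ so that only the linear term $ud_0k$ survives, and lift the $p$-part with the derivative criterion of Lemma~\ref{N=p^n}(i). The only (harmless) difference is bookkeeping: you divide by $\gcd(d,N)$ directly via $d=gd_0$, whereas the paper inverts $\gcd(d,N)$ modulo $q_i$ through an auxiliary $t_i$; your explicit treatment of the degenerate factors and of the exponent-one case of Lemma~\ref{N=p^n}(i) is, if anything, slightly more careful.
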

\begin{proof}
According to Lemma \ref{N=p^n} (ii), $H_d(k)/ \gcd(d,N)$ is a PP over $\mathbb{Z}_{N}$ if and only if  $H_d(k)/ \gcd(d,N)$ is a PP over $\mathbb{Z}_{\frac{p^n}{\gcd(d,p^n)}}$ and $H_d(k)/ \gcd(d,N)$ is a PP over $\mathbb{Z}_{\frac{q_i}{\gcd(d,q_i)}}, i=1,2,\dots,r$.

{\bf Case 1.} We prove that $H_d(k)/ \gcd(d,N)$ is a PP over $\mathbb{Z}_{\frac{p^n}{\gcd(d,p^n)}}$.
From Lemma \ref{N=p^n} (i),
it suffices to investigate that $H_d(k)/ \gcd(d,N)$ is a PP over $\mathbb{Z}_{p}$ and its derivative $H_d'(k)/ \gcd(d,N) \not\equiv 0  \,\,(\mathrm{mod}\,\,p)$ for any $k \in \mathbb{Z}_{\frac{p^n}{\gcd(d,p^n)}}$.

We first prove that $H_d(k)/ \gcd(d,N)$ is a PP over $\mathbb{Z}_{p}$.
In fact, we have
{\small
\begin{align*}
\frac{ H_d(k) }{\gcd(d,N)} =
\frac{d }{\gcd(d,N)} \left(  m \sum\limits_{i=1}^{p-1} \binom{p}{i}k^{p-i} d^{i-1} +uk    \right)      \equiv \frac{d }{\gcd(d,N)} uk     \,\,(\mathrm{mod}\,\,p),
\end{align*}
}where the last step follows from $p \mid \binom{p}{i}, 1\leq i<p$.
Thus we can see that
$\frac{ H_d(k) }{\gcd(d,N)}$
 is a PP over $\mathbb{Z}_{p}$ since  $\frac{d }{\gcd(d,N)}$ and $u$ are coprime to $p$.
Then we prove that $H_d'(k)/ \gcd(d,N) \not\equiv 0  \,\,(\mathrm{mod}\,\,p)$ for any $k \in \mathbb{Z}_{\frac{p^n}{\gcd(d,p^n)}}$.
We can deduce that
{\small
$$H_d'(k)/ \gcd(d,N) = mp (\frac{  (k+d)^{p-1} -k^{p-1}  }{\gcd(d,N)} ) + u\frac{d }{\gcd(d,N)}
\equiv  u\frac{d }{\gcd(d,N)} \,\,(\mathrm{mod}\,\,p), $$
}
which is nonzero from $\gcd(u\frac{d }{\gcd(d,N)},p)=1$.

{\bf Case 2.} We prove that $H_d(k)/ \gcd(d,N)$ is a PP over $\mathbb{Z}_{\frac{q_i}{\gcd(d,q_i)}}, i=1,2,\dots,r$.
For each $i$,
it is obvious for the case of $\gcd(d,q_i)=q_i$,
then
it suffices to consider the case of  $\gcd(d,q_i)=1$.
Hence we have $\gcd( \gcd(d,N),q_i)=1$.
Thus for each $i$, there exist a positive integer $t_i$ with  $\gcd(t_i,q_i)=1$ such that
$\frac{1 }{\gcd(d,N)} \equiv  t_i \,\,(\mathrm{mod}\,\,q_i)$.

Since $p-1 \equiv 0 \, (\mathrm{mod}\,\, q_i -1)$ for all $1 \leq i \leq r$,
we have $x^p \equiv x \, (\mathrm{mod}\,\, q_i)$ for $x\in \mathbb{Z}_{q_i}$.
Then it follows that
$$
H_d(k)/ \gcd(d,N)
 \equiv   t_iH_d(k)
= t_i [((k+d)^p-d^p-k^p  )m+ udk]
 \equiv   t_iudk   \,\,(\mathrm{mod}\,\,q_i).
$$
Thus  $\frac{H_d(k)}{\gcd(d,N)}$   is a PP over $\mathbb{Z}_{q_i}$, as $t_i$,
$u $ and $d$ are all coprime to $q_i$.
Moreover, for any $k$,  we can deduce that
$$H_d'(k)/ \gcd(d,N)\equiv   t_iH_d'(k) = mpt_i [(d+k)^{p-1} - k^{p-1} ] + t_iud
\equiv  t_iud \,\,(\mathrm{mod}\,\,q_i), $$
which is nonzero from $\gcd(t_iud,p)=1$.

Combining the above two cases, we get $H_d(k)/ \gcd(d,N)$ is a PP over $\mathbb{Z}_{N/ \gcd(d,N)}$. The desired conclusion thus follows.
\end{proof}

\section*{Appendix E:  Proof of Theorem \ref{mainresult2} (ii)}\label{appendix:E}
\begin{proof}
According to Lemma \ref{CA_ZAC}, it suffices to show that
$ \widehat{y}(t)\widehat{y}^*(t)$ in \eqref{xx*first} is a constant for any $0\leq t<N$.
By substituting the PP $\pi(x)=a_mx^m + a_{m-1}x^{m-1} + \dots + a_2x^2 + a_1x + a_0$
 into $M(d)$ in \eqref{MMMvv}, it  yields
$$M(d)=\sum\limits_{k=0}^{N-1}\xi_N^{- udk-(\pi (k+d)-\pi(k)-\pi(d) +a_0	)m }
=\sum\limits_{k=0}^{N-1}\xi_N^{-h_d(k)}=\sum_{k=0}^{N-1}\xi_{N/ \gcd(d,N)}^{- h_d(k)/ \gcd(d,N)} ,$$
where $\gcd(u,N)=1$, $d \in \mathbb{Z}_N$, $m \in \mathbb{Z}_N$ and
$$h_d(k)=   \sum\limits_{i=2}^{m} a_i( (k+d)^i-k^i-d^i 	)m +udk . $$
We now discuss $M(d)$ in two cases according to the values of $d$.
When $d=0$, it is evident that $M(d)=N$.
When $d\neq 0$, we shall prove that $h_d(k)/ \gcd(d,N)$ is a PP over $\mathbb{Z}_{N/ \gcd(d,N)}$.

According to Lemma \ref{N=p^n} (i), $\frac{H_d(k)}{\gcd(d,N)}$ is a PP over $\mathbb{Z}_{2^n}$ if and only if  $\frac{H_d(k)}{\gcd(d,N)}$ is a PP over $\mathbb{Z}_{2}$ and its derivative $H_d'(k)/ \gcd(d,N) \not\equiv 0  \,\,(\mathrm{mod}\,\,2)$ for any $k \in \mathbb{Z}_{2}$.
In the following, the discussion is divided into two cases $2\,|\, d$
and $\gcd(2,d)=1$.

{\bf Case 1.}   $2\,|\, d$.
{\bf 1)} We first prove that $H_d(k)/ \gcd(d,N)$ is a PP over $\mathbb{Z}_{2}$.
Since $\frac{d^i}{\gcd(d,N)} \equiv 0  \,\,(\mathrm{mod}\,\,2) $ with $i \geq 2$,
$\frac{d}{\gcd(d,N)} \equiv 1  \,\,(\mathrm{mod}\,\,2) $ and $\gcd(u,2)=1$, it follows
\begin{align*}
  H_d(k)/ \gcd(d,N)  &=
u\frac{d}{\gcd(d,N)}k +
\sum\limits_{i=2}^{m} a_i  \frac{ (k+d)^i-k^i-d^i}{\gcd(d,N)}  m
 \notag \\
  &  \equiv u\frac{d}{\gcd(d,N)}k +
\sum\limits_{i=2}^{m} a_i  \frac{ \binom{i}{1}dk^{i-1}}{\gcd(d,N)}  m  \,\,(\mathrm{mod}\,\,2)
 \notag \\
  &  \equiv k +
\sum\limits_{i=2}^{m} a_i m  i k^{i-1}  m   \,\,(\mathrm{mod}\,\,2).
\end{align*}
Moreover, since
\begin{equation}\label{ami}
  \sum\limits_{i=2}^{m} a_i   m i
\equiv  \sum\limits_{\text{ odd } i=2}^{m} a_i m  \,\,(\mathrm{mod}\,\,2)
\equiv \left\{
  \begin{array}{ll}
     0\,\,(\mathrm{mod}\,\,2), & \text{ even } m, \\
    \sum\limits_{\text{ odd } i=2}^{m} a_i \equiv 0\,\,(\mathrm{mod}\,\,2) , & \text{ odd } m,
  \end{array}
\right.
\end{equation}
where the last step holds by Lemma \ref{N=2^n},
we can derive $\frac{H_d(0)}{\gcd(d,N)}=0$ and
$\frac{H_d(1)}{\gcd(d,N)} \equiv  1 \,\,(\mathrm{mod}\,\,2)$.
It implies that $H_d(k)/ \gcd(d,N)$ is a PP over $\mathbb{Z}_{2}$.

{\bf 2)}  Then we prove that $H_d'(k)/ \gcd(d,N) \not\equiv 0  \,\,(\mathrm{mod}\,\,2)$ for any $k \in \mathbb{Z}_{2}$.
Note that
$H_d'(k)/ \gcd(d,N)=  u\frac{d}{\gcd(d,N)} +
\sum\limits_{i=2}^{m} a_i \,  i\frac{ (k+d)^{i-1}-k^{i-1}}{\gcd(d,N)}  m $.
Similarly as {\bf 1)}, it derives that
$H_d'(k)/ \gcd(d,N)\equiv 1  \,\,(\mathrm{mod}\,\,2) $ for $k \in \mathbb{Z}_{2}$.

{\bf Case 2.}
 $\gcd(d,2)=1$. Then $H_d(k)/\gcd(d,N)=H_d(k)$.

{\bf 1)} We first prove that $H_d(k)$ is a PP over $\mathbb{Z}_{2}$.
Since $1+d$ is even and $d$ is odd, we have
$$H_d(1) = ud + \sum\limits_{i=2}^{m} a_i  \frac{ (1+d)^i-1-d^i}{\gcd(d,N)}  m
\equiv 1  \,\,(\mathrm{mod}\,\,2).$$
Together with $H_d(0)=0$, it implies that  $H_d(k)$ is a PP over $\mathbb{Z}_{2}$.

{\bf 2)} Now we prove that $H_d'(k) \not\equiv 0  \,\,(\mathrm{mod}\,\,2)$ for any $k \in \mathbb{Z}_{2}$.
We can derive that
$H_d'(k) =  ud + \sum\limits_{i=2}^{m} a_i \, i m[ (k+d)^{i-1}-k^{i-1}]  $.
From \eqref{ami}, it follows that
$$H_d'(0) = ud +\sum\limits_{i=2}^{m} a_i \, i m  d^{i-1} \equiv  1   \,\,(\mathrm{mod}\,\,2).$$
Since $1+d$ is even and \eqref{ami}, we get
$$H_d'(1) = ud + \sum\limits_{i=2}^{m} a_i \, i m[ (1+d)^{i-1}-1]
\equiv ud - \sum\limits_{i=2}^{m} a_i \, i m \,\,(\mathrm{mod}\,\,2) \equiv  1   \,\,(\mathrm{mod}\,\,2).  $$
That is $H_d'(k)    \not\equiv 0 \,\,(\mathrm{mod}\,\,2)$ for any $k \in \mathbb{Z}_{2}$.

Together with Cases 1 and 2, when $d\neq 0$, we can see that  $H_d(k)/ \gcd(d,N)$ is a PP over $\mathbb{Z}_{N/ \gcd(d,N)}$.
It implies that $ M(d)=\sum^{N-1}\limits_{k=0} \xi_{N/ \gcd(d,N)}^{ H_d(k) / \gcd(d,N)  }=0$
when $d\neq 0$.
According to \eqref{xx*first}  and $ M(0)=N$, it follows that
\begin{align*}
	 \widehat{s}(m)\widehat{s}^*(m)
	&{=}\frac{\xi_N^{a_0m}}{N} \, \sum_{d=0}^{N-1}\xi_N^{-\pi^{-1} (d)m
-ud( d+ (N  \, \mathrm{mod}\,\,2)+2l)/2}
	 \, M(d) \notag\\
	&{=}\xi_N^{a_0 m}\xi_N^{- \pi^{-1} (0)m}  \\
	&{=}1, \notag
\end{align*}
where the last step holds by $\pi^{-1} (0)=a_0 $.
Hence, the proof is finished.
\end{proof}

\section*{Appendix F:  Proof of Proposition \ref{pro2}}\label{appendix:F}
\begin{proof}
{\bf Case 1.} Assume equation \eqref{ex_F_case1} holds.
The proofs for odd and even
$N$ are analogous.
Thus, we provide only the detailed proof for the case of odd $N$.
Since $\gcd(2,p^n)=1$,
we obtain that
\begin{equation}\label{tongyu}
	F(k) \equiv 0
\,\,(\mathrm{mod}\,\,p^n) \text{ for any }  k\in \mathbb{Z}_{p^n} \text{ with } n\geq 2.
\end{equation}

On the one hand, it follows
$	F(k) \equiv 0 	\,\,(\mathrm{mod}\,\,p).$
Let $k=k_0+k_1p$ with $k_0\in \mathbb{Z}_{p}$ and $k_1\in \mathbb{Z}_{p^{n-1}}$.
Then $k^p \equiv k_0^p \equiv k_0 \equiv k	\,\,(\mathrm{mod}\,\,p)$.
Moreover since $f_2 \equiv 0 	\,\,(\mathrm{mod}\,\,p)$, \eqref{tongyu} is reduced to
\begin{equation}\label{BBBNNN222}
	F(k) \equiv b_2k^2 + b_1k + b_0  \equiv 0 \,\,(\mathrm{mod}\,\,p) \text{ for any } k\in \mathbb{Z}_{p},
\end{equation}
where $	b_2 = u_1 (a+1)^2 -su_2 f_1^2$,
$	b_1 = u_1 (a+1)(2b+1) -2v -su_2(2f_1d +f_1)$  and $	b_0 =u_1 b(b+1)-su_2(d^2+d ) -r$.
Similar as Proposition \ref{prop1}, equation  \eqref{BBBNNN222} with $k=0,1,2$ can be rewritten as $B'(b_2,b_1,b_0)^T \equiv 0 \,\,(\mathrm{mod}\,\,p) $,
where the matrix $B'$ is a Vandermonde matrix with $|B'|=\prod\limits_{0\leq i<j \leq 2}(j-i)$.
It implies that $\gcd(|B'|,p)=1$ by prime $p\geq 3$.
Based on Lemma \ref{solution_lem} and $\gcd(|B'|,p)=1$, \eqref{BBBNNN222} has a unique set of incongruent solutions
$b_i  \equiv 0  \,\,(\mathrm{mod}\,\, p), 0\leq i\leq 2$.
Thus
\begin{equation}\label{FFF1}
	b_2 = u_1 (a+1)^2 -su_2 f_1^2 \equiv 0  \,(\mathrm{mod}\,\, p) .
\end{equation}

On the other hand, from \eqref{tongyu}   we have
$	F(k) \equiv 0 \,\,(\mathrm{mod}\,\,p^2).$
 Since $f_2 \equiv 0 	\,\,(\mathrm{mod}\,\,p)$,
it follows that
\begin{equation*}
	\begin{array}{cll}
	F(k) \equiv & u_1 ( k^{2p} + 2a k^{p+1} +  k^{p} + a^2 k^{2} + a k )+u_1 ( 2bk^{p} + 2ab k +  b^2 + b ) -2vk  \\
&	  - su_2 [  2f_2f_1 k^3 +(f_1^2+2f_2d +f_2 )k^2 +(2f_1d +f_1)k +d^2+d      ] 	
  -r \equiv 0 \,\,(\mathrm{mod}\,\,p^2).
\end{array}
\end{equation*}
Since $F(0) \equiv 0 \,\,(\mathrm{mod}\,\,p^2)$ and  $F(p) \equiv 0 \,\,(\mathrm{mod}\,\,p^2)$,
it follows that
$$	F(p)-F(0) = u_1ap(2b+1) -2vp-su_2(2f_1d+f_1)p  \equiv 0 \,\,(\mathrm{mod}\,\,p^2),$$
  implying
\begin{equation}\label{FF00}
u_1a(2b+1)-2v-su_2f_1(2d+1) \equiv 0 \,\,(\mathrm{mod}\,\,p).
\end{equation}
 Moreover, since $(1+p)^{2p} \equiv 1 \,\,(\mathrm{mod}\,\,p^2)$ and $(1+p)^{p+1} \equiv 1+p \,\,(\mathrm{mod}\,\,p^2)$,
 we can get that
$$
\begin{array}{cll}
&F(1+p)-F(1)\\
\equiv \!\!& u_1ap(2a +2b+3)-2vp-su_2[6f_2f_1p + 2p(f_1^2 + 2f_2d +f_2) + (2f_1d+f_1)p ] \\
\equiv  \,\,& 0\,\, (\mathrm{mod}\,\,p^2).
\end{array}
$$
Due to $p\,|\, f_2$, it implies that
$ u_1a(2a+2b+3)-2v - su_2f_1(2f_1 +2d+1) \equiv 0 \,\,(\mathrm{mod}\,\,p).$
Together with \eqref{FF00}, since $\gcd(2,p)=1$, we  see that
\begin{equation}\label{FFF2}
 u_1a(a+1) - su_2f_1^2 \equiv 0 \,\,(\mathrm{mod}\,\,p).
\end{equation}

Consequently, combining \eqref{FFF1}  and  \eqref{FFF2}, it follows $p \,|\, u_1a(a+1)$,
which contradicts that $(u_1,p) =1$ and $a \not\equiv 0,-1\,\,(\mathrm{mod}\,\,p). $
Therefore, the assumption does not hold.

{\bf Case 2.} Assume equation \eqref{ex_F_case2} holds.
Then similarly as Case 1, we can prove that there is a contradiction.

Therefore, $\s \circ \pi$ are inequivalent to
 QPPs and their inverses interleaved ZC sequences.
\end{proof}

\section*{Appendix G:  Proof of Lemma \ref{one_parsum}}\label{appendix:G}
\begin{proof}
Let $\beta=\frac{b}{q}$ with $\gcd(b, q)=1$, and let $k'=\frac{k}{g}$. Then  $\gcd(k'a, q)=1$.
It implies that
$$
 \sum\limits_{x=0}^{Q-1}  \min(P, \frac{1}{2\| k \alpha x + \beta \|})
 = \sum\limits_{x=0}^{Q-1}  \min(P, \frac{1}{2\| \frac{k' a  }{q} g x +\frac{b}{q} \|})
 < \sum\limits_{x=0}^{g Q-1}  \min(P, \frac{1}{2\| \frac{k'a x +b}{q}  \|}).
$$
Let   $Q'= 1+ \lfloor \frac{gQ}{q}\rfloor$.
 By replacing $x$ by $qx_1+x_2$, it follows that
\begin{align*}
\sum\limits_{x=0}^{ gQ-1}  \min(P, \frac{1}{2\| \frac{k'a x +b}{q}  \|})
& \leq  \sum\limits_{x_1=0}^{Q'-1}  \sum\limits_{x_2=0}^{q-1}  \min(P, \frac{1}{2\| \frac{k'a x_2 +b}{q} + k'a x_1  \|})  \nonumber \\
&    \leq( 1+  \frac{gQ}{q}) \sum\limits_{x_2=0}^{q-1}  \min(P, \frac{1}{2\| \frac{k'a x_2 +b}{q} \|}).
\end{align*}
Due to $\gcd(k'a, q)=1$, $k'a x_2 +b$ runs through a complete residue system modulo $q$,
when $x_2$ runs through a complete residue system modulo $q$. Thus,
\begin{align*}
\sum\limits_{x_2=0}^{q-1}  \min(P, \frac{1}{2\| \frac{k'a x_2 +b}{q} \|})
&= \sum\limits_{x=0}^{q-1}  \min(P, \frac{1}{2\| \frac{x}{q}  \|})     \nonumber \\
&= P+ \sum\limits_{x=1}^{q-1}  \min(P, \frac{1}{2\| \frac{x}{q} \|})    \nonumber \\
&= P+ 2 \sum_{1 \leq x<q/2}  \min(P, \frac{q}{2x })  +1  \nonumber \\
&= P+1 + 2 \sum_{1 \leq x \leq \frac{q}{2P}} P +  q \sum_{\frac{q}{2P}+1 \leq x<\frac{q}{2} }  \frac{1}{x}      \nonumber \\
&\leq  P+1 + 2 \frac{q}{2P}P +  q  \int_{\frac{q}{2P}}^{ \frac{q}{2}}  \frac{1}{x} dx     \nonumber \\
&\leq  P+1 + q +  q  \ln P    \nonumber \\
& <2(P+q)\frac{P^\varepsilon}{\varepsilon},
\end{align*}
where the last step follows from $1 < \ln P <  \frac{P^\varepsilon}{\varepsilon}$ by $P \geq 3$.

Therefore, we derive that
\begin{equation*}
\begin{array}{c}
\sum\limits_{x=0}^{Q-1}  \min(P, \frac{1}{2\| k \alpha x + \beta \|})
 \leq
  ( 1+  \frac{gQ}{q}) (P+q)\frac{2P^\varepsilon}{\varepsilon}.
 \end{array}
\end{equation*}
\end{proof}
\nn

	
\end{document}